\documentclass[a4paper]{amsart}

%%%%%%%% BIBLIOGRAPHIE %%%%%%%%
%\usepackage[backend=biber]{biblatex} %Imports biblatex package  //version biber et pas biblatex, ne marche pas direct sur texmaker.
%\addbibresource{biblio.bib} %Import the bibliography file
%---------Package---------------%
\usepackage[utf8]{inputenc}
\usepackage{xcolor}
\usepackage{amsmath}
\usepackage{amssymb}
\usepackage{amsfonts}
\usepackage{verbatim}
\usepackage{amsthm}
\usepackage{geometry}
\geometry{hmargin=3cm,vmargin=2cm}
\usepackage{hyperref} %Références cliquables
\usepackage{cleveref}
\usepackage{csquotes}

\PassOptionsToPackage{unicode}{hyperref}
\PassOptionsToPackage{naturalnames}{hyperref}

%%%%%---Pour les dessins---%%%%%
\usepackage{tikz}
\usetikzlibrary{shapes,positioning}
\usetikzlibrary{calc}
\usetikzlibrary{matrix,arrows,patterns.meta}
\usepackage{graphicx}
\tikzset{c/.style={every coordinate/.try}}
%--------Theorems-------------%
%\theoremstyle{plain}
\newtheorem{thm}{Theorem}[section]
\newtheorem{proposition}[thm]{Proposition}
\newtheorem{coro}[thm]{Corollary}
\newtheorem{lemma}[thm]{Lemma}
\newtheorem{expl}[thm]{Example}
\theoremstyle{definition}
\newtheorem{definition}[thm]{Definition}

\theoremstyle{remark}
\newtheorem{remark}[thm]{Remark}

%------------Commands-----------%

\newcommand{\calP}{\mathcal{P}}
\newcommand{\calH}{\mathcal{H}}
\newcommand{\calL}{\mathcal{L}}
\newcommand{\calC}{\mathcal{C}}
\newcommand{\calD}{\mathcal{D}}
\newcommand{\calO}{\mathcal{O}}
\newcommand{\calR}{\mathcal{R}}
\newcommand{\calT}{\mathcal{T}}
\newcommand{\calX}{\mathcal{X}}
\newcommand{\fqm}{\mathbb{F}_{q^m}}
\newcommand{\fq}{\mathbb{F}_{q}}
\newcommand{\fqo}{\mathbb{F}_{q_0^2}}
\newcommand{\F}{\mathbb{F}}

\newcommand{\PP}{\mathbb{P}}

\newcommand{\Tr}[1]{\operatorname{Tr}_{\mathbb{F}_{q^m}/\fq}\left(#1\right)}
\newcommand{\set}[1]{\left\{#1\right\}}

\newcommand{\Span}[1]{\operatorname{Span}\left(#1\right)}
\newcommand{\LT}[1]{\operatorname{LT}\left(#1\right)}
\newcommand{\Supp}{\operatorname{Supp}}
\newcommand{\Div}{\operatorname{Div}}

\newcommand{\GRS}{\operatorname{\mathsf{GRS}}}

\newcommand{\degab}[1]{\deg_{a,b}\left(#1\right)}

%%----Commentaires----%%%

%%%%%-------- Keywords--------------

%\makeatletter
%\newcommand{\subjclass}[2][2020]{%
%	\let\@oldtitle\@title%
%	\gdef\@title{\@oldtitle\footnotetext{#1 \emph{Mathematics subject classification.} #2.}}%
%}
%\newcommand{\keywords}[1]{%
%	\let\@@oldtitle\@title%
%	\gdef\@title{\@@oldtitle\footnotetext{\emph{Keywords.} #1.}}%
%}
%\makeatother

%\providecommand{\keywords}[1]{\textbf{Keywords---} #1}

%-----METADONNES---------%
\title{Goppa--like AG codes from $C_{a,b}$ curves and their behaviour under squaring their dual}

\author{Sabira El Khalfaoui}
\address{Univ Rennes, IRMAR - UMR 6625, F-35000 Rennes, France}
\email{sabira.elkhalfaoui@univ-rennes1.fr}

\author{Mathieu Lhotel}
\address{Laboratoire de Mathématiques de Besançon, UMR 6623 CNRS Université de Bourgogne Franche-Comté, France}
\email{mathieu.lhotel@univ-fcomte.fr}

\author{Jade Nardi}
\address{Univ Rennes, CNRS, IRMAR - UMR 6625, F-35000 Rennes, France}
\email{jade.nardi@univ-rennes1.fr}

\date{\today}
\keywords{AG codes, Subfield subcodes, $C_{a,b}$ curves, Goppa--like AG codes, Trace codes, Schur product}
\subjclass{11T71, 14G50, 14H05, 11G20}

%---------------START--------------------------------

\begin{document}
\maketitle

\begin{abstract}
In this paper, we introduce a family of codes that can be used in a McEliece cryptosystem, called \emph{Goppa--like AG codes}. These codes generalize classical Goppa codes and can be constructed from any curve of genus $\mathfrak{g} \geq 0$. Focusing on codes from $C_{a,b}$ curves, we study the behaviour of the dimension of the square of their dual to determine their resistance to distinguisher attacks similar to the one for alternant and Goppa codes developed by Mora and Tillich \cite{MT21}. We also propose numerical experiments to measure how sharp is our bound.
\end{abstract}

\section*{Introduction}

\subsection*{McEliece crytosystem}
McEliece cryptosystem is one of the last code--based candidates for standardization of post--quantum cryptrographic to the NIST competition since the third round. It guarantees the smallest ciphertexts among all the candidates, but it suffers from the largest public keys. Over the past forty years, there were many attempts in replacing the family of binary Goppa codes by other structured families of codes in order to reduce the key size.

The security of McEliece cryptosystem is based on two assumptions: $(i)$ in average, it is hard to decode $t$ errors for codes having the same parameters as the codes used as keys and $(ii)$ it is difficult to distinguish codes used as public keys from arbitrary ones. When proposing another family of codes, one must ensure that both of the hardness assumptions are still valid.
Algebraic geometry (AG) codes appear to be good candidates for McEliece cryptosystem, since they are evaluation codes built from algebraic curves (GRS codes are a typical example of AG codes on a genus zero curve). Moreover, they also come with an efficient decoding algorithm (see the survey of H\o hold and Pellikaan \cite{HP95}).
In 1996, Janwa and Moreno \cite{JM96} proposed to use AG codes, concatenation or subfield subcode of these codes in McEliece cryptosystem. As for concatenated ones, they were broken by Sendrier \cite{Sen94}. For AG codes, Faure and Minder proposed an attack in \cite{FM08,Min07,Fau09} for codes of genus at most $2$. The scheme based on AG codes have been completely broken by Couvreur, Marquez-Corbella and Pellikaan \cite{CMR17}, who proposed a filtration--based attack on AG codes for any genus, enabling decoding just by handling the public key and without knowledge of the curve and/or the divisor. However, the authors underlined that subfield subcodes of AG codes (SSAG codes in short) are resistant to this filtration attack. Moreover, some of these codes have a good designed minimum distance, such as Cartier codes \cite{Cou14}. This resistance to structural attacks, as well as their good parameters motivated several works on SSAG codes. Barelli \cite{B18} studied the security of quasi--cyclic SSAG codes designed from cyclic covers of the projective line and of plane curves. Recently, Zhao and Chen \cite{ZC22} analysed the parameters and the decoding performance of one--point elliptic subfield subcodes, showing that in the binary case, decoding results on these codes outperform the similar rate in the case of BCH codes. Also, the authors in \cite{PJ14,EKN21} focus on one--point Hermitian codes, and manage to compute the exact dimension of their subfield subcodes, in order to get a better estimate for the key size when using them in McEliece cryptosystem.

\subsection*{Subfield subcodes of AG codes}
Let $\fqm$ be a finite extension of $\fq$. Given an AG code $\calC :=C_{\calL}(\calX,\calP,G) \subseteq \fqm^n$, its subfield subcode over $\fq$ is defined by 
$$C_{\calL}(\calX,\calP,G)|_{\fq} := \calC \cap \fq^n.$$  
In this paper, we introduce a specific class of SSAG codes, namely Goppa--like AG codes. The idea of this construction is to introduce  some randomness into the family of SSAG codes drawn as private key by mimicking the role of the Goppa polynomial in the case of classical Goppa codes. Given an effective divisor $D\in \Div(\calX)$ on the curve, we will consider SSAG codes associated to divisors of the form $D+(g)$, for a large family of functions $g \in \fqm(\calX)$ on the curve. A Goppa--like AG code is then defined as the subfield subcode over $\fq$ of $C_{\calL}(\calX,P,D+(g))^{\perp}$. In the genus zero case, this construction gives codes which are Hamming equivalent to classical Goppa codes. With a good choice of the curve and divisor, it is possible to encode and decode these codes in a timely manner; this generalization has the potential to significantly improve on the original McEliece proposal. This encourages the present study of structural attacks on Goppa--like AG codes. 

\subsection*{Distinguisher attack}
Our starting point is the paper by Mora and Tillich \cite{MT21}, which benefited from the trace structure of the dual code of a subfield subcode to display a distinguisher for high rate Goppa codes. Their techniques rely on two features of GRS codes. First, their behaviour with respect to the Schur product is well known. From this, the authors gave a sharp upper bound for the dimension of the square of the dual of alternant codes. In the particular case of Goppa codes, where the multiplicator has the form $\mathbf{y}=(g(x_1)^{-1},g(x_2)^{-1},\dots,g(x_n)^{-1})$ for some degree $r$ polynomial $g$, they managed to get an even sharper upper bound by performing Euclidean division by powers of $g$.
As Goppa--like AG codes extend Goppa codes, it is natural to wonder if one can derive a structural attack on these codes from Mora and Tillich's attack. The genus of the curve plays a significant role in the parameters of AG codes. The greater the genus, the more $\F_q$--points the curve $\calX$ may have and so the longer the code can be. But the parameters $[n,k,d]$ of an AG code satisfy $n-\mathfrak{g}+1 \leq k+d \leq n+1$, which means that an AG code is $\mathfrak{g}$--far from optimality. Also, the correction capacity naturally suffers from a big genus: the naive correction algorithm can correct up to $\frac{d-1-\mathfrak{g}}{2}$ errors. Only refined techniques, based on error locating arrays \cite{CP20}, manage to remove the term related to the genus. Therefore, caution should be exercised when it comes to the impact of the genus on the properties of AG codes.
Even if Mora and Tillich's attack does not threaten the Goppa codes used in the NIST competition, SSAG--based propositions may be vulnerable to a similar structural attack. 

\medskip

\subsection*{Contributions of this paper}
In the present work, we are interested in the case of Goppa--like one--point AG codes from $\calC_{a,b}$ curves, a class of curves introduced in \cite{Miu93}. These curves have been extensively studied and they are especially interesting to design AG codes. As we know explicit monomial basis of the AG code associated to any multiple of its unique point at infinity, it allows us to efficiently encode one--point codes \cite{BRS21}. Furthermore, they remain quite general: for examples, elliptic curves, Kummer or Artin-Schreier curves are $C_{a,b}$ curves. It is also natural to wonder how the genus affects the distinguisher: in particular, we give a sufficient condition of the minimal degree our divisor has to satisfy in order to mount the distinguisher. This bound is increasing with the genus of the curve, and coincides with the one given in \cite{MT21} in the case of classical Goppa codes. Consequently, when the genus gets higher, we are not able to distinguish codes associated to low degree divisors. In the worst cases, we might not be able to distinguish anything.
AG codes, as generalizations of GRS codes, have exactly the same behaviour with respect to the Schur product. Moreover, some well--chosen AG codes are defined by the evaluation of \textit{multivariate} polynomials. In this case, we prove that performing division algorithms via Gr\"obner bases enable us to estimate the dimension of the square of the dual of Goppa--like AG codes. Even better, computations tend to show that the bound we obtain on the dimension is sharp whenever the one--point Goppa--like code seems random (\emph{i.e} the function $g$ is sufficiently random). The counterpart is, as it was the case in \cite{MT21} for classical Goppa codes, that we can only distinguish high rate codes. More precisely, our maximum distinguishable rate is roughly the same as in \cite{MT21}. As comparison of their results and ours is carried out in the case of Goppa--like codes defined over an elliptic curve. However, as previously discussed, if the genus becomes too large, the distinguisher may become ineffective: in particular, we show that Goppa--like AG codes constructed from the Hermitian curve resist our distinguisher, which is encouraging if we intend to base a McEliece's like cryptosystem on such class of codes.
However, as it is already the case for the distinguisher of classical Goppa codes \cite{MT21}, it seems complicated to turn this distinguisher into an efficient structural attack. But, having an algebraic explanation of the structure of the square of the dual of one--point Goppa--like codes is still desirable if we want to perform an attack using square code considerations. It also provides a rigorous method for selecting initial parameters that ensure the security of our cryptosystem.

\subsection*{Application of Goppa-like AG codes to McEliece cryptosystem}

For a McEliece cryptosystem constructed on a $[n,k]_q$ linear code $\calC$ with an error capability $t$ , the public key has size $k(n-k)\lceil \log_2(q) \rceil$, whenever the generator matrix $\mathbf{G}$ of $\calC$ is in systematic form. The choice of such a code should not be limited to the binary case: Peters \cite{petersIsd} showed that for $q > 2$, the subfield subcode construction over $\fq$ can improve the key size while maintaining the same level of security against a decoding attack. 

\emph{Information--set decoding} (ISD) algorithm, introduced by Prange \cite{prange}, is the best known algorithm for decoding a random--looking code without any knowledge of its structure. Given a $(n-k) \times n$ parity check matrix $\mathbf{H}$ and a column vector $\mathbf{s}$ of length $n-k$ obtained through $\mathbf{H}$, the total complexity of finding the row vector $\mathbf{e} \in \fq^n$ of weight $t$ and satisfying $\mathbf{s}=\mathbf{H}\mathbf{e}^{T}$ is given by 
$$ C_\text{Prange}= \frac{\binom{n}{t}}{\binom{n-k}{t}}C_\text{Gauss}(n,k,q),$$

where $C_\text{Gauss}(n,k,q)$ is the cost of the Gauss--Jordan elimination of an $k\times n$ matrix over $\mathbb{F}_q$. We analyse the parameters of Goppa--like AG codes based on information set decoding attack.
 
In this work, we also investigate Goppa--like AG codes over finite fields with different characteristics by analysing the computational cost of the ISD algorithm for parameters relevant to post--quantum cryptography. In Table \ref{table:goppa-herm}, we provide parameters that resist the distinguisher given in this paper and improve key sizes compared to the subfield subcodes of 1--point Hermitian codes parameters reported in \cite[Tables~2 and 3]{EKN21crypto}, which already reduced key sizes compared to binary Goppa codes.

\begin{table}[h]
	\begin{center}
		\begin{tabular}{|c|c|c|c|c|c|c|}
			\hline
			$q$&  $s$ & $n$ & $k$ & $t$ & Prange complexity & Key--Size(bit)\\
			\hline \hline
			
		${11}$	&265&1320& 898& 77& 153& 1\,136\,868 \\
		
			\hline \hline
			${13}$&312&2188& 1718& 77& 198& 2\,422\,380  \\
			
			\hline 
		${16}$&354& 4078& 3608& 56& 199& 6\,783\,040   \\
		
			\hline \hline
		${13}$& 490& 2189& 1363& 166& 270& 3\,377\,514 \\
		
			\hline 
			${16}$& 460 &4080& 3398& 109& 313& 9\,269\,744\\
			\hline
		\end{tabular}
		\vspace*{0.3em}
		\caption{Goppa-Like Hermitian codes parameters $\Gamma(\calP,sP_\infty,g)$ over $\F_{q^2}$.} \label{table:goppa-herm}
	\end{center}
\end{table}

%\medskip

\subsection*{Outline of the paper}

The paper is organized as follows. Section \ref{sec:preli} is dedicated to basic definitions about linear codes, subfield subcodes and trace codes, AG codes and $C_{a,b}$ curves. In Section \ref{sec:Goppa}, we introduce the notion of Goppa--like AG codes and we give a first bound for the dimension of the square of their dual. In Section \ref{sec:AG-C_a,b}, we refine the bound for Goppa--like AG codes from $C_{a,b}$ curves, associated to one--point divisors. Section \ref{sec:analysis} concludes this paper with a discussion about this bound on elliptic curves and on the Hermitian curve.

\section{Preliminaries}\label{sec:preli}
\subsection{Linear codes, subfield subcodes and trace codes}

In this section, we briefly introduce some notation and basic definitions for linear codes, subfield subcodes, and trace codes. Furthermore, we present significant results that employ component--wise product and trace map. For further details about linear codes, we refer the reader to \cite{MS86}.

\noindent Let $\fqm$ be a finite extension of the field $\fq$ with $q$ elements, where $m \geq 1$. A \emph{linear code} $\calC$ over $\fqm$ is a vector subspace of $\fqm^n$. The integer $n$ is called its \emph{length} and we denote by $k$ its \emph{dimension} and by $d$ its \emph{minimum distance}. We say that $\calC$ is a $[n,k,d]_{q^m}$ code or that it has parameters $[n,k,d]_{q^m}$ (we may omit the minimum distance in this notation). Given a linear code of length $n$, its dual code $\calC^{\perp}$ is defined by 
\[\calC^{\perp}=\left\lbrace \mathbf{x} \in \fqm^n \mid \mathbf{c} \cdot \mathbf{x}=0, \text{ for all } \mathbf{c} \in \calC \right\rbrace,\]  
where $\cdot$ denotes the usual scalar product. It is easy to verify that if $\calC$ is a $[n,k]_{q^m}$ code, then $\calC^{\perp}$ is a $[n,n-k]_{q^m}$ code.
A generator matrix $\mathbf{G}$ of a $[n,k]_{q^m}$ linear code $\calC$ is a $k \times n$ matrix over $\fqm$ whose rows are a basis of $\calC$ as a vector space, while a parity check matrix $\mathbf{H}$ of $\calC$ is a any generator matrix of its dual.
For a linear code with length $n$ and dimension $k$, its rate is defined by the ratio $R := \frac{k}{n}$.
The Schur product of two vectors $\mathbf{a}$,$\mathbf{b} \in \fqm^n$ is defined as 
\[ \mathbf{a} \star \mathbf{b} := (a_1b_1,\cdots,a_nb_n). \]
It can be extended to codes in the following way: If $\calC$ and $\calD$ are two codes over $\fqm$ with same length $n$, their Schur product is defined as the following component--wise product:
\[ \calC \star \calD := \Span{\mathbf{c} \star \mathbf{d} \mid \mathbf{c} \in \calC, \mathbf{d} \in \calD}. \]
Moreover, if $\calC = \calD$, we call $\calC^{\star 2} := \calC \star \calC$ the square of $\calC$. The following lemma gives a first estimation of the dimension of Schur product of codes.

\begin{lemma}[{\cite[Proposition~10]{MT21}}] \label{lem:known_bounds}
Let $\calC$ and $\calD$ be two linear codes over $\fqm$ with same length $n$ and respective dimension $k_{\calC}$ and $k_{\calD}$. We have
\begin{enumerate}
 \item $\dim_{\fqm}(\calC \star \calD) \leq \min\{k_{\calC}k_{\calD},n\}$;
 \item\label{it:dim_C^2} $\displaystyle \dim_{\mathbb{F}_{q^m}}(\calC^{\star2}) \leq \mathrm{min}\left(n,\binom{k_{\calC}+1}{2}\right)$.
 \end{enumerate}
\end{lemma}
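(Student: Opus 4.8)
The plan is to reduce both statements to the single observation that the Schur product $\star$ is bilinear, so that a spanning set of $\calC \star \calD$ can be manufactured out of bases of the two factors. First I would fix an $\fqm$-basis $\mathbf{c}_1,\dots,\mathbf{c}_{k_\calC}$ of $\calC$ and an $\fqm$-basis $\mathbf{d}_1,\dots,\mathbf{d}_{k_\calD}$ of $\calD$. Writing arbitrary $\mathbf{c}\in\calC$ and $\mathbf{d}\in\calD$ in these bases and expanding $\mathbf{c}\star\mathbf{d}$ using the coordinate-wise distributivity of multiplication in $\fqm$ (which is exactly the bilinearity of $\star$), one sees that $\mathbf{c}\star\mathbf{d}$ is an $\fqm$-linear combination of the $k_\calC k_\calD$ vectors $\mathbf{c}_i\star\mathbf{d}_j$. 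Hence $\{\mathbf{c}_i\star\mathbf{d}_j : 1\le i\le k_\calC,\ 1\le j\le k_\calD\}$ spans $\calC\star\calD$, which gives $\dim_{\fqm}(\calC\star\calD)\le k_\calC k_\calD$. Since $\calC\star\calD$ is by its very definition a subspace of $\fqm^n$, its dimension is also at most $n$, and this proves item (1).

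For item (2) I would specialize to $\calD=\calC$ with the same chosen basis, and bring in the extra fact that $\star$ is commutative (again coordinate-wise). Then $\mathbf{c}_i\star\mathbf{c}_j=\mathbf{c}_j\star\mathbf{c}_i$, so among the $k_\calC^2$ spanning vectors coming from (1) it suffices to keep those with $i\le j$; there are $\binom{k_\calC}{2}+k_\calC=\binom{k_\calC+1}{2}$ of them. Together with the trivial bound $\dim_{\fqm}(\calC^{\star2})\le n$ coming from $\calC^{\star2}\subseteq\fqm^n$, this yields item (2).

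There is no genuine obstacle here: the argument is elementary linear algebra, and the only points that deserve care are the bookkeeping of the symmetric index set $\{(i,j): i\le j\}$ in (2) and the explicit remark that these spanning families are in general \emph{not} linearly independent. That last point is exactly why the lemma only provides upper bounds, and it is precisely the gap between these generic bounds and the true dimension in structured situations that the rest of the paper exploits.
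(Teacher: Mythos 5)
Your argument is correct and is the standard one: bilinearity of $\star$ shows the products of basis vectors span $\calC\star\calD$, commutativity cuts the spanning set down to $\binom{k_\calC+1}{2}$ elements in the square case, and the bound by $n$ is immediate since $\calC\star\calD\subseteq\fqm^n$. The paper itself gives no proof here --- it simply cites \cite[Proposition~10]{MT21} --- and your reasoning is exactly the elementary argument behind that reference, so there is nothing further to reconcile.
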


In particular, by \cite[Theorem~2.3]{CCMZ15}, if $\calC$ is sufficiently random and $\calC^{\star2}$ does not fill the full space, we expect to have 
\[ \dim_{\fqm}(\calC^{\star2}) = \binom{k_{\calC}+1}{2}.\]

Given a linear code $\calC$ over $\fqm$, there exist two constructions of codes on the subfield $\fq$, namely its subfield subcode and its trace code. The \emph{subfield subcode} of $\calC$, denoted by $\calC|_{\fq}$, is the linear code over $\fq$ defined by 
\[\calC|_{\fq}=\calC \cap \mathbb{F}_q^n.\]
Again, if $\calC$ is a $[n,k,d]_{q^m}$ code, then $\calC|_{\fq}$ is a $[n,\geq n-m(n-k),\geq d]_q$ code.
Let $\operatorname{Tr}_{\mathbb{F}_{q^m}/\fq}$ be the trace operator on $\mathbb{F}_{q^m}$ with respect to $\mathbb{F}_q$, \emph{i.e.} defined by
\[\Tr{x} = x + x^q + ... + x^{q^{m-1}},\]
for any $x \in \fqm$. We can extend this definition to any vector $\mathbf{c} \in \fqm^n$ by $$\Tr{\mathbf{c}}= (\Tr{c_1},\cdots,\Tr{c_n}).$$ 
Given a linear code $\calC$ of length $n$ and dimension $k$ over $\fqm$, its \emph{trace code} over $\fq$ is the image under the trace operator, that is $\Tr{\calC}$. It is a linear code of length $n$ over $\fq$, whose dimension satisfies
\begin{equation}\label{eq:dim_trace}
\dim_{\mathbb{F}_q} \Tr{\calC} \leq \min\{mk,n\}.
\end{equation}
Subfield subcodes and trace codes are related by the duality operator, as stated by Delsarte's theorem.

\begin{thm}[{\cite[Delsarte's theorem]{Del75}}] \label{th:delsarte}
Let $\calC$ be a linear code of lenght $n$ over $\fqm$. Then
\[\left(\calC \cap \fq^n\right)^{\perp} = \Tr{\calC^{\perp}}.\]
\end{thm}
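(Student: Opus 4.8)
The plan is to prove the two inclusions, obtaining the harder one by dualizing over $\fq$. Throughout I use only that $\operatorname{Tr}_{\mathbb{F}_{q^m}/\fq}$ is $\fq$--linear and that the bilinear form $(\lambda,\mu)\mapsto\Tr{\lambda\mu}$ on $\fqm$ is nondegenerate. First I would record the elementary observation that for $\mathbf{a}\in\fq^n$ and $\mathbf{b}\in\fqm^n$ one has $\mathbf{a}\cdot\Tr{\mathbf{b}}=\Tr{\mathbf{a}\cdot\mathbf{b}}$, since the coordinates of $\mathbf{a}$ lie in $\fq$ and can be pulled through the trace. From this the inclusion $\Tr{\calC^{\perp}}\subseteq\left(\calC\cap\fq^n\right)^{\perp}$ is immediate: for $\mathbf{x}\in\calC^{\perp}$ and $\mathbf{c}\in\calC\cap\fq^n$ we get $\mathbf{c}\cdot\Tr{\mathbf{x}}=\Tr{\mathbf{c}\cdot\mathbf{x}}=\Tr{0}=0$.

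For the reverse inclusion I would not compare the two codes directly, but instead show that their $\fq$--duals agree, namely $\left(\Tr{\calC^{\perp}}\right)^{\perp}=\calC\cap\fq^n$; since bi--duality over $\fq$ is the identity, this is equivalent to the statement. Using the identity above, $\left(\Tr{\calC^{\perp}}\right)^{\perp}=\set{\mathbf{c}\in\fq^n \mid \Tr{\mathbf{c}\cdot\mathbf{x}}=0 \text{ for all } \mathbf{x}\in\calC^{\perp}}$. The inclusion $\calC\cap\fq^n\subseteq\left(\Tr{\calC^{\perp}}\right)^{\perp}$ is clear, because $\mathbf{c}\cdot\mathbf{x}$ already vanishes whenever $\mathbf{c}\in\calC$ and $\mathbf{x}\in\calC^{\perp}$.

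The crux is the opposite inclusion, and this is where the one real idea enters: $\calC^{\perp}$ is an $\fqm$--vector space, hence stable under multiplication by every scalar $\lambda\in\fqm$. So if $\mathbf{c}\in\fq^n$ satisfies $\Tr{\mathbf{c}\cdot\mathbf{x}}=0$ for all $\mathbf{x}\in\calC^{\perp}$, then applying this to $\lambda\mathbf{x}$ gives $\Tr{\lambda(\mathbf{c}\cdot\mathbf{x})}=0$ for every $\lambda\in\fqm$ and every $\mathbf{x}\in\calC^{\perp}$. Nondegeneracy of the trace form then forces $\mathbf{c}\cdot\mathbf{x}=0$ for all $\mathbf{x}\in\calC^{\perp}$, i.e. $\mathbf{c}\in\left(\calC^{\perp}\right)^{\perp}=\calC$ (duality over $\fqm$), and therefore $\mathbf{c}\in\calC\cap\fq^n$. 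This proves $\left(\Tr{\calC^{\perp}}\right)^{\perp}=\calC\cap\fq^n$, and taking $\fq$--duals yields $\left(\calC\cap\fq^n\right)^{\perp}=\Tr{\calC^{\perp}}$.

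The only genuinely delicate point is the last step: one must combine the fact that $\calC^{\perp}$ absorbs $\fqm$--scalars — so that a single scalar equation over $\fq$ promotes to a whole family of equations indexed by $\fqm$ — with the nondegeneracy of $(\lambda,\mu)\mapsto\Tr{\lambda\mu}$. Everything else is bookkeeping with the $\fq$--linearity of the trace and with double duals, and in particular the identity $\mathbf{a}\cdot\Tr{\mathbf{b}}=\Tr{\mathbf{a}\cdot\mathbf{b}}$ for $\mathbf{a}\in\fq^n$ is used in both directions.
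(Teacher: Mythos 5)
Your proof is correct. The paper itself gives no argument for this statement --- it is quoted directly from Delsarte's 1975 paper --- and what you wrote is the classical proof: the easy inclusion $\Tr{\calC^{\perp}}\subseteq\left(\calC\cap\fq^n\right)^{\perp}$ via the identity $\mathbf{a}\cdot\Tr{\mathbf{b}}=\Tr{\mathbf{a}\cdot\mathbf{b}}$ for $\mathbf{a}\in\fq^n$, and the converse by showing $\left(\Tr{\calC^{\perp}}\right)^{\perp}=\calC\cap\fq^n$ using the $\fqm$--stability of $\calC^{\perp}$ together with the nondegeneracy of $(\lambda,\mu)\mapsto\Tr{\lambda\mu}$, then concluding by biduality over $\fq$, which is valid because the standard bilinear form on $\fq^n$ is nondegenerate.
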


\subsection{AG and SSAG codes} \label{section:AG_codes}

\subsubsection{Definitions}

Let $\calX$ be a smooth and irreducible projective curve over $\fqm$ of genus $\mathfrak{g}$. A \emph{divisor} on $\calX$ over $\fqm$ is a formal sum of places over $\fqm$, \emph{i.e.} of the form $G=\sum \nu_P(G) P$, where $\nu_P(G)$ are integers which are all zero except for a finite number of places $P$. We denote by $\Div(\calX)$ the set of $\fqm$--divisors on $\calX$ (we omit the dependence on $\fqm$).
Given $G \in \Div(\calX)$, we define its \emph{support} $\Supp(G)$ as the finite set of places $P$ such that $\nu_P(G)$ is non--zero and its \emph{degree} as $\deg G=\sum \nu_P(G) \deg(P)$. We say that a divisor $G \in \Div(\calX)$ is \emph{effective} if for all $P \in \Supp(G)$, we have $\nu_P(G) \geq 0$, in which case we write $G \geq 0$. This permits to define an order on the group of divisors by setting $G_1 \geq G_2$ if and only if $G_1-G_2 \geq 0$.

A non--zero function $f \in \fqm(\calX)$ defines a divisor, called \emph{principal}, denoted by $(f)=\sum \nu_P(f) P$. The \emph{Riemann--Roch space} of a divisor $G$ is defined as the $\fqm$ vector space
$$ \calL(G) := \set{f \in \fqm(\calX) \mid (f) + G \geq 0} \cup \set{0},$$
of dimension $\ell(G)$.
Let $\calP \subseteq \fqm(\calX)$ be a set of $n$ distinct rational points such that $\Supp(G) \cap \calP = \varnothing$.
We can consider the AG code 
$$\calC := \calC_{\calL}(\calX,\calP,G) := \set{\left(f(P)\right)_{P \in \calP} \mid f \in \calL(G)},$$
which is a $[n, k \geq \ell(G)]_{\fqm}$ code. If $n > \deg G$, then its dimension is exactly equal to $\ell(G)$ and its minimum distance is bounded from below by the \emph{designed distance} $d^*=n-\deg G$.\\
From the Riemann--Roch theorem, we have
$$ \ell(G) \geq \deg(G) +1 - \mathfrak{g},$$ 
with equality if $\deg(G) \geq 2\mathfrak{g}-1$.\\

The dual of an AG code can be described as a residue code (see \cite{Sti09} for more details), \emph{i.e.}
$$ C_{\calL}(\calX,\calP,G)^{\perp} = C_{\Omega}(\calX,\calP,G).$$ 
Residue and evaluation codes are connected by the following result.

\begin{proposition} [{\cite[Proposition~2.2.10]{Sti09}}] \label{prop:dual_AG_codes}
Let $C_{\calL}(\calX,\calP,G)$ be an AG-code defined on a curve $\calX$. Then 
\[C_{\Omega}(\calX,\calP,G) = C_{\calL}(\calX,\calP,G^{\perp}),\]
with $G^{\perp} = D_{\calP}-G+W$, where $D_{\calP} = \sum\limits_{P \in \calP} P$ and $W=(\omega)$ is a canonical divisor such that for all $P \in \calP$, $\nu_P(\omega)=-1$ and $\mathrm{Res}_{\omega}(P)=1$. 
\end{proposition}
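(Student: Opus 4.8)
The plan is to realize $C_{\Omega}(\calX,\calP,G)$, which is by definition the image of the residue map $\eta \mapsto \left(\mathrm{Res}_P(\eta)\right)_{P \in \calP}$ on the space $\Omega(G - D_{\calP})$ of Weil differentials $\eta$ with $(\eta) \geq G - D_{\calP}$ (equivalently, $\eta$ has at worst simple poles along $\calP$ and $(\eta) \geq G$ elsewhere), as an \emph{evaluation} code, by transporting $\Omega(G-D_{\calP})$ isomorphically onto a Riemann--Roch space via multiplication by a well--chosen differential. Concretely, one first needs to know that a canonical divisor $W = (\omega)$ with $\nu_P(\omega) = -1$ and $\mathrm{Res}_P(\omega) = 1$ for every $P \in \calP$ actually exists; this is the only genuinely external input and is proved classically via weak approximation together with the Riemann--Roch theorem (see \cite[Lemma~2.2.9]{Sti09} and its proof). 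Fix such an $\omega$ once and for all and set $G^{\perp} := D_{\calP} - G + W$.

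Next I would analyse the map $\varphi \colon f \mapsto f\omega$. Since the module of Weil differentials is one--dimensional over $\fqm(\calX)$, $\varphi$ is an $\fqm$--linear isomorphism from $\fqm(\calX)$ onto the space of all Weil differentials, and the divisor identity $(f\omega) = (f) + W$ shows that $(f\omega) \geq G - D_{\calP}$ holds if and only if $(f) \geq G - D_{\calP} - W = -G^{\perp}$, that is, if and only if $f \in \calL(G^{\perp})$. Hence $\varphi$ restricts to an $\fqm$--linear isomorphism $\calL(G^{\perp}) \xrightarrow{\ \sim\ } \Omega(G - D_{\calP})$.

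It then remains to check that $\varphi$ intertwines the residue map with the evaluation map $f \mapsto \left(f(P)\right)_{P \in \calP}$. For $f \in \calL(G^{\perp})$ and $P \in \calP$, the bound $(f) \geq -G^{\perp}$ gives $\nu_P(f) \geq \nu_P(G) - \nu_P(D_{\calP}) - \nu_P(W) = 0 - 1 - (-1) = 0$, so $f$ is regular at $P$; writing $\omega = \left(c\,t^{-1} + \text{(regular)}\right)dt$ locally for a uniformizer $t$ at $P$, where $c = \mathrm{Res}_P(\omega) = 1$, and $f = f(P) + t\cdot(\cdots)$, one obtains $\mathrm{Res}_P(f\omega) = f(P)\cdot \mathrm{Res}_P(\omega) = f(P)$. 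Therefore the residue vector of $f\omega$ equals the evaluation vector of $f$; letting $f$ range over $\calL(G^{\perp})$, the left--hand sides sweep out exactly $C_{\Omega}(\calX,\calP,G)$ and the right--hand sides exactly $C_{\calL}(\calX,\calP,G^{\perp})$, so the two codes coincide. The only delicate points, beyond bookkeeping with divisors, are the existence of the normalized canonical differential $\omega$ and the local residue identity $\mathrm{Res}_P(f\omega) = f(P)\,\mathrm{Res}_P(\omega)$ for $f$ regular at $P$; everything else reduces to the arithmetic of the identity $(f\omega) = (f) + W$.
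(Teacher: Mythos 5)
Your argument is correct and is essentially the standard proof of this result as given in the cited reference \cite[Proposition~2.2.10]{Sti09} (the paper itself only quotes the statement): existence of the normalized differential $\omega$ via weak approximation, the isomorphism $\calL(G^{\perp}) \to \Omega(G-D_{\calP})$, $f \mapsto f\omega$, coming from $(f\omega)=(f)+W$, and the local identity $\mathrm{Res}_P(f\omega)=f(P)\,\mathrm{Res}_P(\omega)=f(P)$ identifying residue vectors with evaluation vectors. Nothing is missing; the divisor bookkeeping ($\nu_P(f)\geq 0$ for $P\in\calP$) is handled correctly.
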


%In the case of AG codes, we have a special notation for , namely
%\[\ssag{\calX,\calP,G} := C_{\calL}(\calX,\calP,G)|_{\fq}.\]

In general, it is hard to find the true dimension of the subfield subcode of an AG code, but a trivial estimate can be derived from Delsarte's theorem (Theorem~\ref{th:delsarte}):
\begin{equation}\label{eq:dim_ssag}
	\dim_{\fq} C_{\calL}(\calX,\calP,G)|_{\fq} \geq n - m\dim_{\fqm} C_{\Omega}(\calX,\calP,G).
\end{equation}

The minimum distance of the SSAG code $C_{\calL}(\calX,\calP,G)|_{\fq}$ is at least the one of the AG code $C_{\calL}(\calX,\calP,G)$ above. It is thus bounded from below by the designed distance $d^*=n-\deg G$.% 

The structure of the AG codes may provide sharper bounds on the dimension of subfield subcodes and trace codes of AG codes.

\begin{thm}[{\cite[Theorem~9.1.6]{Sti09}}]\label{thm:dim_ssag}
	With the notation as above, let $G_1$ be any divisor such that 
	\begin{equation} \label{eq:divisor_G_1}
		G \geq qG_1 \ \mathrm{and} \ G \geq G_1.
	\end{equation}

	Then
\[	\dim_{\fq} \Tr{C_{\calL}(\calX,\calP,G)} \leq  \left\{\begin{array}{ll}
	m\left(\ell(G) - \ell(G_1)\right)+1 & \text{if } G_1 \geq 0, \\
	m\left(\ell(G) - \ell(G_1)\right) & \text{if } G_1 \not\geq 0, 
\end{array} \right.\]	
	and	
\[	\dim_{\fq} C_{\Omega}(\calX,\calP,G)|_{\fq} \geq  \left\{\begin{array}{ll}
	n-1-m\left(\ell(G) - \ell(G_1)\right) & \text{if } G_1 \geq 0, \\
	n-m\left(\ell(G) - \ell(G_1)\right) & \text{if } G_1 \not\geq 0. 
\end{array} \right.\]
\end{thm}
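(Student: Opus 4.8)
The plan is to bound the dimension of the trace code $\Tr{C_{\calL}(\calX,\calP,G)}$ first, and then deduce the bound on $\dim_{\fq}C_{\Omega}(\calX,\calP,G)|_{\fq}$ from Delsarte's theorem. Write $F=\fqm(\calX)$ for the function field and let $\mathrm{ev}\colon\calL(G)\to\fqm^n$, $f\mapsto(f(P))_{P\in\calP}$, denote the evaluation map, so that $C_{\calL}(\calX,\calP,G)=\mathrm{ev}(\calL(G))$ and hence $\Tr{C_{\calL}(\calX,\calP,G)}=\Tr{\mathrm{ev}(\calL(G))}$. Since $\calL(G)$ is an $\fq$-vector space of dimension $m\,\ell(G)$ and the map $f\mapsto\Tr{\mathrm{ev}(f)}$ is $\fq$-linear, I would start from the rank--nullity identity $\dim_{\fq}\Tr{C_{\calL}(\calX,\calP,G)}=m\,\ell(G)-\dim_{\fq}\ker(f\mapsto\Tr{\mathrm{ev}(f)})$ and then exhibit a large explicit $\fq$-subspace of this kernel, built from the hypotheses on $G_1$.

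The crucial observation is that a coordinate of the shape $h(P)^q-h(P)$ is annihilated by the trace, because $\Tr{a^q}=\Tr{a}$ for every $a\in\fqm$. Guided by this, I would set $W:=\set{h^q-h\mid h\in\calL(G_1)}$, an $\fq$-subspace of $F$, and verify the two properties it needs. First, $W\subseteq\calL(G)$: for $h\in\calL(G_1)$ we have $(h^q)=q(h)\geq -qG_1$, so $h^q\in\calL(qG_1)\subseteq\calL(G)$ by the hypothesis $G\geq qG_1$, while $h\in\calL(G_1)\subseteq\calL(G)$ by $G\geq G_1$; this is exactly where both hypotheses on $G_1$ are used. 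Second, $\mathrm{ev}(W)$ is trace--zero: the places of $\calP$ avoid $\Supp(G)$, hence every function in $\calL(G)$ is regular at them, evaluation commutes with the $q$-th power, and the observation above applies coordinate by coordinate. Together these give $W\subseteq\ker(f\mapsto\Tr{\mathrm{ev}(f)})$.

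It then remains to compute $\dim_{\fq}W$. I would analyse the $\fq$-linear surjection $\calL(G_1)\twoheadrightarrow W$, $h\mapsto h^q-h$: its kernel is $\set{h\in\calL(G_1)\mid h^q=h}=\calL(G_1)\cap\fq$, since the roots of $X^q-X$ in the field $F$ are precisely the elements of $\fq$. A nonzero constant belongs to $\calL(G_1)$ if and only if $G_1\geq 0$, so $\dim_{\fq}W$ equals $m\,\ell(G_1)-1$ when $G_1\geq 0$ and $m\,\ell(G_1)$ when $G_1\not\geq 0$. Substituting into the rank--nullity identity above yields the two announced upper bounds for $\dim_{\fq}\Tr{C_{\calL}(\calX,\calP,G)}$.

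Finally, for the residue code I would pass to duals. Since $C_{\Omega}(\calX,\calP,G)=C_{\calL}(\calX,\calP,G)^{\perp}$, Delsarte's theorem (Theorem~\ref{th:delsarte}) gives $\big(C_{\Omega}(\calX,\calP,G)|_{\fq}\big)^{\perp}=\big(C_{\calL}(\calX,\calP,G)^{\perp}\cap\fq^n\big)^{\perp}=\Tr{C_{\calL}(\calX,\calP,G)}$, so $\dim_{\fq}C_{\Omega}(\calX,\calP,G)|_{\fq}=n-\dim_{\fq}\Tr{C_{\calL}(\calX,\calP,G)}$ and the two cases of the first bound translate directly into the two cases of the second. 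I do not expect a serious obstacle: the whole argument hinges on guessing the right subspace $W$, after which the only points needing care are the inclusion $W\subseteq\calL(G)$ (which is precisely the content of $G\geq qG_1$ and $G\geq G_1$) and the bookkeeping of the constant function, i.e.\ whether $1\in\calL(G_1)$, which is what distinguishes the cases $G_1\geq 0$ and $G_1\not\geq 0$.
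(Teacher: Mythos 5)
Your argument is correct and is essentially the proof of the cited result (Stichtenoth, Theorem~9.1.6, which the paper quotes without reproving): one bounds the kernel of the $\fq$-linear map $f\mapsto\Tr{\mathrm{ev}(f)}$ on $\calL(G)$ from below by the Artin--Schreier-type subspace $\set{h^q-h\mid h\in\calL(G_1)}$, whose dimension is $m\,\ell(G_1)$ minus $1$ exactly when the constants lie in $\calL(G_1)$, i.e.\ when $G_1\geq 0$, and then transfers the bound to $C_{\Omega}(\calX,\calP,G)|_{\fq}$ via Delsarte duality. All the delicate points (use of both hypotheses $G\geq qG_1$, $G\geq G_1$, regularity at $\calP$, and the constant-function bookkeeping) are handled correctly.
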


\begin{remark}\label{rk:G/q}
	The biggest divisor (with respect to the degree) satisfying the conditions in Equation \eqref{eq:divisor_G_1} is given by 
	\begin{equation}\label{eq:G/q}
		\left[ \frac{G}{q} \right]:= \sum\limits_{P \in \Supp(G^+)} \left\lfloor\frac{\nu_P(G^+)}{q}\right\rfloor P + \sum\limits_{P \in \Supp(G^-)}\nu_P(G^-)P,
	\end{equation}
	where $G^+$ and $G^-$ are effective divisors such that $G=G^+-G^-$.
\end{remark}

With additional hypotheses on $G$ and $\displaystyle \left[ \frac{G}{q} \right]$, {\cite[Theorem~1]{Le16}} gives an exact formula for such a code.

Regarding the parameters of subfield subcodes of differential codes, Wirtz \cite{W88} improved the bound on their the minimum distance.

\begin{thm}[{\cite[Theorem~2]{W88}}]\label{thm:Wirtz}
	Take the same notation as in Theorem \ref{thm:dim_ssag} and assume $\deg G_1 > 2 \mathfrak{g}-2$.
	Set $U:=\{P \in \Supp(G) \mid \nu_P(G) \geq 0 \text{ and } \nu_P(G) = q-1 \mod q\}$ and $G_U=\sum_{P \in U} P$. Then
	\[	\dim_{\fq} C_{\Omega}(\calX,\calP,G)|_{\fq} = 	\dim_{\fq} C_{\Omega}(\calX,\calP,G+G_U)|_{\fq}, \]
	hence the minimum distance of $C_{\Omega}(\calX,\calP,G)|_{\fq}$ satisfies
	\[d\left(C_{\Omega}(\calX,\calP,G)|_{\fq} \right) \geq \deg G + \deg G_U -2\mathfrak{g}+2.\]
\end{thm}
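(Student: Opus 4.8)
The plan is to prove the stronger statement that $C_{\Omega}(\calX,\calP,G)|_{\fq}=C_{\Omega}(\calX,\calP,G+G_U)|_{\fq}$ as codes, and then to read off the minimum distance from the designed distance of a differential code. By Remark~\ref{rk:G/q} we have $\deg\left[\frac Gq\right]\ge\deg G_1>2\mathfrak g-2$, so we may and do take $G_1=\left[\frac Gq\right]$ from now on. Since $G_U\ge0$, the containment $C_{\Omega}(\calX,\calP,G+G_U)\subseteq C_{\Omega}(\calX,\calP,G)$ is clear and descends to the subfield subcodes, so only the reverse inclusion needs proof. Adding the places of $U$ one at a time reduces us to the case $G_U=P$ for a single place $P$ with $\nu_P(G)\ge0$ and $\nu_P(G)\equiv-1\pmod q$: indeed, after adding $P$ the new divisor has $\nu_P\equiv0\pmod q$, the valuations at the remaining places of $U$ are unchanged, and $\left[\frac{G+P}{q}\right]=\left[\frac Gq\right]+P$ still has degree $>2\mathfrak g-2$, so the hypotheses survive the induction.

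So fix such a $P$, write $\nu_P(G)=kq-1$ with $k\ge1$ (hence $\nu_P(G_1)=k-1$), take $\mathbf c\in C_{\Omega}(\calX,\calP,G)|_{\fq}$, and recall that $C_{\Omega}(\calX,\calP,G)$ consists of the residue vectors $(\operatorname{Res}_Q\omega)_{Q\in\calP}$ of differentials $\omega$ with $(\omega)\ge G-D_{\calP}$. Fix such an $\omega$ with $\mathbf c=(\operatorname{Res}_Q\omega)_{Q\in\calP}$; the goal is to show $v_P(\omega)\ge kq$, which is exactly the condition $(\omega)\ge(G+P)-D_{\calP}$, i.e. $\mathbf c\in C_{\Omega}(\calX,\calP,G+P)$. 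Since $\deg(G_1+P)>2\mathfrak g-2$, Riemann--Roch gives $\ell(G_1+P)=\ell(G_1)+\deg P>\ell(G_1)$, so there is $h\in\calL(G_1+P)\setminus\calL(G_1)$, which automatically has $v_P(h)=-k$. From $k\le kq-1$ we get $G_1+P\le G$, so $h\in\calL(G)$; from $qk\le kq$ we get $q(G_1+P)\le G+P$, so $h^q\in\calL(G+P)$ with $v_P(h^q)=-kq$. The differential $h^q\omega$ satisfies $(h^q\omega)\ge-P-D_{\calP}$, so its only poles are simple poles on $\calP\cup\{P\}$, and the residue theorem gives
\[ \operatorname{Res}_P(h^q\omega)+\sum_{Q\in\calP}h(Q)^q c_Q=0.\]
Since the $c_Q$ lie in $\fq$ and $x\mapsto x^q$ is additive on $\fqm$, $\sum_{Q}h(Q)^q c_Q=\bigl(\sum_Q h(Q)c_Q\bigr)^q=\bigl(\mathbf c\cdot(h(Q))_{Q\in\calP}\bigr)^q$, which vanishes because $h\in\calL(G)$ and $\mathbf c\in C_{\Omega}(\calX,\calP,G)=C_{\calL}(\calX,\calP,G)^{\perp}$. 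Hence $\operatorname{Res}_P(h^q\omega)=0$.

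To conclude the one-place step, pick a local uniformizer $\pi$ at $P$ and write $\omega=(a\,\pi^{\,kq-1}+\cdots)\,d\pi$ and $h=\lambda\,\pi^{-k}(1+\cdots)$ with $a,\lambda$ in the residue field $\kappa_P$ of $P$ and $\lambda\ne0$; then $v_P(\omega)=kq-1$ iff $a\ne0$, and a direct local computation gives $\operatorname{Res}_P(h^q\omega)=\operatorname{Tr}_{\kappa_P/\fqm}(\lambda^q a)$. As $h$ ranges over $\calL(G_1+P)\setminus\calL(G_1)$ its leading coefficient $\lambda$ ranges over all of $\kappa_P\setminus\{0\}$ (the map $h\mapsto(\pi^k h\bmod\mathfrak m_P)$ is $\fqm$-linear with kernel $\calL(G_1)$ and image of dimension $\ell(G_1+P)-\ell(G_1)=\deg P=[\kappa_P:\fqm]$), hence so does $\lambda^q$; since $\operatorname{Tr}_{\kappa_P/\fqm}(\lambda^q a)=0$ for all of them, non-degeneracy of the trace form forces $a=0$, i.e. $v_P(\omega)\ge kq$. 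This yields the reverse inclusion, hence the code equality, hence the dimension equality. For the minimum distance, by Proposition~\ref{prop:dual_AG_codes} one has $C_{\Omega}(\calX,\calP,G+G_U)=C_{\calL}(\calX,\calP,(G+G_U)^{\perp})$ with $\deg(G+G_U)^{\perp}=n-\deg(G+G_U)+2\mathfrak g-2$, so this AG code has designed distance $\deg G+\deg G_U-2\mathfrak g+2$, and a subfield subcode has minimum distance at least that of the ambient AG code. The main obstacle is this one-place step, and within it the construction of $h$: the congruence $\nu_P(G)\equiv-1\pmod q$ is precisely what makes $v_P(h^q)=-kq$ land on the pole order $\nu_P(G)+1$ newly permitted by $G+P$, the hypothesis $\nu_P(G)\ge0$ is what keeps $h$ inside $\calL(G)$ itself, and $\deg\left[\frac Gq\right]>2\mathfrak g-2$ is what guarantees via Riemann--Roch that a function with pole of order exactly $k$ at $P$ already lies in $\calL(G_1+P)$.
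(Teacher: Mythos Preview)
The paper does not supply its own proof of this statement: Theorem~\ref{thm:Wirtz} is quoted from \cite{W88} and used as a black box, so there is no in--paper argument to compare against. Your proof is correct and is essentially the classical argument (going back to Wirtz, generalizing the $\Gamma(L,g)=\Gamma(L,g^2)$ phenomenon for binary Goppa codes): reduce to adding a single place $P$ with $\nu_P(G)=kq-1$, produce via Riemann--Roch a function $h\in\calL(G_1+P)\setminus\calL(G_1)$ with $v_P(h)=-k$, apply the residue theorem to $h^q\omega$, and kill the leading coefficient of $\omega$ at $P$ using the Frobenius and non--degeneracy of the trace. The verification that $h\in\calL(G)$, that $(h^q\omega)\geq -P-D_{\calP}$, and that the leading coefficients $\lambda$ sweep out all of $\kappa_P$ are all carried out correctly; the local computation of the residue is clean because the $q$--th power forces the exponents of $h^q$ to lie in $q\Z$, so only the term $\lambda^q a_{kq-1}$ contributes to the coefficient of $\pi^{-1}$.

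Two minor remarks. First, you show that \emph{every} representative $\omega$ of $\mathbf c$ with $(\omega)\geq G-D_{\calP}$ satisfies $v_P(\omega)\geq kq$; this is stronger than needed (one representative suffices to place $\mathbf c$ in $C_\Omega(\calX,\calP,G+P)$), but it is what the argument naturally yields. Second, in the induction you implicitly use that $\left[\frac{G+P}{q}\right]=\left[\frac Gq\right]+P$ at $P$ because $\lfloor kq/q\rfloor=\lfloor(kq-1)/q\rfloor+1$; this is exactly where the congruence $\nu_P(G)\equiv -1\pmod q$ is used a second time, and it keeps the degree hypothesis intact as you note.
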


\subsubsection{First estimation of the dimension of the square of the trace of an AG code}

In this paper, we aim to bound the dimension of the square of the dual of a SSAG code. From Delsarte's theorem, this is equivalent to study the square of the trace of the corresponding AG code. 
This is possible thanks to the following result from \cite{MT21}, which is valid for any linear code.

\begin{proposition}[{\cite[Proposition~15]{MT21}}] \label{prop:Tr_BoundSchurSquare}
 Let $\calC$ be a linear code over $\fqm$. Then we have 
 \begin{equation} \label{eq:key_equation} \Tr{\calC}^{\star2} := ((\calC^{\perp}|_{_{{\mathbb{F}_q}}})^{\perp})^{\star2} \subseteq \sum\limits_{i=0}^{\lfloor{m/2} \rfloor} \Tr{\calC\star \calC^{q^i}},
 \end{equation}
Moreover, if $m$ is even, 
 \begin{equation} \label{eq:dim_m/2} \dim_{\fq} \Tr{\calC\star \calC^{q^{\frac{m}{2}}}} \leq \frac{m}{2} (\dim_{\fqm}\calC)^2.
\end{equation}
\end{proposition}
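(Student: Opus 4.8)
## Proof strategy for Proposition~\ref{prop:Tr_BoundSchurSquare}

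The plan is to prove the two claims separately, starting with the inclusion \eqref{eq:key_equation}. First I would unwind the definitions on the left-hand side. By Delsarte's theorem (Theorem~\ref{th:delsarte}), $\Tr{\calC} = (\calC^\perp \cap \fq^n)^\perp$, so the notation $\Tr{\calC}^{\star2}$ is unambiguous and equals $\Tr{\calC}\star\Tr{\calC}$. It therefore suffices to show that for any two codewords $\mathbf{c},\mathbf{c}' \in \calC$, the vector $\Tr{\mathbf{c}}\star\Tr{\mathbf{c}'}$ lies in $\sum_{i=0}^{\lfloor m/2\rfloor}\Tr{\calC\star\calC^{q^i}}$, since such products span $\Tr{\calC}^{\star2}$ over $\fq$. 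The key computation is the expansion
\[
\Tr{\mathbf{c}}\star\Tr{\mathbf{c}'} = \Bigl(\sum_{j=0}^{m-1}\mathbf{c}^{q^j}\Bigr)\star\Bigl(\sum_{k=0}^{m-1}(\mathbf{c}')^{q^k}\Bigr) = \sum_{j=0}^{m-1}\sum_{k=0}^{m-1}\mathbf{c}^{q^j}\star(\mathbf{c}')^{q^k},
\]
where $\mathbf{c}^{q^j}$ denotes the coordinatewise $q^j$-th power. Writing $k = j + i \pmod m$ and grouping the double sum by the value of $i$, each inner sum over $j$ is exactly $\Tr{\mathbf{c}\star(\mathbf{c}')^{q^i}}$ (applying Frobenius $j$ times to $\mathbf{c}\star(\mathbf{c}')^{q^i}$ and summing over a full period gives the trace, using that $\mathbf{c}\star(\mathbf{c}')^{q^i} \in \calC\star\calC^{q^i}$). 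So $\Tr{\mathbf{c}}\star\Tr{\mathbf{c}'} = \sum_{i=0}^{m-1}\Tr{\mathbf{c}\star(\mathbf{c}')^{q^i}} \in \sum_{i=0}^{m-1}\Tr{\calC\star\calC^{q^i}}$.

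The remaining point for \eqref{eq:key_equation} is to collapse the range $0\le i\le m-1$ down to $0\le i\le \lfloor m/2\rfloor$. Here I would use that $\calC\star\calC^{q^i}$ and $\calC\star\calC^{q^{m-i}}$ are Frobenius conjugates of one another: applying the coordinatewise $q^i$-th power map to $\calC\star\calC^{q^{m-i}}$ (an $\fqm$-semilinear bijection, since raising to the $q^i$-th power permutes $\fqm$) sends it to $\calC^{q^i}\star\calC = \calC\star\calC^{q^i}$. Since the trace is invariant under Frobenius, $\Tr{\calC\star\calC^{q^{m-i}}} = \Tr{\calC\star\calC^{q^i}}$ as $\fq$-codes. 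Pairing $i$ with $m-i$ for $i = 1,\dots,\lfloor(m-1)/2\rfloor$ (and keeping $i=0$, plus $i=m/2$ when $m$ is even) shows every term with $i > \lfloor m/2\rfloor$ duplicates one already present, which yields \eqref{eq:key_equation}.

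For the bound \eqref{eq:dim_m/2} when $m$ is even, set $i = m/2$ and let $k := \dim_{\fqm}\calC$. The code $\calC^{q^{m/2}}$ is the image of $\calC$ under the coordinatewise $q^{m/2}$-th power, hence also has $\fqm$-dimension $k$. By part~(1) of Lemma~\ref{lem:known_bounds}, $\dim_{\fqm}(\calC\star\calC^{q^{m/2}}) \le k^2$ (the $n$ in that bound is harmless here). Then \eqref{eq:dim_trace} applied to the code $\calC\star\calC^{q^{m/2}}$ gives $\dim_{\fq}\Tr{\calC\star\calC^{q^{m/2}}} \le m\dim_{\fqm}(\calC\star\calC^{q^{m/2}}) \le \frac{m}{2}\cdot 2k^2$; but the sharper constant $\frac{m}{2}k^2$ comes from observing that $\calC\star\calC^{q^{m/2}}$ is stable under the order-two Frobenius $x\mapsto x^{q^{m/2}}$ (it swaps the two factors), so it is defined over the subfield $\F_{q^{m/2}}$, i.e.\ it has an $\F_{q^{m/2}}$-structure of the same dimension $\le k^2$, and tracing down from $\F_{q^{m/2}}$ to $\fq$ only costs a factor $m/2$. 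I expect this last point — verifying that the Frobenius-stability really descends the code to $\F_{q^{m/2}}$ and that the trace factors through it — to be the only genuinely delicate step; the rest is bookkeeping with Frobenius orbits.
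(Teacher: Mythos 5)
Your proof is correct and follows essentially the same route as the original argument in \cite{MT21} (which this paper cites without reproving): expand $\Tr{\mathbf{c}}\star\Tr{\mathbf{c}'}$ over Frobenius orbits, fold the terms $i$ and $m-i$ using invariance of the trace under $x\mapsto x^{q^i}$, and for $i=m/2$ use stability of $\calC\star\calC^{q^{m/2}}$ under the order-two Frobenius to descend it to $\F_{q^{m/2}}$, so that tracing down costs only a factor $m/2$. The Galois-descent step you flag as delicate is indeed the crux, and it is exactly how the factor $\frac{m}{2}$ (rather than $m$) is obtained in the original proof.
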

From this result, we can deduce an upper bound for the dimension of the square of the dual of any linear code.

\begin{coro} [{\cite[Corollary~16]{MT21}}]\label{coro:first_bound_square_of_trace}
 Let $\calC$ be any $\fqm$-linear code. Then 
 \begin{equation} \label{eq:mumford_bound}
  \dim_{\fq}\Tr{\calC}^{\star2} \leq m \cdot \dim_{\fqm}(\calC^{\star 2}) + \binom{m}{2} (\dim_{\fqm}(\calC))^2.
 \end{equation}
 Furthermore, if $\dim_{\fq} \Tr{\calC} = m \cdot \dim_{\fqm}(\calC)$, then 
 \[\dim_{\fq} \Tr{\calC}^{\star2} - \binom{\dim_{\fq} \Tr{\calC}+1}{2} \leq m \left( \dim_{\fqm} \calC^{\star 2} - \binom{\dim_{\fqm} (\calC)+1}{2}\right).\]

\end{coro}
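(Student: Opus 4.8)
The plan is to feed the inclusion of Proposition~\ref{prop:Tr_BoundSchurSquare} into the elementary dimension estimates \eqref{eq:dim_trace} and Lemma~\ref{lem:known_bounds}, and then to reduce the refined statement to a binomial identity.

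For the first inequality, I would start from \eqref{eq:key_equation}, which gives
\[\Tr{\calC}^{\star 2} \subseteq \sum_{i=0}^{\lfloor m/2 \rfloor} \Tr{\calC \star \calC^{q^i}},\]
so that $\dim_{\fq} \Tr{\calC}^{\star 2} \leq \sum_{i=0}^{\lfloor m/2\rfloor} \dim_{\fq}\Tr{\calC \star \calC^{q^i}}$, and then bound each summand. Since the coordinatewise $q^i$-th power map is an $\fq$-linear bijection of $\fqm^n$ that carries $\fqm$-subspaces to $\fqm$-subspaces, one has $\dim_{\fqm}\calC^{q^i} = \dim_{\fqm}\calC$; combining Lemma~\ref{lem:known_bounds}(1) with \eqref{eq:dim_trace} yields, for $i \geq 1$,
\[\dim_{\fq}\Tr{\calC \star \calC^{q^i}} \leq m\, \dim_{\fqm}(\calC \star \calC^{q^i}) \leq m\,(\dim_{\fqm}\calC)^2,\]
while the term $i=0$ satisfies $\dim_{\fq}\Tr{\calC^{\star 2}} \leq m\,\dim_{\fqm}\calC^{\star 2}$ by \eqref{eq:dim_trace}. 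When $m$ is even, the exceptional term $i = m/2$ obeys the sharper bound $\tfrac m2 (\dim_{\fqm}\calC)^2$ from \eqref{eq:dim_m/2}. A short count of the terms with $i \geq 1$ — there are $(m-1)/2$ of them when $m$ is odd, and $m/2$ when $m$ is even, one of which is the exceptional one — shows that in both cases their contributions add up to at most $\binom m2 (\dim_{\fqm}\calC)^2$; adding the $i=0$ term gives \eqref{eq:mumford_bound}.

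For the second claim, I would set $k = \dim_{\fqm}\calC$ and use the hypothesis $\dim_{\fq}\Tr{\calC} = mk$. The key point is the identity
\[\binom{mk+1}{2} = m\binom{k+1}{2} + \binom m2 k^2,\]
which one checks by expanding both sides. Subtracting $\binom{mk+1}{2}$ from both sides of \eqref{eq:mumford_bound} and using this identity to rewrite $\binom m2 k^2 - \binom{mk+1}{2}$ as $-\,m\binom{k+1}{2}$ yields precisely the stated inequality.

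The only thing requiring care is the bookkeeping of the indices: the term $i=0$ must be kept apart, since it contributes $\dim_{\fqm}\calC^{\star 2}$ rather than $(\dim_{\fqm}\calC)^2$, and the even-$m$ correction from \eqref{eq:dim_m/2} is exactly what is needed so that the terms with $i \geq 1$ sum to $\binom m2(\dim_{\fqm}\calC)^2$ and not to something strictly larger. No analytic difficulty arises; all the substance is already packaged in Proposition~\ref{prop:Tr_BoundSchurSquare}, which is assumed.
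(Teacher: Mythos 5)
Your proposal is correct and follows essentially the same route as the source: the paper does not reprove this corollary but cites \cite[Corollary~16]{MT21}, whose argument is exactly the one you give, namely summing the trace-dimension bounds over the decomposition of Proposition~\ref{prop:Tr_BoundSchurSquare} (with the halved bound \eqref{eq:dim_m/2} absorbing the extra term when $m$ is even) and then using the identity $\binom{mk+1}{2} = m\binom{k+1}{2} + \binom{m}{2}k^2$ for the refined statement. Your index bookkeeping and the binomial identity both check out, so there is nothing to correct.
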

The above corollary implies that if the dimension of a square code is smaller than that of a random code, namely
\[ \dim_{\fqm} (\calC^{\star 2}) < \binom{\dim_{\fqm} (\calC)+1}{2},\]
then this property is retained for the trace code, \emph{i.e.}
\[\dim_{\fq} \Tr{\calC}^{\star 2} < \binom{\dim_{\fq} \Tr{\calC}+1}{2}.\]
This is especially true for Reed--Solomon codes (see \cite{MT21}, Proposition 11) and more generally for AG codes.

\begin{proposition} [{\cite[Theorem~6]{Mum70}}] \label{prop:mumford_result}
 Let $F,G$ be two divisors on $\calX$ such that $\deg(G) \geq 2\mathfrak{g}+1$ and $\deg(F) \geq 2\mathfrak{g}$. Then
 \[ \calL(F) \cdot \calL(G) = \calL(F+G),\]
 where $\calL(F) \cdot \calL(G) := \Span{ f \cdot g : (f,g) \in \calL(F) \times \calL(G)}$.
\end{proposition}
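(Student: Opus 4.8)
\textbf{Proof strategy for Proposition~\ref{prop:mumford_result}.} The plan is to reduce the statement to a cohomological vanishing assertion and then invoke Serre duality, following Mumford's classical argument. The inclusion $\calL(F) \cdot \calL(G) \subseteq \calL(F+G)$ is immediate from the definition of principal divisors, since $(fg) = (f) + (g)$ and hence $(fg) + F + G = ((f)+F) + ((g)+G) \geq 0$ whenever $f \in \calL(F)$ and $g \in \calL(G)$. So the entire content is the reverse inclusion, \emph{i.e.} that every function in $\calL(F+G)$ is an $\fqm$-linear combination of products $f \cdot g$.

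First I would set up the exact sequence of sheaves on $\calX$ that compares the multiplication map to a twist. Concretely, fix a point $P$ in the support of neither $F$ nor $G$ (or more cleanly, argue by induction on the degree of a chosen effective reduction), and consider the exact sequence
\[
0 \longrightarrow \calO_\calX(G - P) \longrightarrow \calO_\calX(G) \longrightarrow \calO_P \longrightarrow 0,
\]
tensored appropriately. The idea is to show, by descending induction on $\deg F$ (base case handled by Riemann--Roch once the degree is large), that the multiplication map
\[
H^0(\calX, \calO_\calX(F)) \otimes H^0(\calX, \calO_\calX(G)) \longrightarrow H^0(\calX, \calO_\calX(F+G))
\]
is surjective. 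At each inductive step the cokernel is controlled by an $H^1$ term, and the key vanishing $H^1(\calX, \calO_\calX(F + G - P)) = 0$ holds because $\deg(F+G-P) = \deg F + \deg G - 1 \geq (2\mathfrak{g}) + (2\mathfrak{g}+1) - 1 > 2\mathfrak{g} - 2$, so Serre duality gives $H^1 = H^0(\calX, \calO_\calX(W - F - G + P))^\vee = 0$ since the latter divisor has negative degree.

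The main obstacle — and the place where the precise numerical hypotheses $\deg G \geq 2\mathfrak{g}+1$ and $\deg F \geq 2\mathfrak{g}$ are consumed — is ensuring that the induction stays within the range where all the relevant $H^1$ groups vanish simultaneously, so that the snake/long-exact-sequence bookkeeping produces surjectivity at every stage rather than just eventually. One clean way to organize this is the "base-point-free pencil trick": choosing a two-dimensional subspace $\langle s_0, s_1 \rangle \subseteq H^0(\calO_\calX(G))$ with no common zero (which exists because $\deg G \geq 2\mathfrak{g}+1$ makes $\calO_\calX(G)$ very ample, or at least base-point free after the reduction), one gets an exact sequence
\[
0 \longrightarrow \calO_\calX(F - G) \longrightarrow \calO_\calX(F)^{\oplus 2} \longrightarrow \calO_\calX(F+G) \longrightarrow 0,
\]
and the cokernel of the multiplication map injects into $H^1(\calX, \calO_\calX(F-G))$. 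Then Serre duality identifies this with $H^0(\calX, \calO_\calX(W - F + G))^\vee$; one checks this vanishes because $\deg(W - F + G) $ need not be negative, so instead one applies the trick with the roles arranged so that the relevant twist is $\calO_\calX(F - G)$ only after first using $\deg F \geq 2\mathfrak{g}$ to guarantee $\calO_\calX(F)$ is itself generated by global sections. I would then cite Mumford's Theorem~6 directly for the delicate step and merely record here that the degree hypotheses are exactly what make the pencil trick applicable, which suffices for our purposes since we only use Proposition~\ref{prop:mumford_result} as a black box.
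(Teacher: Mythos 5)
The paper does not actually prove this proposition: it is imported as a black box from \cite{Mum70}, so there is no internal argument to compare yours against. Your verification of the easy inclusion $\calL(F)\cdot\calL(G)\subseteq\calL(F+G)$ is correct and complete, and your closing decision to cite Mumford's Theorem~6 for the hard direction is exactly what the authors do.

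That said, the sketch you wrap around the citation should not be read as an almost-complete proof, because its key step fails as stated. Applying the base-point-free pencil trick to a pencil $\langle s_0,s_1\rangle\subseteq H^0(\calX,\calO_\calX(G))$ bounds the cokernel of the multiplication map by $h^1(\calX,\calO_\calX(F-G))$, and this does not vanish under the stated hypotheses: taking $\deg F=2\mathfrak{g}$ and $\deg G=2\mathfrak{g}+1$ gives $\deg(F-G)=-1$, whence $h^1(F-G)=\mathfrak{g}$ by Riemann--Roch, so the bound is vacuous whenever $\mathfrak{g}>0$. You flag this difficulty yourself, but the proposed repair of ``arranging the roles'' of $F$ and $G$ does not help in this balanced-degree range either, since then $h^1(G-F)=h^0(W-G+F)$ with $\deg(W-G+F)=2\mathfrak{g}-3\geq 0$, which can be nonzero. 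Likewise, your first exact sequence together with the vanishing $H^1(F+G-P)=0$ only shows that $P$ is not a base point of $|F+G|$; it does not set up any induction establishing surjectivity of $H^0(F)\otimes H^0(G)\to H^0(F+G)$. The genuine content of Mumford's theorem is precisely the step you skip: one passes to the base-point-free pencil $|G-D|$ for a \emph{general} effective divisor $D$ of degree $\deg G-\mathfrak{g}-1$, applies the pencil trick there (the relevant twist $F-G+D$ then has degree $\deg F-\mathfrak{g}-1\geq\mathfrak{g}-1$, and its $H^1$ vanishes only by a general-position argument), and finally recovers $H^0(F+G)$ from the image together with $H^0(F+G-D)$. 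Since the result is only used here as a black box, citing \cite{Mum70} without the surrounding sketch would be cleaner than including an argument that does not close.
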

As a consequence, for an AG code  $C_{\calL}(\calX,\mathcal{P},G)$, we have
\[ C_{\calL}(\calX,\mathcal{P},G)^{\star2} \subseteq C_{\calL}(\calX,\calP,2G).\]
with equality if $\deg(G) \geq 2\mathfrak{g}+1$.
If $\deg G \geq \mathfrak{g}$, applying the Riemann--Roch theorem to the divisors $G$ and $2G$ thus gives
\begin{equation}\label{eq:dim_square}
	\dim_{\fqm}(C_{\calL}(\calX,\mathcal{P},G)^{\star2}) \leq \dim_{\fqm} C_{\calL}(\calX,\calP,2G) = 2\deg(G)+1-\mathfrak{g} \leq \deg(G) + \dim_{\fqm}(C_{\calL}(\calX,\mathcal{P},G)),
\end{equation}
which is much smaller than the expected dimension given in Lemma \ref{lem:known_bounds} (\ref{it:dim_C^2}). (thus providing a distinguisher for AG codes). Combined with Equation \eqref{eq:key_equation}, we get a first upper bound on the dimension of the square of the trace of an AG code.

\begin{coro} \label{coro:1st_bound_mumford}
 Let $\mathcal{C} := C_{\calL}(\calX,\mathcal{P},G)$ be a $k$--dimensional AG code on $\calX$ associated with a degree $s \geq \mathfrak{g}$ divisor. Then
 \[ \dim_{\fq}\Tr{\calC}^{\star2} := \dim_{\fq} \left((\calC^\perp|_{\fq})^{\perp}\right)^{\star2}  \leq \binom{mk+1}{2} - \dfrac{m}{2} (k(k-1)-2s).\]
\end{coro}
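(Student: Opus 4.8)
The plan is to combine the two ingredients already assembled in the excerpt: the general trace-square containment of Proposition~\ref{prop:Tr_BoundSchurSquare} (specifically Corollary~\ref{coro:first_bound_square_of_trace}) and the Mumford-type bound \eqref{eq:dim_square} on the dimension of the square of an AG code. First I would record that $\calC = C_{\calL}(\calX,\calP,G)$ has dimension $k = \ell(G)$ and, since $\deg G = s \geq \mathfrak{g}$, that the Riemann--Roch theorem gives $k = \ell(G) \geq s + 1 - \mathfrak{g}$. Then I would invoke \eqref{eq:dim_square}, which states that $\dim_{\fqm}(\calC^{\star2}) \leq 2s + 1 - \mathfrak{g}$ whenever $\deg G \geq \mathfrak{g}$; rewriting this in the form $\dim_{\fqm}(\calC^{\star2}) \leq s + k$ using $\ell(2G) = 2s+1-\mathfrak{g}$ and $\ell(G) \geq s+1-\mathfrak{g}$ is the shape I want to feed into the next step.

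Next, I would apply the first inequality of Corollary~\ref{coro:first_bound_square_of_trace}, namely
\[
 \dim_{\fq}\Tr{\calC}^{\star2} \leq m \cdot \dim_{\fqm}(\calC^{\star 2}) + \binom{m}{2}(\dim_{\fqm}\calC)^2 = m \cdot \dim_{\fqm}(\calC^{\star2}) + \binom{m}{2} k^2.
\]
Substituting the bound $\dim_{\fqm}(\calC^{\star2}) \leq 2s+1-\mathfrak{g}$ and using $\mathfrak{g} \geq s - k + 1$, i.e. $2s+1-\mathfrak{g} \leq s + k$, yields
\[
 \dim_{\fq}\Tr{\calC}^{\star2} \leq m(s+k) + \binom{m}{2} k^2.
\]
The remaining work is purely algebraic: I must check that $m(s+k) + \binom{m}{2}k^2 = \binom{mk+1}{2} - \frac{m}{2}\big(k(k-1) - 2s\big)$. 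Expanding the right-hand side, $\binom{mk+1}{2} = \frac{mk(mk+1)}{2} = \frac{m^2k^2 + mk}{2}$, and $\frac{m}{2}(k(k-1)-2s) = \frac{m k^2 - mk - 2ms}{2}$, so the difference is $\frac{m^2 k^2 + mk - mk^2 + mk + 2ms}{2} = \frac{m^2k^2 - mk^2}{2} + mk + ms = \binom{m}{2}k^2 + m(k+s)$, which matches exactly. So the two expressions agree and the corollary follows.

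I would present this as a short proof: state $k = \ell(G)$, cite \eqref{eq:dim_square} for $\dim_{\fqm}\calC^{\star2} \leq 2s+1-\mathfrak{g} \leq s+k$ (the last inequality being Riemann--Roch $\ell(G) \geq s+1-\mathfrak{g}$), plug into \eqref{eq:mumford_bound}, and finish with the binomial identity above. There is essentially no obstacle here — the only point requiring a little care is making sure the hypothesis $s \geq \mathfrak{g}$ is exactly what is needed to invoke both \eqref{eq:dim_square} (which needs $\deg G \geq \mathfrak{g}$ for the Riemann--Roch equality on $2G$ and the inequality on $G$) and to guarantee $2s+1-\mathfrak{g} \leq s+k$; one should double-check whether the cleaner bound $\dim_{\fqm}\calC^{\star 2} \leq s + k$ or the raw $2s+1-\mathfrak{g}$ is the quantity that makes the stated formula tight, and phrase the chain of inequalities accordingly. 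Since the final formula is stated with $s$ and $k$ only (no $\mathfrak{g}$), I will use $\dim_{\fqm}\calC^{\star2} \leq s + \dim_{\fqm}\calC = s+k$ as the key intermediate bound, exactly as displayed on the right end of \eqref{eq:dim_square}.
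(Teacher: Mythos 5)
Your proposal is correct and follows exactly the paper's own argument: bound $\dim_{\fqm}\calC^{\star2}\leq 2s+1-\mathfrak{g}\leq s+k$ via Equation \eqref{eq:dim_square}, substitute into Equation \eqref{eq:mumford_bound}, and conclude by the algebraic identity $m(s+k)+\binom{m}{2}k^2=\binom{mk+1}{2}-\frac{m}{2}\bigl(k(k-1)-2s\bigr)$. Your explicit verification of that identity is, if anything, slightly cleaner than the paper's chain of rewritings.
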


\begin{proof}
 From Proposition \ref{prop:mumford_result} and Equation \eqref{eq:dim_square}, we have $\dim_{\fqm}(\calC)^{\star2} \leq 2s+1-g \leq k+s$. Thus, Equation \eqref{eq:mumford_bound} leads to
 \begin{align*}
  \dim_{\fq}\Tr{\calC}^{\star2} &\leq m(k+s) + \binom{m}{2}k^2 \\
  &\leq (2k+2s+mk^2-k^2) \dfrac{m}{2} \\
  &\leq (k(mk+1)-k^2+k+2s) \dfrac{m}{2} \\
  &\leq \binom{mk+1}{2} - \dfrac{m}{2}(k(k-1)-2s) .
 \end{align*}
\end{proof}

According to the above corollary, the dimension of the square of the dual of a SSAG code is less than the expected value for random codes (which is $\binom{mk+1}{2}$), due to the algebraic structure of AG codes. However, this bound does not fully benefit from this rich structure, notably the following property.

\begin{lemma}\label{lem:Schur-Product-Power}
	 Let $\mathcal{C} := C_{\calL}(\calX,\mathcal{P},G)$ be a $k$--dimensional AG code on $\calX$. For every $i \geq 0$, we have
	 \[\calC \star \calC^{q^i} \subseteq C_{\calL}(\calX,\mathcal{P},(q^i+1)G)\]
\end{lemma}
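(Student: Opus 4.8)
The plan is to understand what happens to the Riemann--Roch space under the combined operations of Frobenius and Schur product. First I would observe that raising a codeword to the $q^i$-th power corresponds, at the level of functions, to composing with the $q^i$-power map: for $f \in \calL(G)$, evaluating $f$ at the points of $\calP$ and then taking the $q^i$-power of each coordinate gives the same vector as evaluating the function $f^{q^i}$ at $\calP$, since $(f(P))^{q^i} = f^{q^i}(P)$ whenever $P$ is a rational point (so that the coordinates lie in $\fqm$ and the Frobenius acts coordinate-wise). Hence $\calC^{q^i}$ is exactly the evaluation code of the set of functions $\{f^{q^i} : f \in \calL(G)\}$ at $\calP$.

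Next I would control the pole divisor of $f^{q^i}$. For any $f$, the principal divisor satisfies $(f^{q^i}) = q^i (f)$, because the valuation $\nu_P$ is a group homomorphism on the function field and $\nu_P(f^{q^i}) = q^i \nu_P(f)$ at every place $P$. Therefore $f \in \calL(G)$, i.e. $(f) + G \geq 0$, implies $(f^{q^i}) + q^i G = q^i((f) + G) \geq 0$, so $f^{q^i} \in \calL(q^i G)$. Consequently $\calC^{q^i} \subseteq C_{\calL}(\calX,\calP,q^i G)$ as codes of length $n$.

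Then I would combine this with the behaviour of AG codes under the Schur product. We have $\calC = C_{\calL}(\calX,\calP,G) \subseteq C_{\calL}(\calX,\calP,G)$ trivially, and $\calC^{q^i} \subseteq C_{\calL}(\calX,\calP,q^i G)$ by the previous step; the Schur product of two evaluation codes is spanned by coordinate-wise products of evaluations, which are evaluations of products of functions. If $g \in \calL(G)$ and $h \in \calL(q^i G)$, then $(gh) + (q^i+1)G = ((g)+G) + ((h) + q^i G) \geq 0$, so $gh \in \calL((q^i+1)G)$. Taking spans over all such generators gives
\[
\calC \star \calC^{q^i} \subseteq C_{\calL}(\calX,\calP,G) \star C_{\calL}(\calX,\calP,q^i G) \subseteq C_{\calL}(\calX,\calP,(q^i+1)G),
\]
which is the claimed inclusion. (One can phrase the middle inclusion via Proposition~\ref{prop:mumford_result}, noting $\calL(G)\cdot\calL(q^iG) \subseteq \calL((q^i+1)G)$ always holds, with equality only under the degree hypotheses there; but only the inclusion is needed here.)

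I do not expect a serious obstacle in this argument; the only point requiring a little care is the identification $\calC^{q^i} = \{(f^{q^i}(P))_{P\in\calP} : f \in \calL(G)\}$, which relies on the fact that the points of $\calP$ are $\fqm$-rational so that the coordinate-wise Frobenius on $\fqm^n$ matches the Frobenius twist on functions; once that is granted, everything reduces to the elementary fact that $\nu_P$ scales by $q^i$ under the $q^i$-power map together with the additivity of principal divisors, which is exactly the ingredient already used implicitly in the discussion preceding Proposition~\ref{prop:mumford_result}.
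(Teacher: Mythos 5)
Your proof is correct and follows essentially the same route as the paper's: the decisive step is the divisor identity $(f_1 f_2^{q^i}) + (q^i+1)G = \left((f_1)+G\right) + q^i\left((f_2)+G\right) \geq 0$, which you merely reorganize by factoring through the intermediate inclusion $\calC^{q^i} \subseteq C_{\calL}(\calX,\calP,q^i G)$. Your added care in identifying the coordinate-wise $q^i$-th power of the code with the evaluations of $f^{q^i}$ at the $\fqm$-rational points of $\calP$ makes explicit a point the paper leaves implicit, but does not change the substance of the argument.
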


\begin{proof}
	Fix $i \geq 0$ and let $f_1,f_2 \in \calL(G)$. Then the product $f_1f_2^{q^i}$ belong to $\calL((q^i+1)G)$ as
	$$(f_1f_2^{q^i}) + (q^i+1)G = \left((f_1) + G\right) + q^i\left((f_2) + G\right) \geq 0.$$
	This proves the inclusion of spaces
	$$ \calL(G)\cdot \calL(G)^{q^i} \subseteq \calL((q^i+1)G),$$
	hence the inclusion of the associated codes.
\end{proof}

\begin{remark}
The property above for $i=0$ follows from Proposition \ref{prop:mumford_result}. To the best of our knowledge, there is no sufficient criterion for the equality for $i \geq 1$ in the literature. Given a basis $\set{f_1,\dots,f_k}$ of the Riemann--Roch space $\calL(G)$, the vector space $\calL(G)\cdot \calL(G)^{q^i}$ is spanned by the set $\set{f_u f_v^{q^i} \mid 1 \leq u \leq v \leq k}$. From our experiments, it may happen that the cardinality of this family is larger than $\ell((q^i+1)G)$ without the equality holding, which means that these generators may be linearly dependent in $\calL((q^i+1)G)$ and do not form a basis of $\calL(G)\cdot \calL(G)^{q^i}$.
\end{remark}

Thanks to Lemma \ref{lem:Schur-Product-Power}, it will be possible to better handle the terms $\Tr{\calC\star \calC^{q^i}}$ in Equation \eqref{eq:key_equation}. Our aim for the rest of the paper is to improve the bound of Corollary \ref {coro:1st_bound_mumford} in some specific cases.

\subsection{$C_{a,b}$ curves} \label{section:C_a,b_codes}
Throughout this paper, we will be dealing with algebraic geometry codes defined over a $C_{a,b}$ curve. This section is dedicated to some properties on this well--studied class of curves. For further details, we refer to \cite{Miu93}. 

\begin{definition} \label{def:C_ab_curves} 
Let $a,b$ be coprime positive integers. A $C_{a,b}$ curve over $\fqm$ is a curve $\calX_{a,b}$ having an irreducible, affine and non--singular plane model with equation
\begin{equation} \label{eq:equation_C_ab}
f_{a,b}(x,y) = \alpha_{0a}y^a + \alpha_{b0}x^b + \sum \alpha_{ij}x^iy^j = 0,
\end{equation}
where $f_{a,b} \in \fqm[X,Y]$ and the sum is taken over all couples $(i,j) \in \set{0,\cdots,b} \times \set{0,\cdots,a}$ such that $ai+bj < ab$.
\end{definition}
Any curve $\calX_{a,b}$ defined by an equation as in \eqref{eq:equation_C_ab} has a unique point at infinity, denoted by $P_{\infty}$. Moreover, as a plane curve, its genus is given by
\[\mathfrak{g}_{a,b}:=\dfrac{(a-1)(b-1)}{2}.\]

\subsubsection{The point at infinity $P_\infty$ and its regular ring $\calO_{P_\infty}$}

We will consider codes obtained by evaluating functions on $\calX_{a,b}$ which are regular everywhere, except maybe at the unique point at infinity $P_\infty$. These functions then belong to the ring 
\begin{equation}\label{eq:O_Pinf}
\mathcal{O}_{P_\infty}=\bigcup_{s \geq 0} \calL(s P_\infty)
\end{equation} %
where each Riemann--Roch space $\calL(s P_\infty)$ has an explicit basis as follows:
\begin{equation} \label{eq:basis_L(sP_inf)}
    \calL(sP_{\infty}) = \Span{x^iy^j \mid 0 \leq i, 0\leq j\leq a-1 \ \mathrm{and} \ ai+bj \leq s}.
\end{equation}

In summary, any function that is regular on all $\calX_{a,b}$ except maybe at $P_\infty$ is a bivariate polynomial in the functions $x$ and $y$. 

\begin{definition}[Weighted degree]
Given a monomial of the form $x^iy^j \in \calO_{P_\infty}$, we define its weighted degree by
\[ \degab{x^iy^j} := ai+bj.\]

From this degree, we can define a monomial order $\prec$ over $\calO_{P_\infty} \simeq \fqm[x,y]$ as follows: $x^uy^v \prec x^{u'}y^{v'}$ if
\begin{equation}\label{eq:def_mon_order}
\degab{x^uy^v} < \degab{x^{u'}y^{v'}} \text{ or } \left(\degab{x^uy^v} = \degab{x^{u'}y^{v'}}  \text{ and } u < u'\right).
\end{equation}

Any function $f \in \calO_{P_\infty}$ can be written in the form $$f = x^{\beta}y^{\alpha} + f'(x,y),$$
with $\alpha \leq a-1$ and $f' \in \calO_{P_\infty}$ such that any monomial $x^iy^j$ of $f'$ satisfies $ai+bj < \degab{x^{\beta}y^{\alpha}}$ and $j \leq a-1$. The leading term of $f$ with respect to the monomial order $\prec$ is thus defined by $\mathrm{LT}(f) := x^{\beta}y^{\alpha}$. This extends the definition of weighted degree to any such function by setting 
\[\degab{f} := \degab{\mathrm{LT}(f)}.\]
\end{definition}
It is easy to check that for any $f \in \calO_{P_\infty}$, its weighted degree $\degab{f}$ is equal to the smallest integer $s$ such that $f$ belongs to the Riemann--Roch space $\calL(sP_{\infty})$.
This way, any function in $\calL(sP_\infty)$ can be seen as a polynomial in $x$ and $y$ such that $\degab{f}\leq s$. 
\begin{remark}
It is worth noting that $\calO_{P_\infty}$ is a valuation ring, with valuation $v_{P_\infty}$. Then for every $f \in \calO_{P_\infty}$, we have $\degab{f}=-v_{P_\infty}(f)$. We prefer handling the weighted degree rather than the valuation due to its similarities with the degree of univariate 
polynomials. We will notably perform division, using Gr\"obner bases, and, as expected in the univariate case, the degree of the remainder is 
\textit{generally} smaller than the dividend's.
\end{remark}

\section{Goppa--like AG codes}\label{sec:Goppa}
\subsection{Definition, parameters and context in the literature}\label{subsec:def-Goppa}

Let $D$ be an effective divisor of positive degree $s$ on a smooth and irreducible projective curve $\calX$ over $\fqm$. Take a rational function $g \in \fqm(\calX)$ such that $g \notin \calL(D)$. Given a set of $\fqm$--points $\calP \in \calX(\fqm)$ such that $\calP \cap \Supp(g) = \varnothing$ and $\calP \cap \Supp(D) = \varnothing$, we consider the AG code
\[\calC := \calC_{\calL}(\calX,\calP,D+(g))=\set{\left(f(P)g(P)^{-1}\right)_{P \in \calP} \mid f \in \calL(D)}.\]
\begin{definition} \label{def:Goppa--like_AG_code}
The Goppa--like AG code associated to $\calC$ is defined as the subfield subcode of its dual code, \emph{i.e.}
\[ \Gamma(\calP,D,g) := \calC^{\perp}|_{\fq}.\]
\end{definition}%
Such a code has length $n = \# \calP$. As stated in \cite[Theorem~1]{JM96}, its dimension satisfies 
\[\dim_{\fq} \Gamma(\calP,D,g) \geq n-m \dim_{\fqm} \calC_{\calL}(\calX,\calP,D+(g)) =n-m(\deg D - \mathfrak{g} +1 )\]
if $2\mathfrak{g} - 2 < \deg D < n$. Its minimum distance is bounded from below by $\deg D - 2 \mathfrak{g} + 2$.

\begin{remark}
These estimations of the dimension and the minimum distance may be improved by Theorem~\ref{thm:dim_ssag} and \ref{thm:Wirtz}. Regarding the dimension, it is worth noting that, since $g \notin \calL(D)$, the divisor $G$ is not effective. Hence, any divisor $G_1$ satisfying the conditions of Equation \eqref{eq:divisor_G_1} is also non--effective, which means that
\[	\dim_{\fq}  \Gamma(\calP,D,g) \geq 	n-m\left(\ell(G) - \ell(G_1)\right). \]
Without additional conditions on the divisor $D$ and the function $g$, the divisor $G_U$ for $G=D+(g)$ defined in Theorem~\ref{thm:Wirtz} is zero. Generally, we cannot expect for a better bound for the minimum distance. 
\end{remark}

\subsubsection{Why the terminology \textit{Goppa--like}?}

In \cite{JM96}, Janwa and Moreno define \emph{Goppa codes} on smooth and irreducible projective curves. Compared to their definition, Definition~\ref{def:Goppa--like_AG_code} introduces a function $g$ which defines a multiplicator for the AG code over $\fqm$ that is algebraically related to the support $\calP$.

Introducing this function $g$ facilitates the use of SSAG as public keys for McEliece cryptosystem. Given an correcting error capability $t$, we can fix a divisor $D$ whose degree satisfies $\deg D \geq 2t + 2\mathfrak{g}+1$. Then the family of codes in which the public key is picked can be defined by running a family of functions $g$ outside $\calL(D)$.

\medskip

We prefer the terminology \emph{Goppa--like AG codes} instead of simply Goppa codes for two reasons. First, AG codes were historically called geometric Goppa codes. Our terminology involving both the words \textquote{Goppa} and \textquote{AG} removes this possible ambiguity. Second, we want to emphasize the use of a different curve than the projective line $\PP^1$, like we differentiate AG codes from Reed--Solomon codes.

This denomination is obviously motivated by the fact that here, the rational function $g$ plays the role of the Goppa polynomial. As described in \cite[Example~9.1.8]{Sti09}, Goppa codes are nothing but Goppa--like AG codes from the projective line $\calX=\PP^1$. In fact, given $r \geq 0$, the Generalized Reed--Solomon (GRS) code of degree $r$, support $\mathbf{x} \in \fqm^n$ and multiplier $\mathbf{y} \in (\fqm^*)^n$ is defined as
\[\GRS_r(\mathbf{x},\mathbf{y})=\{(y_1f(x_1),y_2f(x_2),\dots,y_nf(x_n)) \mid f \in \fqm[X] \text{ such that } \deg f < r \}.\]
Take a univariate polynomial $g$ of degree $r$ such that $g(x_i) \neq 0$ for every  $i \in \{1,\dots,n\}$. Then the Goppa code of order $r$ and support $\mathbf{x} \in \fqm^n$ is defined as
\[\Gamma_r(\mathbf{x},g)= \GRS_r(\mathbf{x},\mathbf{y})^\perp|_{\fq}\]
where $\mathbf{y}=(g(x_1)^{-1},g(x_2)^{-1},\dots,g(x_n)^{-1})$.
Represent the $\fqm$--points of $\PP^1$ by the couples $\PP^1(\fqm)=\{[1:x] \mid x \in \fqm\} \cup \{P_\infty\}$ for $P_\infty=[0:1]$. Take $\calP=\{[1:x_1],[1:x_2],\dots,[1:x_n]\}$ and $D=(r-1)P_\infty$. Finally, the polynomial $g$ can be seen as a function on $\PP^1$ which lies in $L(rP_\infty)$ but not in $L((r-1)P_\infty)$. Then both constructions match: $\Gamma_r(\mathbf{x},g)=\Gamma(\calP,D,g)$.

\subsubsection{Relation with Cartier codes.} Cartier codes \cite{Cou14} are also defined as a geometric realization of Goppa codes, since well--known properties of Goppa codes naturally extend to them. For instance, Theorem \ref{thm:Wirtz} holds for Cartier code without assumption on the degree of the divisor.

The link with Goppa--like AG codes is the following: by definition, a Cartier code is a subcode of the subfield subcode of a residue code (see \cite[Proposition 4.3]{Cou14}), which actually means that for the good choice of divisor, a Cartier code is a subcode of the corresponding Goppa--like AG code. Moreover, \cite[Theorem 5.1]{Cou14} provides a sufficient condition for both constructions to be equal. More precisely, let us consider a Goppa--like AG code $\Gamma(\calP,D,g)$  and set $G := D+(g)$. Then the Cartier code $\mathrm{Car}_q(\calP,G)$ (see \cite[Definition 4.2]{Cou14}) satisfies $\mathrm{Car}_q(\calP,G) \subseteq \Gamma(\calP,D,g)$, and 
$$ \dim_{\fq} \left( \Gamma(\calP,D,g)/ \mathrm{Car}_q(\calP,G)\right) \leq m \cdot i(G_1),$$
where $G_1$ is any divisor such that 
\begin{equation} \tag{\ref{eq:divisor_G_1}}
G \geq qG_1 \ \mathrm{and} \ G \geq G_1.
\end{equation}
 Above, $i(G_1)$ stands for the index of speciality of $G_1$ (see \cite[Definition~1.6.10]{Sti09}). By the Riemann--Roch theorem, if $\deg(G_1) > 2\mathfrak{g}-2$, then $i(G_1) =0.$ 
Thus, using Remark \ref{rk:G/q}, the Cartier code $\mathrm{Car}_q(\calP,G)$ coincides with the Goppa--like AG code $\Gamma(\calP,D,g)$ whenever $\deg\left[\dfrac{G}{q} \right] > 2\mathfrak{g}-2$. 

\subsection{On the dimension of the square of the dual of a Goppa--like AG code} \label{section:1st_improvement}
In this section, we aim to generalize the properties found by the authors of \cite{MT21} in Section 6, in the context of Goppa--like AG code. Let us consider the AG code 
$$\calC := \calC_{\calL}(\calX,\calP,D+(g))$$
as defined in Section \ref{def:Goppa--like_AG_code}. Applying Equation \eqref{eq:key_equation} yields   
\begin{equation} \label{eq:key_equation_Goppa--like} 
\Tr{\calC}^{\star 2} = (\Gamma(\calP,D,g)^{\perp})^{\star2} \subseteq \sum\limits_{i=0}^{\lfloor{m/2} \rfloor} \Tr{\calC\star \calC^{q^i}}.
\end{equation}

Below, we discuss how to improve the upper bound given in Corollary \ref{coro:1st_bound_mumford}, which is valid for all subfield subcodes of AG codes. The idea is to use the specific algebraic structure of our code inherited from the choice of its divisor.
\noindent In fact, notice that the code $\calC$ is monomially equivalent to $C_{\calL}(\calX,\calP,D)$. 
More precisely, using the equality $\calL(D+(g)) = g^{-1} \cdot \calL(D),$
we can easily deduce the following result.
%\begin{equation} \label{eq:equiv_divisors}
%\dim_{\fqm} \calL(D+(g))\cdot \calL(D+(g))^{q^i} = \dim_{\fqm} \calL(D)\cdot \calL(D)^{q^i},
%\end{equation}
%where $\calL(D+(g))\cdot \calL(D+(g))^{q^i}$ corresponds to the evaluation space of the product code $\calC \star \calC^{q^i}$. We can use the fact that $D$ is effective to estimate the dimension of the terms appearing in the sum of \eqref{eq:key_equation_Goppa--like}:

\begin{lemma} \label{lem:bound_dim_Tr(C*C^q^i)}
  Suppose $s \geq \mathfrak{g}$. Then for all $i \geq 0$, we have 
   $$\dim_{\fq} \Tr{\calC\star \calC^{q^i}} \leq m\left(s\left(q^i+1\right)+1-\mathfrak{g}\right).$$
\end{lemma}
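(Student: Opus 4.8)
The plan is to combine Lemma~\ref{lem:Schur-Product-Power} with the monomial equivalence between $\calC$ and $C_{\calL}(\calX,\calP,D)$, and then apply the Riemann--Roch theorem. First I would recall that since $\calL(D+(g)) = g^{-1}\cdot\calL(D)$, for any $f_1,f_2 \in \calL(D)$ and any $i \geq 0$ the product $(f_1 g^{-1})(f_2 g^{-1})^{q^i} = f_1 f_2^{q^i} g^{-(q^i+1)}$ lies in $g^{-(q^i+1)}\cdot\calL((q^i+1)D)$. Indeed, $(f_1 f_2^{q^i}) + (q^i+1)D = \left((f_1)+D\right) + q^i\left((f_2)+D\right) \geq 0$, so $f_1 f_2^{q^i} \in \calL((q^i+1)D)$, and multiplying through by $g^{-(q^i+1)}$ gives a function in $\calL\left((q^i+1)D + (q^i+1)(g)\right) = \calL\left((q^i+1)(D+(g))\right)$. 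Therefore
\[
\calC \star \calC^{q^i} \subseteq C_{\calL}\left(\calX,\calP,(q^i+1)(D+(g))\right),
\]
which is exactly the statement of Lemma~\ref{lem:Schur-Product-Power} applied to $G = D+(g)$.

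Next I would bound the dimension of the right-hand side. The divisor $(q^i+1)(D+(g))$ is linearly equivalent to $(q^i+1)D$, hence has degree $(q^i+1)s$. Since $s \geq \mathfrak{g}$, we have $(q^i+1)s \geq 2\mathfrak{g} > 2\mathfrak{g}-1$ (using $q^i+1 \geq 2$), so the Riemann--Roch theorem applies with equality and gives
\[
\dim_{\fqm} C_{\calL}\left(\calX,\calP,(q^i+1)(D+(g))\right) \leq \ell\left((q^i+1)(D+(g))\right) = (q^i+1)s + 1 - \mathfrak{g}.
\]
(One should note the edge case $\mathfrak{g}=0$ where $2\mathfrak{g}-1 = -1$ and the inequality is immediate; for $\mathfrak{g}\geq 1$ the bound $(q^i+1)s \geq 2s \geq 2\mathfrak{g} > 2\mathfrak{g}-1$ does the job.) Combining with the inclusion above yields $\dim_{\fqm}(\calC\star\calC^{q^i}) \leq (q^i+1)s + 1 - \mathfrak{g}$.

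Finally I would pass to the trace code: by Equation~\eqref{eq:dim_trace}, $\dim_{\fq}\Tr{\calC\star\calC^{q^i}} \leq m\cdot\dim_{\fqm}(\calC\star\calC^{q^i}) \leq m\left((q^i+1)s + 1 - \mathfrak{g}\right)$, which is the desired bound. I do not anticipate a serious obstacle here: the argument is essentially bookkeeping with divisors plus one invocation of Riemann--Roch. The only subtlety worth flagging is making sure the degree of $(q^i+1)(D+(g))$ really is at least $2\mathfrak{g}-1$ so that $\ell$ is computed exactly rather than merely bounded; this is where the hypothesis $s \geq \mathfrak{g}$ enters, and it is clean because $q^i+1 \geq 2$ for all $i \geq 0$. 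A secondary point is being careful that the length $n$ does not interfere — we are bounding $\dim_{\fqm}(\calC\star\calC^{q^i})$ by $\ell$ of a divisor and then by $m$ times that, and the $\min$ with $n$ in Equation~\eqref{eq:dim_trace} only makes the bound smaller, so it is harmless to drop it.
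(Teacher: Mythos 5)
Your proposal is correct and follows the same route as the paper: the inclusion from Lemma~\ref{lem:Schur-Product-Power}, the linear equivalence of $(q^i+1)(D+(g))$ with $(q^i+1)D$, the Riemann--Roch theorem, and the trace bound of Equation~\eqref{eq:dim_trace}. You are in fact slightly more careful than the paper's own proof, which misstates the degree of $(q^i+1)D$ as $s$ rather than $s(q^i+1)$ when invoking Riemann--Roch; your observation that $s(q^i+1)\geq 2s\geq 2\mathfrak{g}>2\mathfrak{g}-1$ is the correct justification.
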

\begin{proof}
From Lemma~\ref{lem:Schur-Product-Power}, we deduce that 
\[\dim_{\fqm} \calC\star \calC^{q^i} \leq \dim_{\fqm} \calL((q^i+1)G) = \dim_{\fqm} \calL((q^i+1)D) = s(q^i+1)+1-\mathfrak{g},\]
the last equality coming from the Riemann--Roch theorem (since $\deg (q^i+1)D =s \geq 2\mathfrak{g}-1$). The result follows from the usual upper bound on the dimension of the trace of a code. 
\end{proof}

\begin{remark}\label{rmk:improvements}
	At first glance, it seems that we could have benefited from Theorem~\ref{thm:dim_ssag} to get a sharper bound in the previous lemma. Indeed, for every $i \geq 1$, we have 
	\[\left[\frac{(q^i+1)G}{q}\right]=q^{i-1}G^+ - (q^i+1)G^-,\]
	writing $G=G^+-G^-$ with $G^+, \: G^- \geq 0$. However, in the context of Goppa--like codes, we have $G=D+(g)$ where $g \notin \calL(D)$, hence $G^- \neq 0$. Without further hypotheses on the divisor $D$ and the function $g$, the degree of the divisor $\left[\frac{(q^i+1)G}{q}\right]$ may be too low to bound the dimension of its Riemman--Roch space from below via the Riemann--Roch theorem.
\end{remark}

This simple lemma yields an upper bound on the dimension of the square of the dual of Goppa--like codes.
\begin{proposition} \label{prop:bound_dim_using_inclusions}
    Let $\calC := \calC_{\calL}(\calX,\calP,D+(g))$ be an AG code as above, and suppose $s \geq \mathfrak{g}$. Set $k := \dim_{\fqm}\calC$ and  $e := \min\left(\left\lfloor \frac{m}{2} \right\rfloor,\left\lfloor \log_q\left(\frac{k^2}{s}\right)\right\rfloor\right)$. Then
    $$\dim_{\fq} (\Gamma(\calP,D,g)^{\perp})^{\star 2} \leq \binom{mk+1}{2} - \dfrac{m}{2}\left(k(k-1)(2e+1)-2s\left(\dfrac{q^{e+1}-1}{q-1}\right)\right).$$
\end{proposition}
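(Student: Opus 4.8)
The plan is to start from the key inclusion \eqref{eq:key_equation_Goppa--like}, namely
\[
(\Gamma(\calP,D,g)^{\perp})^{\star 2} \subseteq \sum_{i=0}^{\lfloor m/2\rfloor} \Tr{\calC \star \calC^{q^i}},
\]
and bound the dimension of the right--hand side by summing the dimensions of the summands. For each $i$, I have two upper bounds available for $\dim_{\fq}\Tr{\calC \star \calC^{q^i}}$: the generic one from Lemma~\ref{lem:known_bounds} together with \eqref{eq:dim_trace}, which gives $\min(n, m k^2)$ (using $\dim_{\fqm}\calC\star\calC^{q^i}\le k^2$), and the structural one from Lemma~\ref{lem:bound_dim_Tr(C*C^q^i)}, which gives $m(s(q^i+1)+1-\mathfrak g)$. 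The structural bound is the better one precisely when $s(q^i+1)+1-\mathfrak g \le k^2$, and since $k = s+1-\mathfrak g$ (because $s\ge\mathfrak g$, indeed $s\ge 2\mathfrak g-2$ is implicit here so Riemann--Roch is tight), one checks this happens roughly when $q^i \le k^2/s$, i.e. for $i \le \lfloor \log_q(k^2/s)\rfloor$. This motivates the cutoff $e := \min(\lfloor m/2\rfloor, \lfloor \log_q(k^2/s)\rfloor)$: I use the structural bound for $0\le i\le e$ and the crude bound $mk^2$ (or rather the term $\binom{mk+1}{2}$ bookkeeping) for the remaining indices.

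The next step is the arithmetic. Splitting the sum,
\[
\dim_{\fq}(\Gamma(\calP,D,g)^{\perp})^{\star 2} \le \sum_{i=0}^{e} m\bigl(s(q^i+1)+1-\mathfrak g\bigr) + \sum_{i=e+1}^{\lfloor m/2\rfloor} m k^2,
\]
wait — more carefully, I should bound the whole thing against $\binom{mk+1}{2}$ by comparing term by term with the corresponding decomposition of the ``random'' dimension. Following the computation in Corollary~\ref{coro:1st_bound_mumford}, the cleanest route is: the generic bound of Corollary~\ref{coro:1st_bound_mumford} already gives $\binom{mk+1}{2} - \frac m2(k(k-1)-2s)$, which corresponds to using the structural bound \emph{only for $i=0$}. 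Using it additionally for each $i=1,\dots,e$ replaces, for those indices, the contribution $mk^2$ (the $i$-th slice of $\binom{m}{2}k^2$, roughly) by the smaller $m(s(q^i+1)+1-\mathfrak g) \le m(s q^i + k)$. Each such replacement saves at least $m(k^2 - sq^i - k) = m(k(k-1) - sq^i) \ge \tfrac m2(k(k-1) - sq^i) + \tfrac m2(k(k-1)-sq^i)$; accumulating and using $\sum_{i=1}^{e} q^i = \frac{q^{e+1}-q}{q-1}$ together with the already-counted $i=0$ term $\frac m2(k(k-1)-2s)$ — here $2s$ versus $q^0\cdot s$ accounts for the $\star^2$ versus $\star q^i$ asymmetry — one collects
\[
\frac m2\Bigl(k(k-1)(2e+1) - 2s\,\frac{q^{e+1}-1}{q-1}\Bigr),
\]
which is exactly the claimed correction term. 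The geometric sum $1 + 2\sum_{i=1}^{e} q^i$ has to be reconciled with $2\frac{q^{e+1}-1}{q-1} = 2(1+q+\cdots+q^e)$; the factor $(2e+1)$ on $k(k-1)$ comes from $1$ (the $i=0$, $\star^2$ contribution) plus $2$ for each of $i=1,\dots,e$ (since each term $\Tr{\calC\star\calC^{q^i}}$ is, in the ``$\binom{m}{2}k^2$'' accounting of \eqref{eq:mumford_bound}, counted with multiplicity two when $i\le m/2$).

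I expect the main obstacle to be purely bookkeeping: tracking exactly how the $\binom{mk+1}{2}$ and $\binom{m}{2}k^2$ terms from \eqref{eq:mumford_bound} decompose into per-$i$ slices, making sure the coefficient of $k(k-1)$ comes out to $(2e+1)$ and not $(2e+2)$ or $2e$, and that the $i=0$ slice carries the $2s$ while the $i\ge 1$ slices carry $sq^i$ (total $2\frac{q^{e+1}-1}{q-1}\cdot s$, matching $2s + 2s(q+\cdots+q^e)$). A secondary point needing care is the edge case $e = \lfloor m/2\rfloor$ (so no indices are left for the crude bound) versus $e < \lfloor m/2\rfloor$, and checking that the correction term is nonnegative — equivalently that the bound genuinely improves on Corollary~\ref{coro:1st_bound_mumford} — which follows from $sq^i \le k^2$ for $i\le e$, i.e. from the very definition of $e$. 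Once the indices are aligned correctly, the inequality reduces to a finite sum of the elementary estimates already furnished by Lemma~\ref{lem:bound_dim_Tr(C*C^q^i)} and Lemma~\ref{lem:known_bounds}.
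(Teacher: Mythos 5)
Your proposal follows the paper's proof essentially verbatim: split the sum in \eqref{eq:key_equation_Goppa--like} at the index $e$, apply Lemma~\ref{lem:bound_dim_Tr(C*C^q^i)} (together with $s(q^i+1)+1-\mathfrak{g}\le sq^i+k$) for $i\le e$ and the crude $mk^2$ (resp.\ $\tfrac{m}{2}k^2$ for $i=m/2$) for $i>e$, then pick $e$ by making the discrete increment $k^2-sq^{e+1}$ of the correction term nonpositive; your ``savings'' bookkeeping reproduces exactly the paper's algebra and yields the stated formula. The edge case you flag ($e=\lfloor m/2\rfloor$ with $m$ even, where the last slice is only $\tfrac{m}{2}k^2$) is treated no more carefully in the paper itself, so nothing is missing relative to its argument.
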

\begin{proof}
 For any $e \in \set{0,\dots,\lfloor \frac{m}{2} \rfloor}$, Proposition \ref{prop:Tr_BoundSchurSquare} implies that
    \begin{align*}
        \dim_{\fq}(\Gamma(\calP,D,g)^{\perp})^{\star 2}
        & \leq \sum\limits_{i=0}^{\lfloor m/2 \rfloor} \dim_{\fq} \Tr{\calC \star \calC^{q^i}} &\\
        & \leq \sum\limits_{i=0}^{e} m(s(q^i+1)+1-\mathfrak{g})  + \sum\limits_{i=e+1}^{\lfloor m/2 \rfloor} \Tr{\calC \star \calC^{q^i}} & \text{(by Lemma \ref{lem:bound_dim_Tr(C*C^q^i)})}\\
        & \leq \sum\limits_{i=0}^{e} m(sq^i+k) + \left( \frac{m-1}{2} -e \right)mk^2 & \text{(by the Riemann--Roch theorem)}\\
        & \leq \frac{m}{2}\left(2k(e+1)+2s\left(\dfrac{q^{e+1}-1}{q-1}\right)+k^2(m-1)-2ek^2  \right) &\\
        & \leq \binom{mk+1}{2} -  \dfrac{m}{2}\left(k(k-1)(2e+1)-2s\left(\dfrac{q^{e+1}-1}{q-1}\right)\right).&
    \end{align*}
To get the best bound, we maximize the expression $$ \dfrac{m}{2}\left(k(k-1)(2e+1)-2s\left(\dfrac{q^{e+1}-1}{q-1}\right)\right)$$ with respect to $e$. Removing the constant parts, this is equivalent to find the maximum of the function
$$T(e) = ek^2-s\dfrac{q^{e+1}}{q-1}$$
over $\set{0,\dots,\lfloor \frac{m}{2} \rfloor}$ in the discrete domain of non-negative integers.  
We compute the discrete derivative:
\begin{align*}
    \Delta T(e) = T(e+1)-T(e) &= (e+1)k^2-s\dfrac{q^{e+2}}{q-1} - \left(ek^2-s\dfrac{q^{e+1}}{q-1}\right) \\
                              &= k^2 - sq^{e+1}.
\end{align*}
This function is decreasing with $e$, and the smallest value for which $\Delta T(e) \leq 0$ corresponds to its maximum. It is the smallest value of $e$ such that $k^2 \leq sq^{e+1}$, \emph{i.e.}
$$e =  \left\lfloor \log_q\left(\dfrac{k^2}{s}\right)\right\rfloor.$$
\end{proof}

\subsubsection{Sharpness of the bound} \label{section:sharpness}

Definition \ref{def:Goppa--like_AG_code} of a Goppa--like AG code $\Gamma(\calP,D,g) := \calC^{\perp}|_{\fq}$ requires very few hypotheses. Besides the conditions on the supports of $D$ and $(g)$, which guarantee that the code is well--defined, we only ask for the function $g$ not to belong to the Riemann--Roch space $\calL(D)$. This hypothesis is enforced to make sure that the dimension of $\Tr{\calC}^{\star 2}$ is not abnormally small compared to the expected value given in Corollary \ref{coro:first_bound_square_of_trace}, and thus to make Goppa--like AG codes resistant to a distinguisher based on the square of their dual.

First, if the function $g$ lied in $\calL(D)$ (or more generally if the vector of the evaluations $(g(P))_{P \in \calP}$ belonged to $C_\calL(\calX,\calP,D)$), then the code $\calC= \calC_{\calL}(\calX,\calP,D+(g))$ would contain the evaluation of the constant function $1=\frac{g}{g}$, \textit{i.e.} the unit vector $(1,\dots,1)$. In this case, the vector $(1,\dots,1)$ would belong to $\calC^{q^i}$ for every $i \in \{0,\dots, \lfloor{m/2} \rfloor\}$ and each term in the sum on the right hand--side would contain a copy of $\Tr{\calC}$. This non--trivial intersection between the codes $\Tr{\calC\star \calC^{q^i}}$ would contribute with a negative term in the above bound.
Secondly, if $g$ belonged to $\calL(D)$, then $D+(g)$ would be effective. This would imply the inclusion $\calL((q^i+1)D) \subset \calL((q^{i+1}+1)D)$ for every $i \geq 0$. Therefore, in the proof of Proposition \ref{prop:bound_dim_using_inclusions}, when bounding from above the dimension of the sum by the sum of the dimensions of the trace codes, we would have no chance to get a sharp bound.

Unfortunately, the condition $g \notin \calL(D)$ does not guarantee that the bound given in Proposition \ref{prop:bound_dim_using_inclusions} is reached. In the following proposition, we detail one situation in which we cannot hope for equality.

\begin{proposition}\label{prop:non-eq}
	Using the same notation as above, set $\calC_1=C_{\calL}\left(\calX,\mathcal{P},\left[ \frac{D+(g)}{q} \right]\right)$ (see Equation \eqref{eq:G/q} for the definition of $\left[ \frac{G}{q} \right]$, given $G$). If $\dim \calC_1 \geq 1$, then the bound given in Proposition \ref{prop:bound_dim_using_inclusions} is not reached.
\end{proposition}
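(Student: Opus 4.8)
The plan is to show that the very first inequality in the proof of Proposition~\ref{prop:bound_dim_using_inclusions}, where the dimension of the sum $\sum_{i=0}^{\lfloor m/2\rfloor}\Tr{\calC\star\calC^{q^i}}$ from \eqref{eq:key_equation_Goppa--like} is bounded above by the sum of the dimensions of its summands, is in fact \emph{strict} under the hypothesis $\dim\calC_1\geq 1$; since the remaining steps of that proof bound the sum of the dimensions from above by the stated expression, the bound itself cannot then be attained. This is precisely the overlap phenomenon flagged in the discussion preceding the proposition: non-trivial sections of $\left[\tfrac{G}{q}\right]$ force two of the codes $\Tr{\calC\star\calC^{q^i}}$ to intersect non-trivially. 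Throughout set $G:=D+(g)$ and $G_1:=\left[\tfrac{G}{q}\right]$, recall from Remark~\ref{rk:G/q} that $G\geq G_1$ and $G\geq qG_1$, and assume $m\geq 2$ (the case of interest), so that the indices $i=0$ and $i=1$ both occur in $\{0,\dots,\lfloor m/2\rfloor\}$.

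The key point is that both $\calC_1$ and the code $\calC_1^{q}$ obtained from $\calC_1$ by raising all coordinates to the power $q$ (as in the notation $\calC^{q^i}$) are subcodes of $\calC$. Indeed, $G\geq G_1$ gives $\calL(G_1)\subseteq\calL(G)$, hence $\calC_1\subseteq\calC$; and for $h\in\calL(G_1)$ we have $(h^q)+G=q\bigl((h)+G_1\bigr)+\bigl(G-qG_1\bigr)\geq 0$, so $h^q\in\calL(G)$ and the vector $\bigl(h(P)^q\bigr)_{P\in\calP}=\bigl(h^q(P)\bigr)_{P\in\calP}$ lies in $\calC$, i.e.\ $\calC_1^{q}\subseteq\calC$. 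Therefore
\[
\calC_1\star\calC_1^{q}\ \subseteq\ (\calC\star\calC)\cap\bigl(\calC\star\calC^{q}\bigr)\ =\ \calC^{\star2}\cap\bigl(\calC\star\calC^{q}\bigr),
\]
and applying the trace operator gives $\Tr{\calC_1\star\calC_1^{q}}\subseteq\Tr{\calC^{\star2}}\cap\Tr{\calC\star\calC^{q}}$, the two codes on the right being the summands of index $i=0$ and $i=1$ in \eqref{eq:key_equation_Goppa--like}.

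It remains to check that $\Tr{\calC_1\star\calC_1^{q}}\neq\{0\}$. Since $\dim_{\fqm}\calC_1\geq1$, the code $\calC_1$ is non-zero, so the projection of $\calC_1$ onto some coordinate $P^*\in\calP$ is a non-zero $\fqm$-subspace of $\fqm$, hence equals $\fqm$; the same then holds for $\calC_1^q$ (its $P^*$-coordinates are the $q$-th powers of those of $\calC_1$), and consequently for $\calC_1\star\calC_1^q$ (given $a\in\fqm$, pick $\mathbf c\in\calC_1$ and $\mathbf d\in\calC_1^q$ with $c_{P^*}=a$ and $d_{P^*}=1$, so that $(\mathbf c\star\mathbf d)_{P^*}=a$). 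As the trace $\fqm\to\fq$ is not identically zero, some codeword of $\calC_1\star\calC_1^q$ has a $P^*$-coordinate of non-zero trace, so its image under $\Tr{\cdot}$ is non-zero. Hence $\Tr{\calC^{\star2}}$ and $\Tr{\calC\star\calC^{q}}$ share a non-zero subcode, the sum in \eqref{eq:key_equation_Goppa--like} is not direct, and therefore
\[
\dim_{\fq}\Tr{\calC}^{\star2}\ \leq\ \dim_{\fq}\sum_{i=0}^{\lfloor m/2\rfloor}\Tr{\calC\star\calC^{q^i}}\ <\ \sum_{i=0}^{\lfloor m/2\rfloor}\dim_{\fq}\Tr{\calC\star\calC^{q^i}},
\]
while the proof of Proposition~\ref{prop:bound_dim_using_inclusions} bounds the right-hand sum by the claimed expression. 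Thus the bound is not reached.

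The one delicate step is the last one. It is tempting to test non-vanishing of the intersection on a single function, i.e.\ on $\bigl(h(P)^{1+q}\bigr)_{P\in\calP}$ for one $h\in\calL(G_1)$, but $h(P)^{1+q}$ may well have zero trace at every $P$: for example when $m=2$ it is a norm, hence lies in $\fq$, and its trace vanishes in characteristic $2$. Passing to the full bilinear span $\calC_1\star\calC_1^q$ and using that a non-zero $\fqm$-linear code surjects onto at least one coordinate is what makes the trace non-vanishing argument robust, and this is the only place where the hypothesis $\dim\calC_1\geq1$ is really used.
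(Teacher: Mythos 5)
Your proof is correct and follows essentially the same route as the paper: both exploit $G\geq G_1$ and $G\geq qG_1$ to place a code built from $\calC_1$ (you use $\calC_1\star\calC_1^{q}$, the paper uses $\calC\star\calC_1^{q^i}$ for consecutive pairs of indices) inside two summands of \eqref{eq:key_equation_Goppa--like}, so that the first inequality in the proof of Proposition \ref{prop:bound_dim_using_inclusions} becomes strict and the final bound cannot be attained. The only differences are minor and in your favour: you conclude with the elementary fact that $\dim(A+B)<\dim A+\dim B$ when $A\cap B\neq\{0\}$ instead of the paper's appeal to \cite[Theorem~2]{T19}, and you explicitly verify that the trace of the common subcode is non-zero, a point the paper leaves implicit.
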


\begin{proof}
Any non--zero codeword $\mathbf{c} \in  \calC_1 \subset \calC$ satisfies $\mathbf{c}^q \in \calC$. As $\mathbf{c}$ lies in $\fqm^n$, we have $\mathbf{c}^{q^m}=\mathbf{c} \in \calC^{q^{m-1}}$. Therefore,  we have $\calC_1 \subseteq \calC^{q^i} \cap \calC^{q^{m-1}}$, and for every $i \in \set{1,\dots,\lfloor \frac{m}{2}\rfloor}$, we have $\calC_1^{q^i} \subseteq \calC^{q^i} \cap \calC^{q^{i-1}}$. Then
\[\Tr{\calC\star\calC_1^{q^i}} \subseteq \Tr{\calC\star\calC^{q^i}} \cap \Tr{\calC\star\calC^{q^{i-1}}}.\]
As a result, each pair of consecutive terms in the sum $\sum\limits_{i=0}^{\lfloor m/2 \rfloor} \Tr{\calC \star \calC^{q^i}}$ has non--trivial intersection. However, using \cite[Theorem~2]{T19}, equality with the upper bound only occurs if
\[\bigcap_{0\leq j \leq \lfloor m/2\rfloor} \left( \sum_{\substack{0\leq i \leq \lfloor m/2 \rfloor\\ i\neq j}} \Tr{\calC \star \calC^{q^i}} \right)= \varnothing.\]
\end{proof}

\begin{remark}
	As noted in Remark \ref{rmk:improvements}, when picking the function $g$ at random, the code $\calC_1$ is likely to be reduced to zero.
\end{remark}

As recalled in Section \ref{subsec:def-Goppa}, Goppa--like AG codes coincide with Cartier code as soon as $\deg \left[ \frac{D+(g)}{q} \right] > 2 \mathfrak{g}-2$. In this case, the code $\calC_1$ has dimension at least $\mathfrak{g}$. This means that when the Goppa--like code is also a Cartier code, the dimension of the square of its dual is very unlikely to meet the bound given in Proposition \ref{prop:bound_dim_using_inclusions}.

\begin{remark}
	When considering Goppa codes for the McEliece cryptosystem, it is common to ask for the polynomial $g$ to have only simple roots. In this case, we have $ \left[ \frac{(r-1)P_\infty+(g)}{q} \right] = -P_\infty$ and the code $\calC_1$ defined in Proposition \ref{prop:non-eq} is always zero. The situation above thus never occurs.
\end{remark}

\section{One--point Goppa--like AG code on $C_{a,b}$-curves}\label{sec:AG-C_a,b}
The bound given in section \ref{section:1st_improvement} can be improved by considering more structured codes, \emph{i.e.} one--point Goppa--like AG codes on $C_{a,b}$ curves.
\subsection{Definition}

Below, we define a specific class of Goppa--like AG codes on a $C_{a,b}$ curve, associated to a divisor which is equivalent to the one--point divisor $sP_\infty$. 
\noindent Throughout the rest of the paper, we fix a $C_{a,b}$ curve $\calX_{a,b}$ as defined in Definition \ref{def:C_ab_curves}.
\begin{definition} \label{def:one--point_Goppa--like_AG_codes_on_C_a,b_curves}
Let $s'>s$ be two integers such that there exists a function $g \in \calL(s'P_\infty)$ with $\degab{g}=s'$. Given a set of points  $\calP \subset \calX_{a,b}(\F_{q^m})$ such that $\calP \cap \Supp(g) = \varnothing$, we define the one--point Goppa--like AG code associated to $\calP,s$ and $g$ as 
\[\Gamma(\calP,sP_\infty,g) := \calC_{\calL}(\calX_{a,b},\calP,(sP_\infty+(g))^{\perp})|_{\fq}.\]
\end{definition}
This definition might be restrictive, but it is reasonable as these codes can be encoded quickly thanks to the nice basis of $\calL(sP_\infty)$ (see Equation \eqref{eq:basis_L(sP_inf)}), which is desirable if we aim to build a McEliece cryptosystem based on this family of codes. Moreover, this property will be key in the upcoming sections as it allows a better understanding of the square of the dual of any one--point Goppa--like AG code, under some condition on $s$ and $s'$. \\

\noindent In the next sections, we generalize the result given in \cite{MT21} in the case of classical Goppa codes, by defining a weighted Euclidean division on the ring $\calO_{P_\infty}$ (see Equation \eqref{eq:O_Pinf}), whose elements are seen as bivariate polynomials. 
\subsection{Weighted Euclidean division}
\noindent The following proposition generalizes the classical Euclidean division of univariate polynomials in the case of function in $\calO_{P_\infty}$ with respect to the weighted degree $\deg_{a,b}$.

\begin{proposition}\label{prop:div_grob}
Let $m$ be a positive integer and $g \in \calO_{P_\infty}$. Write $\LT{g}=x^\beta y^\alpha$ with $\alpha < a$.
For any function $f \in \calO_{P_\infty}$, we can write $f=f_1g+f_2$ with 
\[f_2 \in \calR(g):= \Span{x^u y^v \mid u \leq \beta + b-1 \text{ and } v\leq a-1 \text{ not both }  u \geq \beta \text{ and } v \geq \alpha}.\]
Moreover, we have $\degab{f_2} \leq \degab{f}$ and $\dim_{\fqm} \calR(g) = \degab{g}.$ 
\end{proposition}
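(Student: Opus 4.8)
The plan is to mimic the multivariate division algorithm with respect to the monomial order $\prec$ defined in Equation~\eqref{eq:def_mon_order}, but to track carefully which monomials can survive in the remainder. First I would reduce $f$ modulo the curve equation $f_{a,b}$ so that it is written in the normal form guaranteed by the basis description \eqref{eq:basis_L(sP_inf)}, i.e. as a polynomial all of whose monomials $x^iy^j$ satisfy $j\leq a-1$; call this normal form $\tilde f$, still with $\degab{\tilde f}=\degab f$. Now I would run the division of $\tilde f$ by $g$: repeatedly, as long as the leading term $\LT{\text{current}}=x^uy^v$ is divisible by $\LT g = x^\beta y^\alpha$ (that is $u\geq\beta$ and $v\geq\alpha$), subtract the appropriate multiple $x^{u-\beta}y^{v-\alpha}g$ and re-reduce modulo $f_{a,b}$ to restore the normal form with $y$-degree $<a$. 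Each such step strictly decreases the leading term in the well-ordering $\prec$, so the process terminates, producing $f=f_1g+f_2$ where $f_2$ is in normal form ($v\leq a-1$ for every monomial) and no monomial of $f_2$ is divisible by $\LT g$, i.e. no monomial of $f_2$ simultaneously has $u\geq\beta$ and $v\geq\alpha$.

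The next point is to bound the $x$-exponents appearing in $f_2$, so as to land inside $\calR(g)$: I need $u\leq\beta+b-1$ for every monomial $x^uy^v$ of $f_2$. This is where the re-reduction modulo $f_{a,b}$ matters. Writing the $C_{a,b}$ equation as $y^a = -\alpha_{0a}^{-1}\left(\alpha_{b0}x^b+\sum\alpha_{ij}x^iy^j\right)$, replacing a factor $y^a$ costs $b$ in the $x$-exponent but strictly drops the weighted degree (since $ai+bj<ab$ for all terms on the right, while $\degab{y^a}=ab$), and lowers the $y$-degree. So after full normalization the monomials of $f_2$ have $y$-degree $<a$, and I would argue by descent on $\prec$ (or directly from the monomial-order structure together with the divisibility obstruction $u\geq\beta,\,v\geq\alpha$) that the $x$-exponent cannot exceed $\beta+b-1$: if some monomial had $u\geq\beta+b$, then since its $y$-degree is $<a$ one still has enough room — this is the step that requires care — to extract a factor $x^b$ and rewrite it, contradicting irreducibility of the remainder, or alternatively it would have been divisible by $\LT g$. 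Making this argument airtight, i.e. showing the division actually halts inside the prescribed box $\{u\leq\beta+b-1,\ v\leq a-1\}$ rather than just inside the $\prec$-ideal complement, is the main obstacle.

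The inequality $\degab{f_2}\leq\degab f$ follows immediately since every reduction step either decreases the leading term or leaves it unchanged while only subtracting lower-or-equal weighted-degree terms; more simply, $f_2=f-f_1g$ with the division controlled so that $\LT{f_1 g}\preceq\LT f$, hence $\degab{f_2}\le\max(\degab f,\degab{f_1 g})=\degab f$.

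Finally, for the dimension count $\dim_{\fqm}\calR(g)=\degab g$, I would simply enumerate the defining monomial set: $\calR(g)$ has basis $\{x^uy^v \mid 0\leq u\leq\beta+b-1,\ 0\leq v\leq a-1\}\setminus\{x^uy^v \mid \beta\leq u\leq\beta+b-1,\ \alpha\leq v\leq a-1\}$, whose cardinality is $(\beta+b)a - b(a-\alpha)=a\beta+b\alpha=\degab{\LT g}=\degab g$. Since these monomials are linearly independent in $\calO_{P_\infty}\simeq\fqm[x,y]$ (they are part of the normal-form basis from \eqref{eq:basis_L(sP_inf)}), this count is exact, completing the proof.
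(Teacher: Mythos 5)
Your overall strategy---pass to a bivariate normal form and divide with respect to $\prec$---is the same as the paper's, and both your degree argument and your count $\dim_{\fqm}\calR(g)=(\beta+b)a-b(a-\alpha)=a\beta+b\alpha=\degab{g}$ are fine. But the step you yourself flag as ``the main obstacle'' is a genuine gap, and it cannot be closed within the procedure you describe. The reason is that the pair $\{f_{a,b},g\}$ is in general \emph{not} a Gr\"obner basis of the ideal it generates. Your division only ever reduces a monomial $x^uy^v$ when it is divisible by $y^a$ (rewriting via the curve equation) or by $\LT{g}=x^\beta y^\alpha$ (subtracting a multiple of $g$). Take $\alpha\geq 1$ and $f=x^{\beta+b}$: this is already in normal form with respect to the curve equation and is not divisible by $x^\beta y^\alpha$, so your algorithm halts immediately with $f_2=x^{\beta+b}\notin\calR(g)$. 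More generally, every monomial $x^uy^v$ with $v<\alpha$ and $u$ arbitrarily large survives your reduction. There is no way to ``extract a factor $x^b$'' using only $f_{a,b}$ and $g$; the monomials $x^{\beta+b}y^v$ with $v<\alpha$ are reducible only by a \emph{third} element of the reduced Gr\"obner basis of $(f_{a,b},g)$, with leading term $x^{\beta+b}$, arising from the $S$-polynomial of $g$ and the curve equation. That such an element must exist is forced by your own dimension count: the complement of the monomial ideal generated by $y^a$ and $x^\beta y^\alpha$ alone is infinite, whereas $\calO_{P_\infty}/(g)$ has dimension $\degab{g}$.

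This is exactly where the paper proceeds differently: it performs the division by a Gr\"obner basis of the ideal generated by $f_{a,b}$ and $g$ with respect to $\prec$, and quotes \cite[Proposition~4]{GH00} for the fact that the resulting footprint is precisely the monomial set defining $\calR(g)$, and for its cardinality. To repair your proof you must either exhibit that third Gr\"obner basis element explicitly and add it to your list of divisors, or invoke the footprint result as the paper does. Your argument for $\degab{f_2}\leq\degab{f}$ is essentially the paper's (they phrase it contrapositively: if $\degab{f_2}>\degab{f}$ then $\LT{f_2}$ would have to cancel $\LT{f_1g}$ and hence be divisible by $\LT{g}$, contradicting $f_2\in\calR(g)$), and it survives the repair unchanged.
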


\begin{proof}
  Since $f \in \calO_{P_\infty}$, we can see $f$ as a bivariate polynomial in $x$ and $y$ (see Equation \ref{eq:O_Pinf}). In the polynomial ring $\F_{q^m}[x,y]$, we perform the division of $f$ by a Gr\"obner basis of the ideal generated by the equation $f_{a,b}$ of the curve $\calX_{a,b}$ and the polynomial $g$ with respect to the monomial order $\prec$ defined in Equation \eqref{eq:def_mon_order}. The fact that $f_2$ lies in $\calR(g)$ and the result on the dimension of $\calR(g)$ both follow from \cite[Proposition 4]{GH00}.
 
 Finally, if we had $\degab{f} < \degab{f_2}$ with $f=f_1 g +f_2$, this would mean that $\LT{f_2}=-\LT{f_1 g}=\lambda x^uy^v$ for some $\lambda \in \F_{q^m}^*$ with both $u \geq \beta$ and $v \geq \alpha$, which is not possible by definition of $\calR\left(g\right)$.
\end{proof}

\begin{lemma} \label{lem:weighted_division}
Take $i \geq 1$ and $s'>s \geq 0$. Let also $g \in \calL\left(s'P_\infty\right)$ and $f \in \calL\left(\left(s'(q^i+1)-1\right)P_\infty\right)$. Then there exists $f' \in \calR\left(g^{q^i-q^{i-1}+1}\right)$ such that $\Tr{\frac{f}{g^{q^i+1}}} = \Tr{\frac{f'}{g^{q^i+1}}} $.
\end{lemma}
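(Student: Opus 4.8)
The plan is to mimic the univariate argument of Mora--Tillich by applying the weighted Euclidean division of Proposition~\ref{prop:div_grob} with a well-chosen divisor, and then to absorb powers of $g$ appearing in the quotient into the trace, using the key identity $\Tr{h^q} = \Tr{h}$.

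First I would perform the division of $f$ by $g^{q^i-q^{i-1}+1}$ as allowed by Proposition~\ref{prop:div_grob}: write
\[ f = h\,g^{q^i-q^{i-1}+1} + f', \qquad f' \in \calR\left(g^{q^i-q^{i-1}+1}\right),\]
with $\degab{f'} \leq \degab{f} \leq s'(q^i+1)-1$. The point of the exponent $q^i-q^{i-1}+1$ is a bookkeeping one: $(q^i+1) - (q^i-q^{i-1}+1) = q^{i-1}$, so that
\[ \frac{f}{g^{q^i+1}} = \frac{h}{g^{q^{i-1}}} + \frac{f'}{g^{q^i+1}}.\]
Now I would observe that $\deg h \leq \degab{f} - (q^i-q^{i-1}+1)\deg g$, and a short computation shows this forces $\frac{h}{g^{q^{i-1}}}$ to be a $q^{i-1}$-th power \emph{up to} a function whose contribution to the trace vanishes; more precisely, the aim is to write $\frac{h}{g^{q^{i-1}}} = \left(\frac{h_0}{g}\right)^{q^{i-1}}$-type expression, or to iterate the division so that at each stage one peels off a $q$-th power. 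Since $\Tr{u^{q^{i-1}}} = \Tr{u}$ for any $u \in \fqm(\calX)$ restricted to $\calP$, one can re-fold such a term back into $\Tr{\cdot}$ of something of the same shape, and an induction on $i$ (the base case $i=1$ being the division $f = h\,g + f'$ with $\frac{f}{g^{q+1}} = \frac{h}{g^q}\cdot\frac{1}{1} + \frac{f'}{g^{q+1}}$ handled directly) reduces everything to the claimed remainder $f' \in \calR\left(g^{q^i-q^{i-1}+1}\right)$.

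The main obstacle, and the step needing the most care, is controlling the quotient $h$: one must check both that its weighted degree is small enough that $\frac{h}{g^{q^{i-1}}}$ genuinely sits in a Riemann--Roch space to which the induction hypothesis (or the $q$-th power trick) applies, and that the pole order bookkeeping at $P_\infty$ is consistent — in particular that the hypotheses $f \in \calL\left((s'(q^i+1)-1)P_\infty\right)$ and $\degab{g}=s'$ (implicit from the one-point setup, cf.\ Definition~\ref{def:one--point_Goppa--like_AG_codes_on_C_a,b_curves}) are exactly what make $\degab{h}$ fall in the right range after dividing by $g^{q^i-q^{i-1}+1}$. The subtlety special to the $C_{a,b}$ setting, absent in \cite{MT21}, is that the remainder space $\calR(\cdot)$ is two-dimensional in shape and division is performed modulo the curve equation via a Gröbner basis; so one must make sure the exponent $q^i-q^{i-1}+1$ interacts correctly with $\calR\left(g^{q^i-q^{i-1}+1}\right)$ rather than some smaller or larger space, which is precisely why that particular exponent is singled out in the statement.
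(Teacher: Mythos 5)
Your opening move coincides with the paper's: divide $f$ by $g^{q^i-q^{i-1}+1}$ via Proposition~\ref{prop:div_grob} and exploit the exponent identity $(q^i+1)-(q^i-q^{i-1}+1)=q^{i-1}$ to write $\frac{f}{g^{q^i+1}}=\frac{h}{g^{q^{i-1}}}+\frac{f'}{g^{q^i+1}}$. But the disposal of the quotient term is where the content of the lemma lives, and your mechanism for it does not work: the quotient $h$ is in no way forced to be (close to) a $q^{i-1}$-th power, and there is no reason for its trace contribution to vanish. The correct move is the opposite use of Frobenius invariance. Since the evaluation points are $\fqm$-rational, $\Tr{u}=\Tr{u^{q}}$, hence
\[
\Tr{\frac{h}{g^{q^{i-1}}}}=\Tr{\frac{h^q}{g^{q^{i}}}}=\Tr{\frac{h^qg}{g^{q^{i}+1}}},
\]
which restores the \emph{original} denominator $g^{q^i+1}$ with new numerator $h^qg$. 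The decisive point is then the degree computation you defer: $\degab{h^qg}=q\degab{f}-s'(q-1)(q^i+1)$, which is $<\degab{f}$ precisely because $\degab{f}\leq s'(q^i+1)-1$. This is where the hypothesis on $f$ is consumed --- not in bounding the remainder $f'$, whose membership in $\calR\left(g^{q^i-q^{i-1}+1}\right)$ needs no degree assumption. One then re-divides $h^qg$ by $g^{q^i-q^{i-1}+1}$ and iterates, with $i$ held fixed throughout; termination follows from the strictly decreasing weighted degree of the successive numerators, and the remainders accumulate inside the vector space $\calR\left(g^{q^i-q^{i-1}+1}\right)$.

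Your proposed induction on $i$ does not match this structure, and its base case is already off: for $i=1$ the divisor must be $g^{q^1-q^0+1}=g^{q}$, not $g$, so the remainder of your base-case division lands in $\calR(g)$ rather than the required $\calR(g^{q})$, and the leftover term $\frac{h}{g^{q}}$ is not of a shape the statement covers. In short, the plan names the right division but misses the one-step Frobenius twist $h/g^{q^{i-1}}\mapsto h^qg/g^{q^i+1}$ and the degree-decreasing iteration at fixed $i$ that together constitute the proof.
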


\begin{proof}
By Proposition \ref{prop:div_grob}, we can write $f=f_1 g^{q^i-q^{i-1}+1} +f_2$ with
 $f_2 \in \calR\left(g^{q^i-q^{i-1}+1}\right)$ and $\degab{f_2} \leq \degab{f}$. Therefore
 \[\Tr{\frac{f}{g^{q^i+1}}}=\Tr{\frac{f_1 g^{q^i-q^{i-1}+1}}{g^{q^i+1}}} +\Tr{\frac{f_2}{g^{q^i+1}}}= \Tr{\frac{f_1^qg}{g^{q^i+1}}} +\Tr{\frac{f_2}{g^{q^i+1}}}. \]
 
 By definition, the second term has the expected form. Let us examine the first term. If $f_1=0$, we are done. Otherwise, the definition of $f_1$ gives
$\degab{f_1} =\degab{f} - s'(q^i-q^{i-1}+1)$, and
 \begin{align*}
 \degab{f_1^qg}  &= q \degab{f_1} + s'\\
     &= q\degab{f} - s'(q-1)(q^i+1).
 \end{align*} 
Then  $\degab{f_1^qg} < \degab{f}$ if and only if $\degab{f} < s'(q^i+1)$, which holds by definition of $f$. Repeating the division process on $f_1^qg$, as the weighted degree decreases, we can find a function $f' \in \calR\left(g^{q^i-q^{i-1}+1}\right)$ such that $\Tr{\frac{f}{g^{q^i+1}}} = \Tr{\frac{f'}{g^{q^i+1}}} $.
\end{proof}
\begin{definition} \label{def:T_i's}
For any $1 \leq i \leq \lfloor\frac{m}{2}\rfloor$, we define
$$\calT_i(s,g):= \Tr{g^{-(q^i+1)}\cdot \left( \calR\left(g^{q^i-q^{i-1}+1}\right)  \cap \calL(s(q^i+1)P_\infty)\right)}$$
and we set $$\calT_0(s,g) := \Tr{g^{-2} \cdot \calL(2sP_\infty}).$$
\end{definition}

\noindent Let $i \geq 0$ and $f \in \calL(sP_\infty) \cdot \calL(sP_\infty)^{q^i} \subseteq \calL(s(q^i+1)P_\infty)$. Lemma \ref{lem:weighted_division} entails that 
$$\Tr{\dfrac{f}{q^{i+1}}} \in \calT_i(s,g).$$
Thus, for all $i \in \set{0,\dots,\lfloor \frac{m}{2} \rfloor}$, we have \begin{equation} \label{eq:Tr(C*C^q^i)_dans_T_i}
\Tr{\calC \star \calC^{q^i}} \subseteq \calT_i(s,g),
\end{equation}
which can be used to improve the bound given in Proposition \ref{prop:bound_dim_using_inclusions}, provided that we can efficiently compute the dimension of the trace codes $\calT_i(s,g)$. This is studied in the next section.
%--------Last changes----------%
\subsection{Upper bound in Goppa--like case}
In the proposition below, we study the intersection 
\begin{equation}\label{eq:def_Mi}
M_i(s,g):=R\left(g^{q^{i}-q^{i-1}+1}\right) \cap \calL(s(q^i+1)P_\infty)
\end{equation}
for every $i \in \set{1,\dots,\lfloor m/2 \rfloor}$ in order to better grasp the trace codes $\calT_i(s,g)$'s introduced in Definition \ref{def:T_i's}.
First we set some notation: fix $i \in \set{1,\dots,\lfloor m/2 \rfloor}$ and write $\LT{g}=x^\beta y^\alpha$ with $a\beta + b\alpha=s'$. By reducing modulo the equation $f_{a,b}$ of $\calX_{a,b}$, we can write $g^{q^i-q^{i-1}+1}$ such that its leading term with respect to the monomial order $\prec$ is
\begin{equation}\label{eq:gi}
 \LT{g^{q^i-q^{i-1}+1}}=x^{\beta_i} y^{\alpha_i}
\end{equation}
 where $\alpha_i \in \set{0,\dots,a-1}$ is the remainder of the Euclidean division of $\alpha (q^i-q^{i-1}+1)$ by $a$ and 
 \begin{equation}\label{eq:value_beta_prime}
  \beta_i=\beta(q^i-q^{i-1}+1) + b \, \frac{\alpha(q^i-q^{i-1}+1)-\alpha_i}{a}= \frac{s'(q^i-q^{i-1}+1)-b\alpha_i}{a}.
 \end{equation}
Depending on the weighted degree $s'$ of $g$, we can compute the exact dimension of the vector space $M_i(s,g)$ defined in Equation \eqref{eq:def_Mi}.

\begin{proposition} \label{prop:dim_M_i's}
%Let $i \in \set{1,\dots,\lfloor m/2 \rfloor}$. Set $\alpha_i \in \set{0,\dots,a-1}$ to be the remainder of the Euclidean division of $\alpha (q^i-q^{i-1}+1)$ by $a$ and set $\beta_i = \dfrac{s'(q^i-q^{i-1}+1)-b\alpha_i}{a}$.   
    \begin{enumerate}
        \item If $s'(q^i-q^{i-1}+1) > s(q^i+1)+a$, then $M_i(s,g) = \calL(s(q^i+1)P_\infty)$;
        
        \item If $s'(q^i-q^{i-1}+1) \leq s(q^i+1)+1-2\mathfrak{g}_{a,b}$, then $M_i(s,g) = \calR(g^{q^{i}-q^{i-1}+1})$;
        
        \item If there exists $v^* \in \set{1,\dots,\alpha_i-1}$ such that
        $$ s(q^i+1)+a-b(a+v^*-\alpha_i) < s'(q^i-q^{i-1}+1) \leq  s(q^i+1)+a-b(a+v^*-1-\alpha_i),$$
        we have 
          $$\dim_{\fqm}(M_i(s,g)) = \sum\limits_{v=v^*}^{a-1} \left\lfloor \dfrac{s(q^i+1)-bv}{a} \right\rfloor + v^*(\beta_i+b) + a-v^*.$$
        
        \item Else, there exists $v^* \in \set{\alpha_i+1,\dots,a}$ such that
        $$ s(q^i+1)+a-b(v^*-\alpha_i) < s'(q^i-q^{i-1}+1) \leq  s(q^i+1)+a-b(v^*-1-\alpha_i),$$
        in which case 
        $$\dim_{\fqm}(M_i(s,g)) = \sum\limits_{v=v^*}^{a-1} \left\lfloor \dfrac{s(q^i+1)-bv}{a} \right\rfloor + v^*\beta_i + \alpha_ib +a-v^*.$$
    \end{enumerate}
\end{proposition}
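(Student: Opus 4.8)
The plan is to turn $\dim_{\fqm}M_i(s,g)$ into a count of monomials, organised by how the Riemann--Roch constraint $au+bv\le s(q^i+1)$ slices the monomial support of $\calR(g^{q^i-q^{i-1}+1})$. Set $S:=s(q^i+1)$ and $S':=s'(q^i-q^{i-1}+1)=\degab{g^{q^i-q^{i-1}+1}}=a\beta_i+b\alpha_i$. By Proposition~\ref{prop:div_grob} applied to $g^{q^i-q^{i-1}+1}$ (leading term $x^{\beta_i}y^{\alpha_i}$), the space $\calR(g^{q^i-q^{i-1}+1})$ is spanned by the monomials $x^uy^v$ with $0\le v\le a-1$, $0\le u\le\beta_i+b-1$ and not both $u\ge\beta_i$ and $v\ge\alpha_i$, while by Equation~\eqref{eq:basis_L(sP_inf)} $\calL(SP_\infty)$ is spanned by the $x^uy^v$ with $0\le v\le a-1$, $u\ge0$, $au+bv\le S$. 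Since distinct monomials of $y$-degree $\le a-1$ are $\fqm$-linearly independent in $\calO_{P_\infty}$, the intersection $M_i(s,g)$ is spanned by those monomials meeting both lists, so $\dim_{\fqm}M_i(s,g)$ is their number: counting row by row, $\dim_{\fqm}M_i(s,g)=\sum_{v=0}^{a-1}\bigl(u_{\max}(v)+1\bigr)$ where $u_{\max}(v)=\min\bigl(\Floor{(S-bv)/a},\,B_i(v)\bigr)$, with $B_i(v)=\beta_i+b-1$ for $v<\alpha_i$ and $B_i(v)=\beta_i-1$ for $v\ge\alpha_i$ (one checks $u_{\max}(v)\ge0$ in each regime considered below). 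Both $v\mapsto\Floor{(S-bv)/a}$ and $v\mapsto B_i(v)$ are nonincreasing, the latter with a single downward jump of size $b$ at $v=\alpha_i$, so everything hinges on where the first falls below the second.

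This is the source of the four cases. The largest weighted degree on the support of $\calR(g^{q^i-q^{i-1}+1})$ is that of its outer corner $x^{\beta_i+b-1}y^{\alpha_i-1}$, namely $S'+ab-a-b=S'+2\mathfrak{g}_{a,b}-1$; if $S\ge S'+2\mathfrak{g}_{a,b}-1$, that is $S'\le S+1-2\mathfrak{g}_{a,b}$, then $\Floor{(S-bv)/a}\ge B_i(v)$ for all $v$ and $M_i(s,g)=\calR(g^{q^i-q^{i-1}+1})$ --- case~(2). The hypothesis $S'>S+a$ of case~(1) forces (in particular via $\Floor{(S-b\alpha_i)/a}\le\beta_i-1$ and $\Floor{S/a}\le\beta_i+b-1$, which follow from $S<S'$) that $\Floor{(S-bv)/a}\le B_i(v)$ for every $v$, whence $M_i(s,g)=\calL(SP_\infty)$. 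Otherwise the line genuinely cuts the support; let $v^*$ be the first row with $\Floor{(S-bv)/a}<B_i(v)$, that is where the Riemann--Roch constraint takes over. Writing $\Floor{(S-bv)/a}<B_i(v)$ at $v=v^*$ and $\Floor{(S-bv)/a}\ge B_i(v)$ at $v=v^*-1$ and substituting $a\beta_i=S'-b\alpha_i$ turns ``$v^*$ is this threshold'' into exactly the displayed interval for $S'$, with case~(3) when the crossing falls on the wide part ($v^*<\alpha_i$, where $B_i=\beta_i+b-1$) and case~(4) when it falls on the narrow part ($v^*>\alpha_i$, so each row $v<\alpha_i$ contributes $\beta_i+b$ monomials and each row $\alpha_i\le v<v^*$ contributes $\beta_i$). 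Splitting $\sum_{v=0}^{a-1}\bigl(u_{\max}(v)+1\bigr)$ at $v^*$, the rows $v\ge v^*$ contributing $\Floor{(S-bv)/a}+1$ each, yields $v^*(\beta_i+b)+\sum_{v=v^*}^{a-1}\Floor{(S-bv)/a}+(a-v^*)$ for case~(3) and, after simplifying $\alpha_i(\beta_i+b)+(v^*-\alpha_i)\beta_i$, the value $v^*\beta_i+\alpha_ib+\sum_{v=v^*}^{a-1}\Floor{(S-bv)/a}+(a-v^*)$ for case~(4).

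The real work is in that last step: one must handle the floors and the strict versus non-strict inequalities carefully enough that $v^*$ is pinned down \emph{exactly} by the stated $S'$-interval, and --- the delicate point --- check that in the intermediate range one is genuinely in the clean two-piece situation, that is, that once $\Floor{(S-bv)/a}$ dips below $B_i(v)$ it stays below it for all larger $v$ in spite of the downward jump of $B_i$ at $v=\alpha_i$. That monotonicity is exactly what makes the closed-form sums valid; the positivity of each $u_{\max}(v)$ is a further routine check.
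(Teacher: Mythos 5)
Your starting point is the same as the paper's: both spaces are spanned by monomials $x^uy^v$ with $v\le a-1$, these monomials are linearly independent, so $\dim_{\fqm}M_i(s,g)$ is a row-by-row sum of minima, $\sum_{v=0}^{a-1}\bigl(\min(\ell^i_v,B_i(v))+1\bigr)$ in your notation; this is exactly the paper's Equation \eqref{eq:exact_dimension_M_i}, and your treatments of cases (1) and (2) are fine. The genuine gap is precisely the step you flag as ``the delicate point'' and then do not carry out: the claim that once $\lfloor (S-bv)/a\rfloor$ falls below $B_i(v)$ it stays below for all larger $v$. This is false in general, because $B_i$ drops by $b$ at $v=\alpha_i$ while the floor only decreases by roughly $b/a$ per row. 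Concretely, take $a=3$, $b=4$, $q=2$, $i=1$, $\LT{g}=x^3y$ (so $s'=13$, $S'=26$, $\alpha_1=2$, $\beta_1=6$) and $s=9$ (so $S=27$): at $v=1$ one has $\lfloor 23/3\rfloor=7<9=B_1(1)$, but at $v=2$ one has $\lfloor 19/3\rfloor=6>5=B_1(2)$. Hence the set of rows where the Riemann--Roch constraint is binding is not an upper interval $\set{v\ge v^*}$, and your identification of ``first crossing at $v^*$'' with the displayed $S'$-intervals breaks down; in case (4) with $\alpha_i\ge 1$ it is even inconsistent, since the hypothesis gives $S'>G(v^*)\ge G(a)=F(0)\ge F(v)$ for every $v<\alpha_i$, so each wide row contributes $\ell^i_v+1$ and not $\beta_i+b$ as you assert.

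The paper does not claim a single monotone crossing: it decides, separately for the wide rows $v<\alpha_i$ and the narrow rows $v\ge\alpha_i$, when the $\calR$-width is the smaller one, via the two decreasing thresholds $F(v)=S+a-b(a+v-\alpha_i)$ and $G(v)=S+a-b(v-\alpha_i)$, and then uses the ordering $F(\alpha_i-1)\le\cdots\le F(0)=G(a)\le\cdots\le G(\alpha_i)$ to split into the four regimes; the wide rows switch to being $\calL$-constrained \emph{before} any narrow row does, which is the opposite of the picture your two-piece formula encodes. Be aware that this is not a cosmetic issue: in the numerical example above the exact row-by-row count \eqref{eq:exact_dimension_M_i} gives $10+8+6=24$, whereas the closed form of case (3) evaluates to $25$, so the intermediate-case formulas simply cannot be obtained by the splitting you describe (the narrow rows must contribute $\beta_i$ in the case (3) regime, and the wide rows $\ell^i_v+1$ in the case (4) regime). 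You should rederive the intermediate cases directly from \eqref{eq:exact_dimension_M_i} with the thresholds $F$ and $G$, rather than from the monotone-crossing heuristic.
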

\begin{proof}

%By assumption, we have $\mathrm{LT}(g)=x^{\beta}y^{\alpha}$, with $\alpha \leq a-1$ and $\degab{g}=a\beta +b\alpha=s'$. 
%Thus, $\mathrm{LT}(g^{q^i-q^{i-1}+1})=x^{\beta (q^i-q^{i-1}+1)}y^{\alpha (q^i-q^{i-1}+1)}$. Since $\alpha (q^i-q^{i-1}+1)$ might be bigger that $a-1$, we perform the following euclidean division:
%$$\alpha (q^i-q^{i-1}+1) = \delta a + \alpha_i, \ 0 \leq \alpha_i \leq a-1.$$
%Using the equation of the curve, we can then suppose that $\mathrm{LT}(g^{q^i-q^{i-1}+1})=x^{\beta_i}y^{\alpha '}$, with 
%$$\beta_i := \beta (q^i-q^{i-1}+1) + \left\lfloor \frac{\alpha (q^i-q^{i-1}+1)-\alpha_i}{a} \right\rfloor \ \mathrm{and} \ \alpha_i \in \set{0,\dots,a-1}.$$
%Since $\degab{g^{q^i-q^{i-1}+1}} = s'(q^i-q^{i-1}+1),$ we also have 
%\begin{equation} \label{eq:value_beta_prime}
%\beta_i = \dfrac{s'(q^i-q^{i-1}+1)-b\alpha_i}{a}.
%\end{equation}
Using the notation above, we can write
\begin{align*}
\calR\left(g^{q^i-q^{i-1}+1}\right) &:= \mathrm{Span}_{\fqm} \left\{x^uy^v \mid u \leq \beta_i+b-1 , v \leq a-1 \ \mathrm{not \ both} \ u \geq \beta_i \ \mathrm{and} \ v \geq \alpha_i\right\} \\
&= \mathrm{Span}    \left( \begin{array}{c}
         1,x,\dots,x^{\beta_i +b-1},   \\
         \cdots \\
         y^{\alpha_i -1},y^{\alpha_i -1}x,\dots,y^{\alpha_i -1}x^{\beta_i +b-1}, \\
          y^{\alpha_i},y^{\alpha_i}x,\dots,y^{\alpha_i}x^{\beta_i-1}, \\
         \cdots \\
         y^{a-1},y^{a-1}x,\dots,y^{a-1}x^{\beta_i-1}
    \end{array}
    \right)
\end{align*}
Next, we define for any $v \in \set{0,\dots,a-1}$:
$$\ell^i_v := \max \set{u \geq 0 \mid x^uy^v \in \calL(s(q^i+1)P_\infty)} = \left\lfloor \dfrac{s(q^i+1)-bv}{a}\right\rfloor,$$
implying
\begin{equation*}
\calL(s(q^i+1)P_\infty) = \mathrm{Span}    \left( \begin{array}{c}
         1,x,\dots,x^{\ell^i_0},   \\
         y,yx,\dots,yx^{\ell^i_1}, \\
         \cdots \\
         y^{a-1},y^{a-1}x,\dots,y^{a-1}x^{\ell^i_{a-1}}
    \end{array}
    \right).
\end{equation*}
We thus have a description of a basis of both spaces $\calR(g^{q^i-q^{i-1}+1})$ and $\calL(s(q^i+1)P_\infty)$, leading to an exact formula for the dimension of their intersection $M_i(s,g)$ for any value of $i$:
\begin{equation} \label{eq:exact_dimension_M_i}
\dim_{\fqm} M_i(s,g) = \sum\limits_{v=0}^{\alpha_i-1} \min(\beta_i+b,\ell_v^i+1) + \sum\limits_{v=\alpha_i}^{a-1} \min(\beta_i,\ell_v^i+1).
\end{equation}
It remains to compute the corresponding minima with respect to $v$:
\begin{itemize}
    \item[(i)] If $0 \leq v \leq \alpha_i-1$, by using \eqref{eq:value_beta_prime}, we get
    \begin{align*}
        \beta_i+b \leq \ell_v^i +1 \iff s'(q^i-q^{i-1}+1) \leq F(v) := s(q^i+1)+a-b(a+v-\alpha_i).
    \end{align*}
    \item[(ii)] Otherwise, $\alpha_i \leq v \leq a-1$ and
    \begin{align*}
        \beta_i \leq \ell_v^i +1 \iff s'(q^i-q^{i-1}+1) \leq G(v) := s(q^i+1)+a-b(v-\alpha_i).
    \end{align*}
\end{itemize}
Note that both $F$ and $G$ are decreasing with $v$, and we easily check that $F(0) = G(a)$. Thus, we have the following sequence of integers
$$F(\alpha_i-1) \leq \dots \leq F(0) = G(a) \leq G(a-1) \leq \dots \leq G(\alpha_i).$$
Depending on the value of $s'$, there is a few cases  to consider:
\begin{itemize}
    \item $s'(q^i-q^{i-1}+1) >G(\alpha_i)$, in which case $M_i(s,g) = \calL(s(q^i+1)P_\infty)$;
    \item $s'(q^i-q^{i-1}+1) \leq F(\alpha_i-1)$, and $M_i(s,g) = \calR(g^{q^i-q^{i-1}+1})$;
    \item There exists $v^* \in \set{1,...,\alpha_i-1}$ such that $F(v^*) < s'(q^i-q^{i-1}+1) \leq F(v^*-1)$;
    \item There exists $v^* \in \set{\alpha_i,...,a}$ such that $G(v^*) < s'(q^i-q^{i-1}+1) \leq G(v^*-1)$.
\end{itemize}
The formulas for the dimension of $M_i(s,g)$ follows from the above computations and \eqref{eq:exact_dimension_M_i}.
\end{proof}

Note that (1) corresponds to the case where $\calT_i(s,g) = \Tr{g^{-(q^i+1)}\cdot \calL(s(q^i+1)P_\infty)}, $ which will produce the same bound as the one given in Proposition \ref{prop:bound_dim_using_inclusions}.
Instead, we focus on (2), since in this case, we can show some inclusion relations between the $\calT_i(s,g)$'s.

\begin{proposition} \label{prop:inclusion_T_i's} Let $i^* \in \set{0,\dots,\lfloor\frac{m}{2}\rfloor-1}$ be the smallest integer such that 
\begin{equation} \label{eq:cond_s_s'_i_star}
sq^{i^*} \geq (s'-s)(q^{i^*+1}-q^{i^{*}}+1)+2\mathfrak{g}_{a,b}-1.
\end{equation} 
Then
$$\calT_{i^*}(s,g) \subseteq \calT_{i^*+1}(s,g) \subseteq \dots \subseteq \calT_{\lfloor \frac{m}{2}\rfloor}(s,g).$$
\end{proposition}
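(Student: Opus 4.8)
The plan is to prove the chain of inclusions by establishing a single generic inclusion $\calT_i(s,g) \subseteq \calT_{i+1}(s,g)$ for every $i \geq i^*$, and then chaining them. The key observation is that, by definition, $\calT_i(s,g)$ consists of traces of functions of the form $\frac{h}{g^{q^i+1}}$ with $h \in \calR(g^{q^i-q^{i-1}+1}) \cap \calL(s(q^i+1)P_\infty)$ (with the special case $i=0$ handled separately). To pass from level $i$ to level $i+1$, I would multiply numerator and denominator by $g^{q^{i+1}-q^i} = (g^{q^i})^{q-1}$, writing
\[
\frac{h}{g^{q^i+1}} = \frac{h \cdot g^{q^{i+1}-q^i}}{g^{q^{i+1}+1}},
\]
so that a generator of $\calT_i(s,g)$ becomes $\Tr{\frac{h \cdot g^{q^{i+1}-q^i}}{g^{q^{i+1}+1}}}$. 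To conclude that this lies in $\calT_{i+1}(s,g)$, I must check two things about the new numerator $h' := h \cdot g^{q^{i+1}-q^i}$: first that its weighted degree is small enough, namely $h' \in \calL(s(q^{i+1}+1)P_\infty)$, and second — after applying the weighted Euclidean division of Lemma~\ref{lem:weighted_division} — that the reduced form lies in $\calR(g^{q^{i+1}-q^i+1})$, which is exactly what Lemma~\ref{lem:weighted_division} guarantees provided $\degab{h'} \leq s'(q^{i+1}+1)-1$.

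The degree bookkeeping is where condition~\eqref{eq:cond_s_s'_i_star} enters. On the one hand, since $h \in \calL(s(q^i+1)P_\infty)$ we have $\degab{h} \leq s(q^i+1)$, and $\degab{g^{q^{i+1}-q^i}} = s'(q^{i+1}-q^i)$ (using $\degab{g}=s'$), so
\[
\degab{h'} \leq s(q^i+1) + s'(q^{i+1}-q^i).
\]
For $h'$ to sit in $\calL(s(q^{i+1}+1)P_\infty)$ I would want this to be at most $s(q^{i+1}+1)$, i.e. $s'(q^{i+1}-q^i) \leq s(q^{i+1}-q^i)$, which fails since $s'>s$ — so in fact $h'$ is \emph{not} in general in $\calL(s(q^{i+1}+1)P_\infty)$, and one must instead reduce first. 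The correct route is: apply Proposition~\ref{prop:div_grob} (weighted division by $g^{q^{i+1}-q^i+1}$) to $h'$, obtaining $h' = h'_1 g^{q^{i+1}-q^i+1} + h'_2$ with $h'_2 \in \calR(g^{q^{i+1}-q^i+1})$ and $\degab{h'_2}\leq \degab{h'}$; the first term contributes $\Tr{\frac{h'_1 g^{q^{i+1}-q^i+1}}{g^{q^{i+1}+1}}} = \Tr{(h'_1)^q g \cdot g^{-(q^{i+1})}\cdot g^{q^{i}}}$, hmm — more cleanly, iterate exactly as in the proof of Lemma~\ref{lem:weighted_division}, and the process terminates provided $\degab{h'} < s'(q^{i+1}+1)$. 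So the real inequality to verify is
\[
s(q^i+1) + s'(q^{i+1}-q^i) < s'(q^{i+1}+1),
\]
equivalently $s(q^i+1) < s'(q^i+1)$, which is automatic. But then one also needs $h'_2 \in \calL(s(q^{i+1}+1)P_\infty)$: the reduction only decreases weighted degree at the level of the whole expression $\frac{h'}{g^{q^{i+1}+1}}$, so after reduction the numerator $f'$ satisfies $\degab{f'} \leq$ something controlled, but to land in $M_{i+1}(s,g)$ we specifically need $\degab{f'}\leq s(q^{i+1}+1)$, and this is precisely where $sq^{i^*}\geq (s'-s)(q^{i^*+1}-q^{i^*}+1)+2\mathfrak g_{a,b}-1$ must be used — it is the translate, for the divisor $sP_\infty$, of being in case (2) of Proposition~\ref{prop:dim_M_i's} (the condition $s'(q^i-q^{i-1}+1)\leq s(q^i+1)+1-2\mathfrak g_{a,b}$ rearranges to $(s'-s)q^{i-1}(q-1) + s' \leq sq^{i-1}(q-1)+ \ldots$, matching~\eqref{eq:cond_s_s'_i_star} after shifting $i\mapsto i^*+1$), so in that regime $M_i(s,g) = \calR(g^{q^i-q^{i-1}+1})$ and the trace space $\calT_i(s,g)$ is "as large as possible", which is what makes the nesting go the right way.

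Concretely, I would therefore proceed as follows. Step 1: show that condition~\eqref{eq:cond_s_s'_i_star} is monotone, i.e. if it holds for $i^*$ it holds for all $i \geq i^*$ (straightforward, since the left side grows like $q^i$ and faster than the right side as $s'>s$ would need care — actually one checks the discrete difference is eventually positive); combined with case (2) of Proposition~\ref{prop:dim_M_i's}, this gives $M_i(s,g) = \calR(g^{q^i-q^{i-1}+1})$ for all $i \geq i^*+1$, hence $\calT_i(s,g) = \Tr{g^{-(q^i+1)}\calR(g^{q^i-q^{i-1}+1})}$ there (and a matching statement at $i=i^*$ if $i^*\geq 1$, or the $i=0$ special case). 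Step 2: fix $i \geq i^*$ and take an arbitrary generator $\Tr{\frac{h}{g^{q^i+1}}}$ of $\calT_i(s,g)$ with $h \in \calR(g^{q^i-q^{i-1}+1})$ (of weighted degree $\leq s'(q^i-q^{i-1}+1)$ by the dimension/support description, and with the constraint from $M_i$). Multiply through by $g^{q^{i+1}-q^i}$ and run the weighted division argument of Lemma~\ref{lem:weighted_division} to rewrite the trace as $\Tr{\frac{f'}{g^{q^{i+1}+1}}}$ with $f' \in \calR(g^{q^{i+1}-q^i+1})$; the termination of the division rests on $\degab{h\cdot g^{q^{i+1}-q^i}} < s'(q^{i+1}+1)$, which follows from $\degab{h}\leq s'(q^i-q^{i-1}+1)$ and arithmetic. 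Step 3: verify $f' \in \calL(s(q^{i+1}+1)P_\infty)$, i.e. $f' \in M_{i+1}(s,g)$, so that $\Tr{\frac{f'}{g^{q^{i+1}+1}}} \in \calT_{i+1}(s,g)$; here I use that $\degab{f'}$ is bounded by the weighted degree reached during reduction, together with the case-(2) regime from Step 1 which ensures $\calR(g^{q^{i+1}-q^i+1}) \subseteq \calL(s(q^{i+1}+1)P_\infty)$ entirely, so membership is automatic once $f' \in \calR(g^{q^{i+1}-q^i+1})$. Chaining Steps 2--3 over $i = i^*, i^*+1, \dots, \lfloor m/2\rfloor - 1$ yields the desired chain of inclusions.

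The main obstacle I anticipate is the degree tracking in Step 2--3: one has to be scrupulous about what "$\degab{f_2}\leq \degab{f}$" in Proposition~\ref{prop:div_grob} does and does not give after the multiply-and-reduce step — in particular the multiplication by $g^{q^{i+1}-q^i}$ increases the weighted degree, so the reduction must genuinely bring it back down, and the quantitative slack that makes this work is exactly~\eqref{eq:cond_s_s'_i_star}. A secondary subtlety is the boundary case $i^* = 0$, where $\calT_0(s,g)$ is defined by the plain Riemann--Roch space $\calL(2sP_\infty)$ rather than via an $\calR(\cdot)$; there I would treat the inclusion $\calT_0(s,g)\subseteq \calT_1(s,g)$ separately, checking directly that every $\Tr{\frac{h}{g^2}}$ with $h \in \calL(2sP_\infty)$ can be rewritten, after multiplying by $g^{q-1}$ and reducing modulo $g^{q+1}$, as an element of $\calT_1(s,g)$ — again relying on \eqref{eq:cond_s_s'_i_star} for $i^*=0$ to place us in case~(2) at level $1$.
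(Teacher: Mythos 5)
Your proposal is correct and follows essentially the same route as the paper's proof: multiply the numerator and denominator by $g^{q^{i+1}-q^i}$, reduce via the weighted division of Proposition~\ref{prop:div_grob}, iterate until the quotient vanishes (termination coming from $\degab{f}<s'(q^i+1)$), and invoke case (2) of Proposition~\ref{prop:dim_M_i's} together with the monotonicity of condition~\eqref{eq:cond_s_s'_i_star} in $i$ to get $\calR(g^{q^{i+1}-q^i+1})=M_{i+1}(s,g)$, with $i^*=0$ treated separately. The only quibble is your parenthetical bound $\degab{h}\leq s'(q^i-q^{i-1}+1)$ for $h\in\calR(g^{q^i-q^{i-1}+1})$, which can be off by up to $2\mathfrak{g}_{a,b}-2$; but your main computation uses the correct bound $\degab{h}\leq s(q^i+1)$ coming from $h\in\calL(s(q^i+1)P_\infty)$, so nothing breaks.
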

\begin{proof}
From Proposition \ref{prop:dim_M_i's} (2), we know that \eqref{eq:cond_s_s'_i_star} implies $$M_{i^*+1}(s,g) = R(g^{q^{i^*+1}-q^{i^*+1-1}+1}).$$ 
Since the function $$i \mapsto \dfrac{s(q^i+1)+1-2\mathfrak{g}_{a,b}}{q^i-q^{i-1}+1}$$ 
is increasing with $i$, we also have
\begin{equation} \label{eq:structure_M_i}
M_{i}(s,g) = R(g^{q^{i}-q^{i-1}+1}), \ \forall \ i \in \set{i^*,\dots,\left\lfloor \frac{m}{2}\right\rfloor+1}.
\end{equation} 
We now prove the inclusions between the $\calT_i's$, assuming first that $i^* \neq 0$ (since the definition of $\calT_0$ is a bit different). Let $i \in \set{i^*,\dots,\lfloor \frac{m}{2}-1 \rfloor}$, and recall that 
$$\calT_i(s,g) := \set{\Tr{\frac{f}{g^{q^i+1}}} \mid f \in \calR(g^{q^i-q^{i-1}+1})\cap \calL(s(q^i+1)P_\infty)}.$$
Given $\Tr{\frac{f}{g^{q^i+1}}}$ in $\calT_i(s,g)$, we want to show that it belongs to $\calT_{i+1}(s,g)$. Applying Proposition \ref{prop:div_grob} by replacing $f$ with $fg^{q^{i+1}-q^i}$ and $g$ by $g^{q^{i+1}-q^i+1}$, we obtain
\begin{equation} \label{eq:division_i}
fg^{q^{i+1}-q^i} = f_1g^{q^{i+1}-q^i+1} + f_2,
\end{equation}
with $f_2 \in \calR(g^{q^{i+1}-q^i+1}) = M_i(s,g)$ (using \eqref{eq:structure_M_i}) and $\degab{f_2} \leq \degab{fg^{q^{i+1}-q^i}}$. Next, we write
\begin{align*}
    \Tr{\frac{f}{g^{q^i+1}}} &= \Tr{\frac{fg^{q^{i+1}-q^i}}{g^{q^{i+1}+1}}} \\
                             &= \Tr{\frac{f_1g^{q^{i+1}-q^i+1}}{g^{q^{i+1}+1}}} + \Tr{\frac{f_2}{g^{q^{i+1}+1}}} \\
                             &= \Tr{\frac{f_1^qg}{g^{q^{i+1}+1}}} + \Tr{\frac{f_2}{g^{q^{i+1}+1}}}.
\end{align*}
By assumption, we immediately have that $\Tr{\frac{f_2}{g^{q^{i+1}+1}}} \in \calT_{i+1}(s,g).$
If $f_1=0$, we are done. Otherwise, we have from \eqref{eq:division_i}:
$$\degab{f_1} = \degab{fg^{q^{i+1}-q^i}} - \degab{g^{q^{i+1}-q^i+1}} = \degab{f}-s'.$$
Thus
\begin{align*}
     \degab{f_1^qg} < \degab{fg^{q^{i+1}-q^i}} & \iff q\degab{f} +(1-q)s' < \degab{f} +s'(q^{i+1}-q^i) \\
%                                               & \iff (q-1)\degab{f} < s'(q^{i+1}-q^i+q-1) \\
                                               & \iff \degab{f} < s'(q^{i}+1),
\end{align*}
which is true since in particular $f \in \calL(s(q^i+1)P_\infty)$ and $s<s'$. Since the weighted degree decreases, we can repeat the division process until eventually we obtain a quotient $f_1$ equal to zero, which proves that $\calT_i(s,g) \subseteq \calT_{i+1}(s,g)$.

In the case $i^*=0$, we also have to prove that $\calT_0(s,g) \subseteq \calT_1(s,g)$, which differs from the other cases due to the definition of $\calT_0$. Let $\Tr{\frac{f}{g^2}} \in \calT_0(s,g)$, for some $f \in \calL(2sP_\infty)$. Using Proposition \ref{prop:div_grob}, this time replacing $f$ with $fg^{q-1}$ and $g$ with $g^{q+1}$ yields
$$fg^{q-1} = f_1g^q + f_2,$$ with $f_2 \in \calR(g^q) = M_1(s,g)$ (using \eqref{eq:structure_M_i} again).
Thus, we can write
    $$ \Tr{\frac{f}{g^2}} = \Tr{\frac{f_1^qg}{g^{q+1}}}  + \Tr{\frac{f_2}{g^{q+1}}}, $$
with $\Tr{\frac{f_2}{g^{q+1}}} \in \calT_1(s,g)$. Since $\degab{f_1} = \degab{fg^{q-1}} - \degab{g^q} = \degab{f}-s'$, we have 
\begin{align*}
     \degab{f_1^qg} < \degab{fg^{q-1}} & \iff q\degab{f} +(1-q)s' < \degab{f} + s'(q-1)\\
                                               & \iff (q-1)\degab{f} < 2s'(q-1)\\
                                               & \iff \degab{f} < 2s',
\end{align*}
which holds since $s<s'$ and $f \in \calL(2sP_\infty)$. Repeating the division process until we found a quotient equal to zero shows that $\calT_0(s,g) \subseteq \calT_1(s,g)$. The other inclusions hold as in the case $i^* \geq 1$.
\end{proof}
Combining \eqref{eq:Tr(C*C^q^i)_dans_T_i} with both the above propositions lead to a better understanding of the dimension of the square of the dual of one--point Goppa--like AG codes. 
\begin{coro} \label{coro:folklore_upper_bound}
 With the same notation as in Proposition \ref{prop:inclusion_T_i's}, set $k:=\dim_{\fqm}C_{\calL}(\calX_{a,b},\calP,sP_\infty+(g))$.
 Then, for all $e \in \set{0,\dots,\lfloor \frac{m}{2} \rfloor}$, the dimension of $\Gamma(\calP,sP_\infty,g)^{\perp})^{\star 2}$ is bounded from above by
 
\[\dim_{\fq} \Gamma(\calP,sP_\infty,g)^{\perp})^{\star 2} \leq   \left(\frac{m-1}{2}-e\right)mk^2+\dim_{\fq}\left(\sum\limits_{i=0}^e \calT_i(s,g) \right). \]
 Moreover, if $i^* \leq e \leq \lfloor \frac{m}{2} \rfloor$, we have 
 \begin{align*}
  \dim_{\fq} \Gamma(\calP,sP_\infty,g)^{\perp})^{\star 2} \leq   & \left(\frac{m-1}{2}-e\right)mk^2 + ms'(q^e-q^{e-1}+1) \\
  & + \dim_{\fq}\left(\sum\limits_{i=0}^{i^*-1} \calT_i(s,g)\right) - \dim_{\fq} \left( \calT_e(s,g) \cap   \sum\limits_{i=0}^{i^*-1} \calT_i(s,g)\right).
 \end{align*}

\end{coro}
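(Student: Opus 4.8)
The starting point is the inclusion \eqref{eq:key_equation_Goppa--like}, which bounds $\dim_{\fq}(\Gamma(\calP,sP_\infty,g)^{\perp})^{\star 2}$ from above by $\dim_{\fq}\sum_{i=0}^{\lfloor m/2\rfloor}\Tr{\calC\star\calC^{q^i}}$. Combining this with \eqref{eq:Tr(C*C^q^i)_dans_T_i}, namely $\Tr{\calC\star\calC^{q^i}}\subseteq\calT_i(s,g)$, gives
\[\dim_{\fq}(\Gamma(\calP,sP_\infty,g)^{\perp})^{\star 2}\leq\dim_{\fq}\sum_{i=0}^{\lfloor m/2\rfloor}\calT_i(s,g).\]
To get the first displayed bound, I would split the sum at $e$: for the indices $i\leq e$ I keep the term $\dim_{\fq}\bigl(\sum_{i=0}^e\calT_i(s,g)\bigr)$ as is, and for the indices $e<i\leq\lfloor m/2\rfloor$ I bound each $\dim_{\fq}\Tr{\calC\star\calC^{q^i}}$ crudely by $mk^2$, using $\dim_{\fqm}(\calC\star\calC^{q^i})\leq k^2$ (from Lemma~\ref{lem:known_bounds}(1)) together with \eqref{eq:dim_trace}. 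There are $\lfloor m/2\rfloor-e$ such indices, and since $\lfloor m/2\rfloor\leq\frac{m-1}{2}$ when... actually one should be slightly careful: write the count of remaining indices as $\lfloor m/2\rfloor-e$ and bound it by $\bigl(\frac{m-1}{2}-e\bigr)m$-many contributions is what the statement asserts, so I would verify $\lfloor m/2\rfloor - e \le \frac{m-1}{2}-e + \tfrac12$ and absorb; more precisely the bound $\bigl(\frac{m-1}{2}-e\bigr)mk^2$ follows from the estimate $\dim_{\fq}\Tr{\calC\star\calC^{\frac{m}{2}}}\le\frac m2 k^2$ of \eqref{eq:dim_m/2} in the even case, which halves the top term exactly as in the proof of Proposition~\ref{prop:bound_dim_using_inclusions}. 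This yields the first inequality.

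For the second, finer inequality I assume $i^*\le e\le\lfloor m/2\rfloor$ and invoke Proposition~\ref{prop:inclusion_T_i's}, which gives the chain $\calT_{i^*}(s,g)\subseteq\calT_{i^*+1}(s,g)\subseteq\dots\subseteq\calT_{\lfloor m/2\rfloor}(s,g)$. Hence
\[\sum_{i=0}^e\calT_i(s,g)=\Bigl(\sum_{i=0}^{i^*-1}\calT_i(s,g)\Bigr)+\calT_e(s,g),\]
because $\calT_{i^*},\dots,\calT_{e}$ are all contained in $\calT_e(s,g)$. By the inclusion–exclusion formula for dimensions of a sum of two subspaces,
\[\dim_{\fq}\sum_{i=0}^e\calT_i(s,g)=\dim_{\fq}\calT_e(s,g)+\dim_{\fq}\Bigl(\sum_{i=0}^{i^*-1}\calT_i(s,g)\Bigr)-\dim_{\fq}\Bigl(\calT_e(s,g)\cap\sum_{i=0}^{i^*-1}\calT_i(s,g)\Bigr).\]
It remains to bound $\dim_{\fq}\calT_e(s,g)$. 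Since $e\ge i^*$, Proposition~\ref{prop:dim_M_i's}(2) (via \eqref{eq:structure_M_i}) tells us $M_e(s,g)=\calR(g^{q^e-q^{e-1}+1})$, so $\calT_e(s,g)=\Tr{g^{-(q^e+1)}\cdot\calR(g^{q^e-q^{e-1}+1})}$, a trace of a code of $\fqm$-dimension $\dim_{\fqm}\calR(g^{q^e-q^{e-1}+1})=\degab{g^{q^e-q^{e-1}+1}}=s'(q^e-q^{e-1}+1)$ by Proposition~\ref{prop:div_grob}. Applying \eqref{eq:dim_trace} gives $\dim_{\fq}\calT_e(s,g)\le m\,s'(q^e-q^{e-1}+1)$. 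Substituting into the first inequality (with the top $\bigl(\frac{m-1}{2}-e\bigr)mk^2$ term unchanged) yields exactly the claimed second bound.

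The routine verifications are the bookkeeping on the count of truncated indices and the exact dimension count of $\calR(g^{q^e-q^{e-1}+1})$, both of which are supplied by earlier results. The only genuinely substantive input is Proposition~\ref{prop:inclusion_T_i's}, which collapses the initial segment $\calT_{i^*},\dots,\calT_e$ of the sum; the main obstacle is making sure its hypothesis is exactly the stated condition on $i^*$ and that the index $e$ indeed satisfies $i^*\le e$, so that the telescoping of the $\calT_i$'s is legitimate. Everything else is elementary dimension arithmetic over $\fq$.
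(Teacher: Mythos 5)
Your proposal is correct and follows essentially the same route as the paper: split the sum from Proposition~\ref{prop:Tr_BoundSchurSquare} at index $e$, bound the tail by $\left(\frac{m-1}{2}-e\right)mk^2$ (using \eqref{eq:dim_m/2} for the top term when $m$ is even), and for $e\geq i^*$ collapse $\calT_{i^*},\dots,\calT_e$ into $\calT_e(s,g)$ via Proposition~\ref{prop:inclusion_T_i's} before applying the two-subspace dimension formula and the bound $\dim_{\fq}\calT_e(s,g)\leq m\dim_{\fqm}\calR\left(g^{q^e-q^{e-1}+1}\right)=ms'(q^e-q^{e-1}+1)$. The bookkeeping points you flag (the count of truncated indices and the dimension of $\calR$) are resolved exactly as you suggest.
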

%
%\begin{coro} \label{coro:folklore_upper_bound} \mathieu{bon je suis pas fan de ce Ã  quoi Ã§a ressemble :(} \\
%With notation as in Proposition \ref{prop:inclusion_T_i's} and $k:=\dim_{\fqm}C_{\calL}(\calX_{a,b},\calP,sP_\infty+(g))$, the dimension of $\Gamma(\calP,sP_\infty,g)^{\perp})^{\star 2}$ is bounded from above by
%\begin{align*}
%\min  \left\{ \begin{array}{ll}
%\left(\frac{m-1}{2}-e\right)mk^2+\dim_{\fq}\left(\sum\limits_{i=0}^e \calT_i(s,g) \right), \ \mathrm{if} \ e \geq i^* \\
%\left(\frac{m-1}{2}-e\right)mk^2 + ms'(q^e-q^{e-1}+1) + \dim_{\fq}\left(\sum\limits_{i=0}^{i^*-1} \calT_i(s,g) - \dim_{\fq} \calT_e(s,g) \cap   \sum\limits_{i=0}^{i^*-1} \calT_i(s,g)\right)\  \mathrm{overwise},
%\end{array} 
%\right.
%\end{align*}
%where the minimum is taken over all $e \in \set{0,\dots,\lfloor \frac{m}{2} \rfloor}$.
%\end{coro}
\begin{proof}
From \eqref{eq:key_equation} and under the assumption on $s$ and $s'$, we have
        \begin{align*}
        \dim_{\fq} (\Gamma_s(\calP,g)^{\perp})^{\star 2}
        & \leq \sum\limits_{i=0}^{\lfloor m/2 \rfloor} \dim_{\fq} \Tr{\calC \star                     \calC^{q^i}} \\
        & \leq \dim_{\fq} \sum\limits_{i=0}^{e}\calT_i(s,g) + \sum\limits_{i=e+1}^{\lfloor m/2 \rfloor} \Tr{\calC \star \calC^{q^i}} \\
                & \leq \dim_{\fq} \sum\limits_{i=0}^{e}\calT_i(s,g) + \left( \frac{m-1}{2} -e \right)mk^2,
        \end{align*}
 for all $e \in \set{0,\dots,\lfloor \frac{m}{2} \rfloor}$. If $i^* \leq e \leq \lfloor \frac{m}{2} \rfloor$,  Proposition \ref{prop:inclusion_T_i's} gives
\[\sum\limits_{i=i^*}^e \calT_i(s,g) = \calT_e(s,g),\]
which concludes the proof.
\end{proof}

Despite the fact that the upper bound given in Corollary \ref{coro:folklore_upper_bound} can be numerically computed with the knowledge of the degree $s$ and the function $g$, it is hard to give a close formula for any parameters, since the intersections of the trace codes $\calT_i(s,g)$'s are hard to manipulate. However, if we suppose that $i^*=0$, we can sharpen the above result.

\begin{thm} \label{thm:bound_with_T_i's_inclusion} 
Suppose that $s \geq (s'-s)q+2\mathfrak{g}_{a,b}-1$ and let $e^* := \min\left(\left\lfloor \frac{m}{2} \right\rfloor, \left\lceil \log_q\left(\frac{k^2}{s'(q-1)^2}\right)\right\rceil+1\right)$. Then
$$\dim_{\fq} (\Gamma(\calP,sP_\infty,g)^{\perp})^{\star 2}\leq \binom{mk+1}{2} - \dfrac{m}{2}(k^2(2e^*+1)+k-2s'(q^{e^*}-q^{e^*-1}+1)). $$
\end{thm}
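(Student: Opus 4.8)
The plan is to specialize Corollary~\ref{coro:folklore_upper_bound} to the case $i^*=0$, which is exactly what the hypothesis $s \geq (s'-s)q+2\mathfrak{g}_{a,b}-1$ guarantees via condition~\eqref{eq:cond_s_s'_i_star}. Once $i^*=0$, the sums $\sum_{i=0}^{i^*-1}\calT_i(s,g)$ in the second bound of Corollary~\ref{coro:folklore_upper_bound} are empty, so for every $e \in \set{0,\dots,\lfloor m/2\rfloor}$ we get the clean inequality
\[\dim_{\fq}(\Gamma(\calP,sP_\infty,g)^\perp)^{\star 2} \leq \left(\frac{m-1}{2}-e\right)mk^2 + m s'(q^e-q^{e-1}+1),\]
since $\dim_{\fq}\sum_{i=0}^e\calT_i(s,g) = \dim_{\fq}\calT_e(s,g) \leq m s'(q^e-q^{e-1}+1)$ (the dimension of $\calR(g^{q^e-q^{e-1}+1})$ over $\fqm$ is $\degab{g^{q^e-q^{e-1}+1}}=s'(q^e-q^{e-1}+1)$ by Proposition~\ref{prop:div_grob}, and the trace only multiplies by at most $m$). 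So the whole statement reduces to optimizing the right-hand side over $e$ and then rewriting the optimum in the stated closed form.

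Next I would treat the optimization as a discrete one-variable problem, exactly mirroring the end of the proof of Proposition~\ref{prop:bound_dim_using_inclusions}. Dropping the terms that do not depend on $e$, maximizing the subtracted quantity amounts to maximizing
\[S(e) := e m k^2 - m s'\frac{q^e-q^{e-1}+1}{1},\]
or rather its relevant part $e k^2 - s'(q^e - q^{e-1})= ek^2 - s' q^{e-1}(q-1)$. Computing the discrete derivative $\Delta S(e) = S(e+1)-S(e)$ gives $k^2 - s' q^{e}(q-1) \cdot \tfrac{q-1}{?}$ — more precisely $\Delta\bigl(e k^2 - s' q^{e-1}(q-1)\bigr) = k^2 - s' q^{e-1}(q-1)^2$, which is strictly decreasing in $e$; the maximum is attained at the smallest $e$ with $k^2 \leq s' q^{e-1}(q-1)^2$, i.e. $q^{e-1} \geq \frac{k^2}{s'(q-1)^2}$, i.e. $e-1 \geq \log_q\!\bigl(\frac{k^2}{s'(q-1)^2}\bigr)$. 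Taking ceilings and adding $1$ gives $e^* = \bigl\lceil \log_q(\tfrac{k^2}{s'(q-1)^2})\bigr\rceil + 1$, capped at $\lfloor m/2\rfloor$; this explains the definition of $e^*$ in the statement.

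Finally I would substitute $e = e^*$ back into the bound and perform the algebraic rearrangement into the form $\binom{mk+1}{2} - \frac{m}{2}\bigl(k^2(2e^*+1)+k-2s'(q^{e^*}-q^{e^*-1}+1)\bigr)$. This is the same bookkeeping as in Corollary~\ref{coro:1st_bound_mumford}: one writes $\binom{mk+1}{2} = \frac{m}{2}(k(mk+1)) = \frac{m}{2}(mk^2+k)$, notes that $\left(\frac{m-1}{2}-e^*\right)mk^2 + ms'(q^{e^*}-q^{e^*-1}+1) = \frac{m}{2}\bigl((m-1-2e^*)k^2 + 2s'(q^{e^*}-q^{e^*-1}+1)\bigr)$, and subtracts to identify the difference with $\frac{m}{2}(k^2(2e^*+1)+k-2s'(q^{e^*}-q^{e^*-1}+1))$. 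The one subtlety to check carefully is the boundary case $e^* = \lfloor m/2\rfloor$ (when the cap is active): there the term $\left(\frac{m-1}{2}-e^*\right)mk^2$ involving indices $e^*+1,\dots,\lfloor m/2\rfloor$ is simply empty, so the inequality still holds; and one should confirm $e^* \geq i^* = 0$ so that Corollary~\ref{coro:folklore_upper_bound} applies. The main obstacle is not conceptual but lies in getting the discrete-derivative computation and the ceiling/floor manipulations exactly right so that the index $e^*$ in the final closed form matches the one used in the optimization — in particular keeping track of the factor $(q-1)^2$ rather than $(q-1)$, which is what distinguishes this bound from the cruder one in Proposition~\ref{prop:bound_dim_using_inclusions}.
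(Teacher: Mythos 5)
Your proposal is correct and follows essentially the same route as the paper: reduce to Corollary~\ref{coro:folklore_upper_bound} with $i^*=0$, bound $\dim_{\fq}\calT_e(s,g)$ by $m\dim_{\fqm}\calR\left(g^{q^e-q^{e-1}+1}\right)=ms'(q^e-q^{e-1}+1)$, and optimize over $e$ via the discrete derivative $k^2-s'q^{e-1}(q-1)^2$, which yields exactly the stated $e^*$. The algebraic rearrangement into the $\binom{mk+1}{2}$ form also matches the paper's bookkeeping.
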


\begin{proof}
The condition $s \geq (s'-s)q+2\mathfrak{g}_{a,b}-1$ exactly implies that $i^*=0$ and $$\calT_0(s,g) \subseteq \calT_1(s,g) \subseteq \dots \subseteq \calT_{\lfloor \frac{m}{2}\rfloor}(s,g),$$ 
by Proposition~\ref{prop:inclusion_T_i's}. Thus, using Corollary~\ref{coro:folklore_upper_bound} and the inequality $\dim_{\fq}\calT_e(s,g) \leq m \dim_{\fqm} \calR(g^{q^e-q^{e-1}+1})$, we get
\begin{align*}
        \dim_{\fq} (\Gamma(\calP,sP_\infty,g)^{\perp})^{\star 2}
        & \leq \min \left(ms'(q^e-q^{e-1}+1) + \left( \frac{m-1}{2} -e \right)mk^2 \right)\\
        & \leq \min \left(\frac{m}{2}\left(2s'(q^e-q^{e-1}+1)+k^2(m-1)-2k^2e  \right)\right) \\
        & \leq \min\left(\binom{mk+1}{2} - \dfrac{m}{2}\left(k^2(2e+1)+k-2s'(q^e-q^{e-1}+1)\right)\right).
\end{align*}
the minimum being taking over $e \in \set{1,\dots,\lfloor \frac{m}{2} \rfloor}$. 
To get the best bound, we need to maximize the function
$$T(e) = ek^2-s'(q^e-q^{e-1}+1)$$
over $\set{1,\dots,\lfloor \frac{m}{2} \rfloor}$.
We compute the discrete derivative:
\begin{align*}
    \Delta T(e) = T(e+1)-T(e) &= (e+1)k^2- s'(q^{e+1}-q^e+1) - ek^2 + s'(q^e-q^{e-1}+1) \\
                              &= k^2 - s'q^{e-1}(q-1)^2.
\end{align*}
This function is decreasing with $e$, and the smallest value for which $\Delta T(e) \leq 0$ corresponds to its maximum. It is the smallest value of $e$ such that $k^2 \leq s'q^{e-1}(q-1)^2$, \emph{i.e.}
$$e =  \left\lceil \log_q\left(\dfrac{k^2}{s'(q-1)^2}\right)\right\rceil+1.$$
\end{proof}

Several computational experiments showed then when the code $\calC:=\calC_{\calL}(\calX_{a,b},\calP,sP_\infty+(g))$ is sufficiently random the bound given in Theorem \ref{thm:bound_with_T_i's_inclusion} is sharp, leading to a distinguisher if the parameters of $\calC$ are not well--chosen. We give a concrete example showing the sharpness of our bound.

\begin{expl}
Set $q=3$ and $m = 3$. We consider the curve $\calX$ over $\fqm = \F_{729}$ defined by $$ y^2+y = x^3+x+2.$$
This elliptic curve $\calX$ is a particular case of $C_{2,3}$ curve with genus $\mathfrak{g}=1$. Set $s'=s+1$ for $s \geq 0$, and $g \in \fqm(\calX)$ such that $g=x^{\beta}y^{\alpha} + g'$, where $a\beta+b\alpha=s+1$; and $g'$ is sampled at random in $\calL(sP_\infty)$. For each such $g$, consider $\calP_g := \calX(\fqm) \backslash \Supp(g)$. Using {\scshape{Magma}}, we then compare the true dimension of the square of the dual of $\calC_g := \Gamma(\calP_g,sP_\infty,g)$ with the upper bound given in Theorem \ref{thm:bound_with_T_i's_inclusion} for $s \in \set{4,\dots,10}$. Results can be found in \emph{Table \ref{table:expl_sharpness}}.

\begin{table}[h]
\begin{center}
\begin{tabular}{|c|c|c|c|c|c|}
    \hline
   n &$s$&$\dim_{\fq}\calC_g$ & $\dim_{\fq}(\calC_g)^{\star 2}$&$\dim_{\fq}(\calC_g^{\perp})^{\star2}$ & Upper bound in Theorem \ref{thm:bound_with_T_i's_inclusion}\\
    \hline \hline
    $781$ &$4$& $757$& $781$&$234$ & $234$ \\
    \hline 
    $783$ &$5$& $753$& $783$ &$327$ & $327$   \\
    \hline \hline
    $782$ &$6$& $746$&$782$ &$402$ & $402$  \\
    \hline
    $783$ &$7$& $741$& $783$&$483$ & $483$  \\
    \hline \hline
    $782$ &$8$& $734$& $782$&$570$ & $570$   \\
    \hline
    $782$ &$9$& $728$& $782$&$663$ & $663$ \\
    \hline
    $781$ &$10$& $721$ & $781$&$762$ & $762$ \\
    \hline
\end{tabular}
\vspace*{0.3em}
\caption{Sharpness of the bound.} \label{table:expl_sharpness}
\end{center}
\end{table}

In our computing experiments, we can check that $g$ always has simple zeros, hence $\left[ \frac{sP_\infty+(g)}{3} \right] =-P_\infty$. This example illustrates how the bound can be sharp when we are outside the scope of Proposition \ref{prop:non-eq}. 

\end{expl}

In the last section, we will discuss how to efficiently choose the parameters of a one--point Goppa--like AG code in order to resist this distinguisher.

\section{Analysis of the distinguisher}\label{sec:analysis}

In the previous section, we provided an (experimentally) sharp upper bound on the dimension of the square of the dual of a one--point Goppa--like code, which could lead to a distinguisher for the corresponding code. More precisely, let $\calC := C_\calL(\calX_{a,b},\calP,sP_\infty +(g))$ be an AG code as above, with $\degab{g} = s'>s\geq 2\mathfrak{g}_{a,b}-1$. We showed that if $s$ and $s'$ are such that $s \geq (s'-s)q+2\mathfrak{g}_{a,b}-1$, then
\begin{equation} \label{eq:best_upper_bound}
\dim_{\fq} (\Gamma(\calP,sP_\infty,g)^{\perp})^{\star 2} \leq \min \left(\frac{m}{2}\left(2s'(q^{e^*}-q^{e^*-1}+1)+k^2(m-1-2e^*)  \right),n\right),
\end{equation}
where $e^* := \left\lceil \log_q\left(\dfrac{k^2}{s'(q-1)^2}\right)\right\rceil+1$. Thus, the code is distinguishable from a random one only if the right hand--side of \eqref{eq:best_upper_bound} is smaller than the lenght $n$ of the code. It is possible to study when this case occurs, by starting to bound from above the maximal possible lenght: since $\calP \cap \Supp(g) = \varnothing,$ this maximum is reached when $\calP = \calX_{a,b}(\fqm) \backslash \Supp(g)$ and $g$ has only one zero (\emph{i.e.} $\#\Supp(g)=2$), that is
$$n = \# \calP = |\calX_{a,b}(\fqm)|-2 \leq q^m-1+2\mathfrak{g}_{a,b}\sqrt{q^m},$$
using the Hasse--Weil bound. In order to protect the code against the distinguisher, the parameters have to be chosen such that 
\begin{equation} \label{eq:cond_not_to_distinguish}
\frac{m}{2}\left(2s'(q^{e^*}-q^{e^*-1}+1)+k^2(m-1-2e^*)  \right)\geq q^m-1+2\mathfrak{g}_{a,b}\sqrt{q^m}.
\end{equation}

In what follows, we focus on two specific classes of $C_{a,b}$ curves. First, we determine the maximal (with respect to the dimension) codes we can distinguish in the case where $\calX_{a,b}$ is an elliptic curve. This case is relevant since it is the closest to the case of classical Goppa codes, and we will see that our results are very similar to the one given in \cite{MT21}. Next up, we focus on the particular case of the Hermitian curve, which also turns out to be a $\calX_{a,b}$ curve. It is well--known to be a good candidate to construct efficient codes as it is a maximal curve. Due to its high genus, we then show that any one--point Goppa--like code defined on it cannot be distinguished.

\subsection{High rate distinguishable codes in the case of elliptic curves}

Let $\calX_{a,b}$ be an elliptic curve, \emph{i.e.} $a=2$ and $b=3$. For some set of parameters which produces codes of cryptographic size, we compute the maximal distinguishable value of $s$. To get close to the case of classical Goppa codes, we also fix $s'=s+1$.
\begin{table}[h]
\begin{center}
\begin{tabular}{|c|c|c||c|c|c|c|c|c|}
    \hline
    $q$ & $m$ & $n$ & Largest distinguishable $s$ & Corresponding rate\\
    \hline \hline
     $2$ & $12$ & $4218$ & $14$ & $0,963$ \\
    \hline 
     $2$ & $13$ & $6688$ & $18$ & $0,982$  \\
    \hline \hline
     $3$ & $7$ & $2186$ & $15$ & $0,962$ \\
    \hline
     $3$ & $8$ & $6393$ & $24$ & $0,977$ \\
    \hline \hline
     $5$ & $5$ & $3043$ & $27$ & $0,961$  \\
    \hline
     $5$ & $6$ & $4500$ & $22$ & $0,971$ \\
    \hline
     $5$  & $6$ & $6688$ & $30$ & $0,976$ \\
    \hline \hline
     $7$ & $4$ & $2395$ & $27$ & $0,957$ \\
    \hline
      $7$ & $5$ & $4650$ & $26$ & $0,971$ \\
    \hline
      $7$ & $5$ & $8192$ & $37$ & $0,979$ \\
    \hline \hline
      $17$ & $3$ & $4820$ & $92$ & $0,943$ \\
    \hline
\end{tabular}
\vspace*{0.3em}
\caption{Largest distinguishable Goppa--like AG code in elliptic case.}
\end{center}
\end{table}

As it was noticed in \cite{MT21} and as we can see above, we are only able to distinguish high rate codes. The smallest distinguishable rates are roughtly the same as the one given in \cite{MT21}. 

\subsection{Codes on the Hermitian curve}

As the Hermitian curve is a particular case of $C_{a,b}$ curve, we investigate the behaviour of one--point Goppa--like AG code constructed on it with respect to our distinguisher.
In particular, we show that all these codes resist to it, since  \eqref{eq:cond_not_to_distinguish} always holds in this setting, essentially because the genus of the Hermitian curve is too high with respect to the size of the field. Let us first recall some known results about the Hermitian curve (\cite{Sti09}). 

Let $m \geq 1$  be an even integer and denote by $q_0 := q^{m/2}$, so that $\fqm = \fqo$. The Hermitian curve $\calH$ over $\fqo$ is defined by the equation
$$\calH : y^{q_0}+y = x^{q_0+1}.$$
Its genus is given by $\mathfrak{g}_{\calH} = \frac{q_0(q_0-1)}{2}$ and it is a maximal curve, \emph{i.e.} $\#\calH(\fqo) = q_0^3+1$.

\begin{proposition} \label{prop:Hermitian_Goppa_like_are_secured}
    Suppose $s \geq (s'-s)q+2\mathfrak{g}_{\calH}-1$. Then for any choice of $g$ and $\calP$, the one--point Goppa--like code $\Gamma(\calP,sP_\infty,g)$ resists the distinguisher given in Theorem \ref{thm:bound_with_T_i's_inclusion}.
\end{proposition}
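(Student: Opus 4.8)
The plan is to establish condition~\eqref{eq:cond_not_to_distinguish}, i.e.\ to show that the bound our analysis yields on $\dim_{\fq}(\Gamma(\calP,sP_\infty,g)^{\perp})^{\star2}$ never drops below the length $n$, so that the dimension of the square of the dual carries no information distinguishing it from a random code. The starting remark is that the hypothesis $s\geq (s'-s)q+2\mathfrak{g}_{\calH}-1$ is exactly condition~\eqref{eq:cond_s_s'_i_star} written at index $0$, so $i^*=0$ and Proposition~\ref{prop:inclusion_T_i's} provides the nested chain $\calT_0(s,g)\subseteq\cdots\subseteq\calT_{\lfloor m/2\rfloor}(s,g)$. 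Feeding \eqref{eq:Tr(C*C^q^i)_dans_T_i} into \eqref{eq:key_equation_Goppa--like} then gives
\[(\Gamma(\calP,sP_\infty,g)^{\perp})^{\star2}\ \subseteq\ \sum_{i=0}^{\lfloor m/2\rfloor}\calT_i(s,g)\ =\ \calT_{\lfloor m/2\rfloor}(s,g),\]
so the whole problem reduces to bounding $\dim_{\fq}\calT_{\lfloor m/2\rfloor}(s,g)$ and comparing it with $n$. It is more transparent to argue from this raw inclusion than from the optimised closed form of Theorem~\ref{thm:bound_with_T_i's_inclusion}: in the Hermitian regime the optimal index lies at or near the boundary $\lfloor m/2\rfloor$, and it is precisely the size of the single trace code $\calT_{\lfloor m/2\rfloor}(s,g)$ that governs the square here.

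Next I would estimate $\dim_{\fq}\calT_{\lfloor m/2\rfloor}(s,g)$ from above. For the Hermitian curve $m$ is even, so $\lfloor m/2\rfloor=m/2$ and $q_0=q^{m/2}$. By Definition~\ref{def:T_i's}, $\calT_{m/2}(s,g)$ is the $\fq$-trace of the evaluation at $\calP$ of a subspace of $g^{-(q^{m/2}+1)}\calR\!\bigl(g^{q^{m/2}-q^{m/2-1}+1}\bigr)$; hence by \eqref{eq:dim_trace} and the dimension count of Proposition~\ref{prop:div_grob},
\[\dim_{\fq}\calT_{m/2}(s,g)\ \leq\ m\,\dim_{\fqm}\calR\!\bigl(g^{q^{m/2}-q^{m/2-1}+1}\bigr)\ =\ m\,\degab{g^{q^{m/2}-q^{m/2-1}+1}}\ =\ m\bigl(q^{m/2}-q^{m/2-1}+1\bigr)s'.\]
Together with the trivial bound $\dim_{\fq}(\Gamma^{\perp})^{\star2}\leq n$, this leaves me to prove $m\bigl(q^{m/2}-q^{m/2-1}+1\bigr)s'\geq n$.

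Finally, I would close with the arithmetic comparison, which is where the large genus of $\calH$ does the work. Since $\calH$ is maximal, $\#\calH(\fqo)=q_0^3+1$, and since $g$ is non-constant one has $\#\Supp(g)\geq 2$ (the pole $P_\infty$ and at least one zero), so $n=\#\calP\leq q_0^3-1=(q_0-1)(q_0^2+q_0+1)$ — this is also the right-hand side of \eqref{eq:cond_not_to_distinguish} once one substitutes $q^m=q_0^2$ and $\mathfrak{g}_{\calH}=\tfrac{q_0(q_0-1)}{2}$. On the other hand, $s'>s\geq 2\mathfrak{g}_{\calH}-1$ forces $s'\geq 2\mathfrak{g}_{\calH}=q_0(q_0-1)$; using $m\geq 2$ and $q^{m/2}-q^{m/2-1}+1=\tfrac{q_0(q-1)+q}{q}$ one gets
\[m\bigl(q^{m/2}-q^{m/2-1}+1\bigr)s'\ \geq\ \frac{2q_0(q_0-1)\bigl(q_0(q-1)+q\bigr)}{q}.\]
Dividing the wanted inequality by $q_0-1>0$ and clearing the denominator, it suffices to check $2q_0\bigl(q_0(q-1)+q\bigr)\geq q\bigl(q_0^2+q_0+1\bigr)$, which rearranges to $q_0^2(q-2)+q(q_0-1)\geq 0$, evidently true for $q\geq2$ and $q_0\geq2$. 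Hence $\dim_{\fq}(\Gamma(\calP,sP_\infty,g)^{\perp})^{\star2}\leq n$, condition~\eqref{eq:cond_not_to_distinguish} holds, and the code resists the distinguisher. The only delicate step is the first one: recognising that here one must retreat from the sharp optimised bound to the crude inclusion $(\Gamma^{\perp})^{\star2}\subseteq\calT_{\lfloor m/2\rfloor}(s,g)$, after which everything is elementary and is driven solely by $\mathfrak{g}_{\calH}\sim q_0^2/2$ forcing $s'\gtrsim q_0^2$, which already inflates $\dim_{\fq}\calT_{m/2}(s,g)$ to order $q_0^3$, i.e.\ at least the code length.
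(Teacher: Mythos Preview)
Your arithmetic in the final paragraph is correct and matches the paper's second case exactly. The issue is logical, not computational.

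The proposition asks you to show that the specific bound produced by Theorem~\ref{thm:bound_with_T_i's_inclusion} exceeds $n$; that bound is taken at the optimal index $e^*$. You instead retreat (your word) to the inclusion $(\Gamma^{\perp})^{\star2}\subseteq\calT_{m/2}(s,g)$ and show that the resulting upper bound $m(q^{m/2}-q^{m/2-1}+1)s'$ is at least $n$. But this is a \emph{looser} upper bound on $\dim_{\fq}(\Gamma^{\perp})^{\star2}$ than the one Theorem~\ref{thm:bound_with_T_i's_inclusion} may give: when $e^*<m/2$, that theorem splits the sum at $e^*$, controls the low part by $\calT_{e^*}$ and the high part by $mk^2$ per term, and the result can be strictly smaller than your raw bound. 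Showing that a looser upper bound exceeds $n$ says nothing about whether the sharper distinguisher bound does. Your parenthetical ``the optimal index lies at or near the boundary'' is precisely the unproved assertion on which the whole argument hinges.

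The paper handles this by an explicit case split on $e^*$. If $e^*<m/2$ then the $k^2$ term in the Theorem~\ref{thm:bound_with_T_i's_inclusion} bound carries a positive coefficient, so the bound dominates $mk^2$; using $k\geq\mathfrak{g}_{\calH}+q$ one then checks $mk^2>q_0^3-1$ by a short chain of inequalities. If $e^*=m/2$ the bound reduces to $ms'(q_0-q_0/q+1)$, and this is your computation. So your proof is exactly the paper's second case; what is missing is either the first case or a proof that $e^*=m/2$ always holds on $\calH$ under the hypothesis (which is in fact true, but needs to be established).
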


\begin{proof}
    As discussed at the beginning of the section, the code cannot be distinguished whenever \eqref{eq:cond_not_to_distinguish} holds. In this case, we know exactly the number of rational points, and thus the length $n$ of the Goppa--like code is at most $q_0^3-1$. Since $m$ is even, we are left to prove that 
    \begin{equation} \label{eq:dont_distinguish_Hermitian_case}
    \mathfrak{B}(e^*) := ms'(q^{e^*}-q^{e^*-1}+1) + \left( \frac{m}{2}-e^*\right)mk^2 \geq q_0^3-1,
    \end{equation}
    where $k := s+1-\mathfrak{g}_{\calH}$ stands  for the dimension of the corresponding AG code.
    \begin{itemize}
        \item [-] If $e^* < \frac{m}{2}$, then $\mathfrak{B}(e^*) > mk^2$. Using the assumption on $s$ and $s'$, we know that $s \geq 2\mathfrak{g}_{\calH}+q-1$. This yields $k \geq \mathfrak{g}_{\calH}+q$ and thus
        \begin{align*}
\mathfrak{B}(e^*) - (q_0^3-1) 
&> m(\mathfrak{g}_{\calH}^2+2\mathfrak{g}_{\calH}q+q^2)-(q_0^3-1)&\\
& > \frac{m}{4}(q_0^4-2q_0^3+q_0^2) + mq(q_0^2-q_0+q) -q_0^3+1 &\\
& \geq \frac{1}{2}(q_0^4-2q_0^3+q_0^2)+4(q_0^2-q_0+2)-q_0^3+1 \quad &(m\geq 2 \ \mathrm{and} \ q\geq 2)&\\
& > \frac{q_0}{2} (q_0^3-4q_0^2+9q_0-8) > 0,&
        \end{align*}
        since $q_0 \geq 2$. Inequality \eqref{eq:dont_distinguish_Hermitian_case} holds in this case.
        \item[-] If $e^* = \frac{m}{2},$ then since $q_0=q^{m/2}$, we have $\mathfrak{B}\left(\frac{m}{2}\right) = ms'(q_0-q_0q^{-1}+1)$. Moreover, $s'>s$ implies $s' \geq 2\mathfrak{g}_{\calH}+q$, and
        \begin{align*}
           \mathfrak{B}\left(\frac{m}{2}\right) - (q_0^3-1) 
           &\geq m(2\mathfrak{g}_{\calH}+q)(q_0-q_0q^{-1}+1)-q_0^3+1 \\ 
           &\geq 2\left(q_0^2(q_0-1)\left(\frac{q-1}{q}\right)+q_0(q_0-1)+q_0(q-1)+q\right)-q_0^3+1 \\
           & \geq q_0^3\left(2\left(\frac{q-1}{q}\right)-1\right) + 2q_0^2\left(1-\left(\frac{q-1}{q}\right)\right) + 2(q_0(q-2)+q)+1.
        \end{align*}
        Clearly, the last expression is minimal for $q=2$, so we finally gets
        $$\mathfrak{B}\left(\frac{m}{2}\right) \geq q_0^2 + 5 >0,$$
        which proves \eqref{eq:dont_distinguish_Hermitian_case} in this case and conclude the proof.
    \end{itemize}
\end{proof}
Consequently, in the light of Table \ref{table:goppa-herm}, it is still reasonable to consider the Hermitian curve to build efficient SSAG code--based cryptosystem.

\bigskip

\noindent
\textbf{Acknowledgements.} %The two authors would like to express their gratitude to the anonymous referee for relevant comments that lead to this improved version of the paper.
This work was funded in part by the grant ANR-21-CE39-0009-BARRACUDA from the French National Research Agency. The first and the third authors are supported by the French government ‘‘Investissements d’Avenir" program ANR-11-LABX-0020-01. Through the LMB, the second author received support from the  EIPHI Graduate School (contract ANR-17-EURE-0002).

%\clearpage
\bibliography{biblio}
\bibliographystyle{alpha}

\end{document}